\documentclass[11pt, a4paper]{article}
\usepackage[margin=1in, centering]{geometry}
\usepackage{mathtools}
\usepackage{amsthm}
\usepackage{thmtools}
\makeatletter
\@ifundefined{newcounteralias}{}{\renewcommand\thmt@autorefsetup{\@xa\def\csname\thmt@envname autorefname\@xa\endcsname\@xa{\thmt@thmname}}}
\makeatother
\usepackage{dsfont}
\usepackage{amssymb}
\usepackage{biolinum}
\usepackage{mathpazo, tgpagella}
\DeclareMathAlphabet{\mathsf}{\encodingdefault}{\sfdefault}{m}{n}
\SetMathAlphabet{\mathsf}{bold}{\encodingdefault}{\sfdefault}{b}{n}
\usepackage[table]{xcolor}
\usepackage[dvipsnames]{xcolor}
\usepackage{subcaption}
\usepackage[normalem]{ulem}
\definecolor{HANADA}{RGB}{0, 98, 132}
\definecolor{KURENAI}{RGB}{203, 27, 69}
\usepackage[
    pdfstartview=FitH,
    pdfpagemode=UseNone,
    bookmarksdepth=2,
    colorlinks=true,
    citecolor=KURENAI,
    linkcolor=HANADA,
    backref=page,
    linktocpage=true
]{hyperref}
\usepackage[capitalise, nameinlink]{cleveref}
\usepackage[nottoc]{tocbibind}
\usepackage{appendix}
\usepackage{array}
\usepackage{braket}
\usepackage{ytableau}
\usepackage{graphicx}
\usepackage{enumitem}
\usepackage{thm-restate}
\usepackage{tikz}
\usepackage{tikz-cd}
\usetikzlibrary{calc}
\usepackage[framemethod=TikZ]{mdframed}
\usepackage{setspace}
\usepackage{etoolbox}
\usepackage{authblk}

\renewcommand\Affilfont{\small}
\makeatletter
\renewcommand\AB@affilsepx{\protect\\[0.1em]\protect\Affilfont}
\makeatother

\newtheorem{theorem}{Theorem}[section]

\newtheorem{lemma}[theorem]{Lemma}
\newtheorem{corollary}[theorem]{Corollary}
\newtheorem{claim}[theorem]{Claim}

\newtheorem{definition}[theorem]{Definition}

\crefname{appendix}{Appendix}{Appendices}
\Crefname{appendix}{Appendix}{Appendices}
\crefname{claim}{Claim}{Claims}
\Crefname{claim}{Claim}{Claims}
\crefname{fact}{Fact}{Facts}
\Crefname{fact}{Fact}{Facts}

\newtheoremstyle{obsstyle}
  {6pt}
  {6pt}
  {\normalfont}
  {}
  {\bfseries}
  {.}
  {.5em}
  {}

\theoremstyle{obsstyle}



\newcommand{\fig}[1]{\hyperref[fig:#1]{Figure~\ref*{fig:#1}}}
\newcommand{\eq}[1]{\hyperref[eq:#1]{(\ref*{eq:#1})}}
\newcommand{\lem}[1]{\hyperref[lem:#1]{Lemma~\ref*{lem:#1}}}
\newcommand{\thm}[1]{\hyperref[thm:#1]{Theorem~\ref*{thm:#1}}}
\newcommand{\defi}[1]{\hyperref[def:#1]{Definition~\ref*{def:#1}}}
\newcommand{\app}[1]{\hyperref[app:#1]{Appendix~\ref*{app:#1}}}
\newcommand{\fct}[1]{\hyperref[fact:#1]{Fact~\ref*{fact:#1}}}
\newcommand{\clr}[1]{\hyperref[clr:#1]{Corollary~\ref*{clr:#1}}}
\newcommand{\sct}[1]{\hyperref[sec:#1]{Section~\ref*{sec:#1}}}
\newcommand{\subsec}[1]{\hyperref[subsec:#1]{Subsection~\ref*{subsec:#1}}}
\newcommand{\itm}[2]{\hyperref[itm:#1]{#2}}
\newcommand{\clm}[1]{\hyperref[clm:#1]{Claim~\ref*{clm:#1}}}
\newcommand{\rmk}[1]{\hyperref[rmk:#1]{Remark~\ref*{rmk:#1}}}

\definecolor{lightcyan}{RGB}{0.88,1,1}
\definecolor{darkgreen}{RGB}{0, 128, 0}
\definecolor{darkblue}{RGB}{0, 0, 128}

\newcommand{\N}{\mathbb{N}}
\newcommand{\R}{\mathbb{R}}
\newcommand{\C}{\mathbb{C}}

\DeclareMathOperator*{\E}{\mathbb{E}}
\newcommand{\so}{\mathsf{SO}}

\newcommand{\su}{\mathsf{SU}}
\newcommand{\sualg}{\mathfrak{su}}
\renewcommand{\i}{\mathrm{i}}
\newcommand{\e}{\mathrm{e}}

\newcommand{\F}{\mathcal{F}}

\newcommand{\expect}[2]{\E_{\substack{#1}}\!\Br{#2}}
\newcommand{\prob}[2]{\underset{#1}{\mathrm{Pr}}\!\Br{#2}}

\newcommand{\br}[1]{\left(#1\right)}
\newcommand{\Br}[1]{\left[#1\right]}

\newcommand{\tr}[1]{\mathrm{Tr}\!\Br{#1}}
\newcommand{\abs}[1]{\left|#1 \right|}
\newcommand{\norm}[1]{\left\lVert #1 \right\rVert}

\newcommand{\ugroup}[1]{\mathsf{U}\!\br{#1}}

\newcommand{\tv}[1]{\norm{#1}_{\mathrm{TV}}}

\newcommand{\cprim}[1]{\textnormal{\textup{\textsf{#1}}}}

\newcommand{\pru}{\cprim{PRU}}
\newcommand{\prs}{\cprim{PRS}}

\newcommand{\prss}{\cprim{PRSS}}

\usepackage[normalem]{ulem}

\newcommand{\im}{\ensuremath{\mathrm{Im}}}

\newcommand{\btdist}[1]{\mathrm{Beta}(#1)}

\newcommand{\re}{\mathrm{Re}}

\newif\ifnotes
\notestrue

\newcommand{\stframe}[2]{\mathsf{V}_{#1,#2}}

\newcommand{\kackernel}{\mathcal{K}}

\begin{document}
\hypersetup{pageanchor=false}
\pagestyle{empty}

\title{\bf Rapid Mixing of Parallel Kac’s Walk: From Spheres to Stiefel Manifolds}
\date{}
\author[1]{Qian Chen}
\author[2]{Minglong Qin}
\author[3]{Fang Song}
\author[4,5]{Penghui Yao}
\author[4]{Mingnan Zhao}

\affil[1]{Mathematics Research Center, School of Science and Engineering, The Chinese University of Hong Kong, Shenzhen, China}
\affil[ ]{\texttt{chenqian.phys@gmail.com}}
\affil[2]{Centre for Quantum Technologies, National University of Singapore, Singapore}
\affil[ ]{\texttt{mlqin6@gmail.com}}
\affil[3]{Computer Science Department, Portland State University, USA}
\affil[ ]{\texttt{fsong@pdx.edu}}
\affil[4]{State Key Laboratory for Novel Software Technology, Nanjing University, Nanjing 210023, China}
\affil[ ]{\texttt{phyao1985@gmail.com, mingnanzh@gmail.com}}
\affil[5]{Hefei National Laboratory, Hefei 230088, China}

\makeatletter
\patchcmd{\@maketitle}{\vskip 2em}{\vskip -1.5em}{}{\PackageError{main}{Failed to adjust title spacing}{Check the definition of \string\@maketitle.}}
\makeatother

\maketitle
\thispagestyle{empty}
\setcounter{tocdepth}{2}

\allowdisplaybreaks

\vspace{-3em}
\begin{abstract}
Kac's walk is a classical local random walk
whose action on a single real
unit vector in dimension $d$ mixes in total variation in
$\Theta(d\log d)$ sequential steps~\cite{PS17}.
Lu, Qin, Song, Yao, and Zhao introduced
a parallel version of Kac's walk
that mixes a single quantum state in $O(\log d)$ rounds~\cite{LQSY+26}.
After discretizing the randomness and replacing it by suitable
pseudorandom primitives, this parallel walk gives rise to
pseudorandom state scramblers,
and was subsequently shown to yield pseudorandom unitaries~\cite{LQSY+25}.

We study what happens when the parallel Kac's walk acts
simultaneously on $k$ orthonormal quantum states.
We prove that, for any \(1\leq k < d\),
after $O\!\left((k+\log d)\log(d/\varepsilon)\right)$~steps,
the joint distribution of the \(k\) output states is
\(\varepsilon\)-close, in both Wasserstein and total variation distance,
to that obtained by applying a common Haar-random unitary
to the same inputs.
This generalizes the dispersing property of the parallel Kac's walk
from a single quantum state to multiple orthonormal quantum states.
Equivalently, viewing an ordered collection of
\(k\) orthonormal states as a point on the complex Stiefel manifold
$V_{d,k}=\{X\in\mathbb C^{d\times k}:X^\dagger X=I_k\}$,
we show that the parallel Kac’s walk mixes rapidly on \(V_{d,k}\),
with both Wasserstein and total variation mixing times bounded by
$O\!\left((k+\log d)\log(d/\varepsilon)\right)$.
This extends the Wasserstein mixing result of
Pillai, Smith, and Vaikuntanathan for the standard Kac's walk
on real Stiefel manifolds~\cite{PSV26}.
\end{abstract}

\begingroup
\setstretch{0.8}
\setlength{\parskip}{0pt}
\makeatletter
\patchcmd{\l@section}{1.0em \@plus\p@}{0.6em \@plus\p@}{}{\PackageError{main}{Failed to adjust TOC section spacing}{Check the definition of \string\l@section.}}
\makeatother
\tableofcontents
\endgroup

\newpage 

\pagenumbering{arabic}
\pagestyle{plain}
\hypersetup{pageanchor=true}
\begingroup
\setstretch{1.1}

\section{Introduction}
\label{sec:intro}
\addtocontents{toc}{\protect\setcounter{tocdepth}{1}}
Efficiently simulating Haar random states is a fundamental problem
in quantum information theory.
A substantial body of work has studied this problem
from diverse perspectives,
including \emph{quantum state $t$-designs}
\cite{AE07,KG15},
\emph{pseudorandom quantum states} (\prs s) \cite{JLS18,BS19,BS20},
and \emph{stateful quantum simulators} \cite{AMR20}.
Among these, \prs s offer a computational notion of Haar randomness.
They are efficiently preparable quantum states that appear Haar random
to any quantum polynomial-time distinguisher, even when given
polynomially many copies of the same state.
\prs s have found applications
in quantum cryptography \cite{AQY22,MY22},
quantum learning theory \cite{HBC+22},
and quantum gravity \cite{BFV20,YE25}.

Lu, Qin, Song, Yao, and Zhao generalized the notion of \prs s by introducing
\emph{pseudorandom state scramblers} (\prss s) \cite{LQSY+26}.
Whereas \prs s are prepared from a prescribed initial state
(typically the all-zeros state),
\prss s are able to produce pseudorandom states
from arbitrary pure states
using the same family of efficiently implementable unitaries.
Their construction is based on the \emph{parallel Kac's walk},
a parallel version of the well-known Kac's walk on the unit sphere
\cite{Kac56,PS17}.
For an $n$-qubit system, each step of this parallel walk samples a uniformly random perfect
matching on the $d=2^n$ computational basis states and applies an independent
Haar random $\su(2)$ rotation within the span of each matched pair.
They proved that the parallel walk approaches
Haar measure within $\varepsilon$ in \emph{Wasserstein distance}
after $O(n+\log(1/\varepsilon))$ steps.
Together with a suitable discretization
and an efficient implementation
using quantum-secure pseudorandom functions
and pseudorandom permutations,
this yields their construction of \prss s.
In a subsequent work \cite{LQSY+25},
they further showed that the same construction also yields
pseudorandom unitaries \cite{JLS18,MH25}.

Beyond these computational guarantees, the parallel Kac's walk
also exhibits rapid mixing in \emph{total variation distance}
with an $O(n+\log(1/\varepsilon))$  mixing time \cite{LQSY+26}.
This convergence yields a \emph{dispersing property} for their
information-theoretic scrambler:
with sufficient randomness,
the possible outputs from any fixed pure input
form an $\varepsilon$-net of the state space.
Such geometric coverage is not implied by the definitions of
\prs, \prss, or even \pru, which guarantee computational
indistinguishability from Haar randomness.
Moreover, this property seems absent from
all other currently known constructions of \pru s
\cite{MPSY24,MH25,science.adv8590,SMLBH25}.
For this reason, this dispersing property may be useful in applications
where computational indistinguishability alone is insufficient,
such as state and unitary synthesis.

These results make parallel Kac's walk a very useful building block
for generating quantum pseudorandomness.
Its mixing guarantees, however, have so far been
established only for a single input state.
Therefore, it is natural to ask
\begin{center}
    \textit{what happens when the parallel Kac's walk is applied to multiple quantum states simultaneously?}
\end{center}
To make this question more precise,
consider $k$ orthonormal quantum states
$\ket{\psi_1},\dots,\ket{\psi_k}\in\C^d$.
Let $U_T$ be the random unitary obtained after $T$ steps of the parallel
Kac's walk and $V$ be a Haar random unitary on $\ugroup{d}$.
We ask how large $T$ must be for the joint distributions of
\[
    \bigl(U_T\ket{\psi_1},\dots,U_T\ket{\psi_k}\bigr)
    \qquad\text{and}\qquad
    \bigl(V\ket{\psi_1},\dots,V\ket{\psi_k}\bigr)
\]
to be within $\varepsilon$ in total variation distance.
Writing the inputs as columns of a matrix identifies their evolution
with a walk on the \emph{Stiefel manifold}
$\stframe{d}{k}\coloneqq \{X\in\C^{d\times k}:X^\dagger X=I_k\}$,
the space of ordered collections of $k$ orthonormal quantum states.
Its Haar measure $\mu_{d,k}$ is the distribution of $VX$ for any fixed
$X\in\stframe{d}{k}$ and $V\sim\operatorname{Haar}(\ugroup{d})$.

Related mixing problems have been studied extensively
for the original Kac's walk,
in which each step applies a random $\so(2)$ rotation
to a uniformly chosen pair of coordinates.
Pillai and Smith established the optimal $\Theta(d\log d)$ total variation mixing time on unit sphere $\mathbb{S}^{d-1}$ \cite{PS17}.
For the walk on $\mathrm{SO}(d)$, Oliveira proved an
$O(d^2\log d)$ Wasserstein mixing bound \cite{Oliveira09},
and Pillai and Smith recently obtained
a total variation bound of the same order \cite{PS26}.
Between these two settings, Pillai, Smith, and Vaikuntanathan
studied the action of the walk on $k$ orthonormal real vectors
and proved an $O(d(k+\log d)\log(d))$ Wasserstein mixing
bound on the real Stiefel manifold \cite{PSV26},
while leaving open the problem of establishing a comparable bound
in total variation distance.
Such a bound requires further analysis, since Wasserstein convergence
does not in general imply convergence in total variation.

\subsection{Main Results}

In this work,
we show that the parallel Kac's walk jointly scrambles multiple quantum
states, with quantitative bounds in both Wasserstein and total variation
distance.
For an $n$-qubit system, the bound becomes
$O((k+n)(n+\log(1/\varepsilon)))$.
In particular, polynomially many orthonormal inputs can be
jointly scrambled in polynomially many steps of the parallel Kac's walk.

\begin{theorem}[Informal; see \cref{thm:wasserstein_mixing,thm:tv_mixing}]
\label{thm:wasserstein_informal}
\label{thm:tv_informal}
For $\varepsilon\in(0,1)$, the mixing time of the parallel Kac's walk on
$\stframe{d}{k}$ in both Wasserstein and total variation distance is at most
\(
    O\!\left((k+\log d)\log(d/\varepsilon)\right).
\)
\end{theorem}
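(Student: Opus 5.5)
The proof has two parts: a Wasserstein bound by coupling, and a total variation bound obtained by smoothing that coupling.

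\textbf{Wasserstein bound.} Run two copies of the parallel Kac's walk on $\stframe{d}{k}$, one started at a fixed $X_0$ and one at $Y_0\sim\mu_{d,k}$. Since $\mu_{d,k}$ is invariant under every step, the second copy stays Haar distributed. At each step both copies use the same random perfect matching. On each matched pair $\{i,j\}$, the current rows $X_t^{(i)},X_t^{(j)}\in\C^k$ form a $2\times k$ block, and both copies receive $\su(2)$ rotations. The rotation for the $Y$-copy is coupled to the one for the $X$-copy by an $\su(2)$ alignment chosen so that the expected squared Frobenius distance of the block contracts. The target is a one-step contraction
\[
\E\bigl[\norm{X_{t+1}-Y_{t+1}}_F^2\,\big|\,X_t,Y_t\bigr]\le \Bigl(1-\tfrac{c}{k+\log d}\Bigr)\norm{X_t-Y_t}_F^2 ,
\]
perhaps valid only off a bad event. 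For a single vector ($k=1$) the $\log d$ rate comes from \cite{LQSY+26}. The additional $k$ reflects that an $\su(2)$ rotation has only 3 real degrees of freedom, while the distance between two $2\times k$ blocks has about $4k$ real directions. Aligning the row blocks therefore removes only roughly a $1/k$ fraction of the discrepancy per step. This matches the $(k+\log d)$ factor in \cite{PSV26}, scaled down by $d$ because all $d/2$ pairs move in parallel.

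I would prove the contraction by expanding $\norm{X-Y}_F^2=2k-2\,\re\,\tr{X^\dagger Y}$ and computing the gain in $\re\,\tr{X^\dagger Y}$ summed over the matched pairs. The averaging over uniformly random matchings converts this into a spectral-gap estimate for the matrix $X^\dagger Y$. The $\log d$ term enters through a burn-in phase of $O(\log d)$ steps, during which row norms equilibrate; after it, every row has norm $O(\sqrt{k/d})$ up to logarithmic factors with high probability. Iterating the contraction for $O((k+\log d)\log(d/\varepsilon))$ steps then drives the Wasserstein distance below $\varepsilon$.

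\textbf{Total variation bound.} I would use a one-shot smoothing, or ``coupling then perfect merge,'' argument. After the Wasserstein phase the two copies are $\delta$-close with $\delta=\mathrm{poly}(\varepsilon/d)$. A final $O(\log(d/\varepsilon))$ steps then use a maximal coupling of the Haar $\su(2)$ rotations on each pair. For each pair, the total variation distance between the laws of the rotated $2\times k$ blocks is controlled by $\delta$ times the Lipschitz constant of the density of the pushforward of Haar $\su(2)$ acting on a block. Because the block has rank at most $2$, the rotation orbit is 3-dimensional, so one step cannot produce a density on $\stframe{d}{k}$. The merging must therefore be done block-wise and accumulated over enough rounds that every coordinate direction of the Stiefel manifold has been smoothed. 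The cost is then a union bound over about $d$ pairs per round times $O(\log(d/\varepsilon))$ rounds, which is absorbed into $\mathrm{poly}(d/\varepsilon)\cdot\delta\le\varepsilon$.

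\textbf{Main obstacle.} The hardest part is this TV upgrade. Blocks where a row of $X$ is nearly zero, or where the $2\times k$ block is nearly rank-deficient, make the local density degenerate and the Lipschitz constant blow up. My first attempt would be to show, using the same row-norm concentration as in the burn-in, that such degenerate blocks are rare, and to pay for them through a Beta-distribution tail bound on row norms under $\mu_{d,k}$ (the $\btdist{\cdot}$ laws). Only if that fails would I try something heavier.
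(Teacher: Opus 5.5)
\textbf{Verdict: genuine gap in the total variation part.} Your Wasserstein plan for $k\le d/2$ matches the paper's: a shared matching, per-pair $\su(2)$ alignment, and a one-step rate of order $1/(\Delta d)$ once an $O(\log d)$ burn-in makes the maximal squared row norm $\Delta=O((k+\log d)/d)$. The total variation step, however, does not work as described.

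Fix a pair $\ell$. The law of $RX_\ell$, with $R$ Haar on $\su(2)$, is carried by the $3$-dimensional orbit $\su(2)X_\ell$ inside $\C^{2\times k}$. If $Y_\ell$ is close to that orbit but not on it, the laws of $RX_\ell$ and $R'Y_\ell$ are mutually singular, since distinct orbits are disjoint. This is the generic case, for instance whenever $X_\ell^\dagger X_\ell\neq Y_\ell^\dagger Y_\ell$. Their total variation distance is therefore $1$, not $O(\delta)$. Consequently:
\begin{itemize}
\item there is no block density whose Lipschitz constant you could bound;
\item there is no per-block failure event of small probability to union-bound over;
\item accumulating merges over rounds does not help, because no block can be made to agree exactly in any single round, and blocks are re-paired every round.
\end{itemize}
A dimension count also rules out a final phase of $O(\log(d/\varepsilon))$ rounds. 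That phase supplies only $O(d\log(d/\varepsilon))$ real rotation parameters, fewer than $\dim_{\R}\stframe{d}{k}=2dk-k^2\ge dk$ once $k\gg\log(d/\varepsilon)$. So for each matching sequence the final state ranges over a lower-dimensional image, and generic nearby starting points cannot be made to coalesce with high probability.

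\textbf{The missing idea is joint smoothing by the whole multi-step map.} The paper shares the matchings for $T_2=O((k+\log d)\log d)$ steps and studies $\Phi_{X,\mathcal M}:\su(2)^m\to\stframe{d}{k}$, which sends all the rotations to the final state.
\begin{itemize}
\item Its key lemma is quantitative surjectivity of the Jacobian: $\sigma_{\mathrm{sur}}(J_{X,\mathcal M}(\mathcal R))\ge\sqrt{k+\log d}/(2d)$ with probability $1-d^{-30}$.
\item This is proved by a control argument. Rotation directions are chosen so that the energy $\norm{A_{X_t}(H_t)}_F^2$ contracts at rate $\Omega(1/(k+\log d))$, where $A_{X_t}(H_t)$ is the minimal $\sualg(d)$-lift of a propagated tangent vector; this uses the same row-norm bound. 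This steers every tangent direction almost to zero at bounded control cost.
\item The paper then builds a flow on rotation lists that moves the starting point from $X$ to $Y$ while keeping the final state fixed, using a cut-off right inverse $J^\top(JJ^\top)^{-1}$.
\item An integration-by-parts estimate against the right-invariant measure $\nu_m$ shows this flow moves $\nu_m$ by only $\mathrm{poly}(d,T)\norm{X-Y}_F$ in total variation.
\item Only then is a maximal coupling applied, to the entire rotation list rather than pair by pair.
\end{itemize}
So the obstacle you flag (nearly-zero rows or rank-deficient blocks) is not the real one; row norms enter only as upper bounds.

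\textbf{Two smaller gaps on the Wasserstein side.}
\begin{itemize}
\item The pair estimate driving the contraction needs nearby states; the paper's version requires $\norm{X-Y}_F\le1$. You therefore need the local-to-global lemma to handle a Haar-distributed $Y_0$.
\item Your rate $c/(k+\log d)$ fails for $k$ near $d$. Take $k=d-1$, typical $X$, and $Y=\e^{\i\phi}X$. The aligned coupling then shrinks $\norm{X-Y}_F^2$ only by a factor $1-O(1/d^2)$ per step, because $\su(2)$ rotations on nearly full-rank blocks cannot absorb a global phase. The paper handles $k>d/2$ by proving contraction on $\su(d)$, where the determinant constraint removes this mode, and then projecting onto the first $k$ columns.
\end{itemize}
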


\vspace{-1.5em}
\paragraph{Proof Overview.}

We prove the mixing time
by coupling two copies $(X_t)_{t\geq 0}$ and $(Y_t)_{t\geq 0}$ of the walk,
with initial states $X_0$ arbitrary
and $Y_0$ drawn from Haar measure $\mu_{d,k}$ on $\stframe{d}{k}$.
As in many coupling proofs including those for Kac's walk
in \cite{PS17,LQSY+26,PS26},
our argument has two stages:
we first bring the two copies close together, and then make them agree
exactly with high probability. The first stage yields our Wasserstein
mixing time bound. In the second stage, we use the coupling lemma
(see \cref{lem:coupling_lemma}) to bound the total variation distance at time $t$
by $\Pr[X_t\ne Y_t]$, the probability that the two copies have not
yet coalesced, and thereby obtain the total variation mixing time bound.

\textbf{\textit{ Step 1: Bringing the two copies close.}}
We extend the coupling technique of \cite{PSV26}
to the parallel walk on the complex Stiefel manifold.
By the local-to-global lemma
(see \cref{lem:local_to_global_wasserstein_contraction}),
it suffices to establish Wasserstein contraction for nearby initial states.
Our main task is therefore to construct an explicit coupling of two chains
started from nearby states $X$ and $Y$
and show that their distance contracts in expectation at each step.
We use a shared uniformly random perfect matching.
For each pair $\ell=(i,j)$ in the matching, let $X_\ell,Y_\ell$
be the $2\times k$ submatrices consisting of rows $i$ and $j$
of $X$ and $Y$, respectively. We choose
\[
    h_\ell\in\operatorname*{arg\,min}_{h\in\mathrm{SU}(2)}
    \|X_\ell-hY_\ell\|_F^2,
\]
so that $h_\ell$ aligns these rows as closely as possible.
We then sample independent Haar random rotations
$R_\ell\in\mathrm{SU}(2)$ and update each pair by
\(
    X_\ell'=R_\ell X_\ell
\)
and
\(
    Y_\ell'=R_\ell h_\ell Y_\ell
\).
\begin{itemize}[topsep=4pt]
    \item For $k\leq d/2$, this coupling ensures that $\|X_t-Y_t\|_F^2$
contracts in expectation by a factor of $1-\Omega(1/(k+\log d))$
at each step with high probability after a short initial period.
Then, we use the local-to-global lemma
to extend this contraction ratio to arbitrary initial states $X$ and $Y$.
Iterating the resulting Wasserstein contraction yields a mixing time of
$O((k+\log d)\log(d/\varepsilon))$ in Wasserstein distance.

    \item For $d/2<k\leq d$, we first consider the full-dimensional case $k=d$
on $\su(d)$, where we prove a one-step Wasserstein contraction
with factor $1-\Omega(1/d)$ and hence a mixing time of
$O(d\log(d/\varepsilon))$. We then transfer this bound to the walk
on $\stframe{d}{k}$ by projecting onto the first $k$ columns.
Since $k=\Theta(d)$ in this range, this yields the desired
$O((k+\log d)\log(d/\varepsilon))$ Wasserstein mixing time.
\end{itemize}

\textbf{\textit{ Step 2: Making the two copies agree.}}
Step 1 brings the two chains to nearby states $X$ and $Y$ with high
probability. Our task is now to couple a further
$T=O((k+\log d)\log d)$ steps so that their final states agree
exactly with high probability.
Let $\mathcal M=(M_1,\dots,M_T)$ be a shared sequence of random
perfect matchings, and let $\mathcal R$ be the sequence of independent
Haar random $\su(2)$ rotations used by the chain started from $X$.
For a fixed $\mathcal M$, write $\Phi_{X,\mathcal M}(\mathcal R)$
for the final state after these $T$ steps.

The key is to show that small changes in the final state
can be realized by controlled perturbations of the rotations.
We establish this through a surjectivity bound
on the Jacobian of $\Phi_{X,\mathcal M}$.
Using this bound, we construct a correction map $\mathcal T$
that adjusts the rotation sequence to compensate for
the difference between $X$ and $Y$.
With high probability over $\mathcal M$ and $\mathcal R$,
applying the corrected sequence $\mathcal T(\mathcal R)$ to $Y$
gives the same final state as applying $\mathcal R$ to $X$:
\[
    \Phi_{Y,\mathcal M}(\mathcal T(\mathcal R))
    =\Phi_{X,\mathcal M}(\mathcal R).
\]
Moreover, when $X$ and $Y$ are sufficiently close, the distribution
of $\mathcal T(\mathcal R)$ is close in total variation to the
product Haar distribution. We therefore use a maximal coupling to
obtain a rotation sequence $\mathcal R_Y$ with exactly the product
Haar distribution that agrees with $\mathcal T(\mathcal R)$ with
high probability. Using $\mathcal R$ for the $X$-chain and
$\mathcal R_Y$ for the $Y$-chain gives a valid coupling in which
the final states agree with high probability.

\subsection{Discussion}
Our result provides a further characterization of the parallel Kac's walk.
Previous work showed that the walk rapidly scrambles an arbitrary fixed pure state and, after suitable discretization and pseudorandom replacement of its randomness, gives rise to pseudorandom state scramblers and pseudorandom unitaries~\cite{LQSY+25,LQSY+26}. Another feature of the ideal walk is its dispersing behavior: for a fixed input state, sufficiently many random choices spread the possible outputs throughout the state space. Our result shows that this phenomenon persists jointly for multiple inputs.
In particular, the joint distribution of outputs approaches
that generated by applying a single Haar random unitary to all of the inputs,
and hence reproduces the correlations imposed by a common random unitary.

Although our main theorem is stated for the complex Stiefel manifold,
our method should also apply to the standard Kac's
walk on the real Stiefel manifold, where each step applies a random
rotation to a single pair of coordinates.
This would yield a
total variation mixing time of
$O(d(k+\log d)\log(d/\varepsilon))$ sequential steps,
matching the Wasserstein mixing bound established in \cite{PSV26}.
Such a result would resolve their open question
of whether the $k$-column walk satisfies the same mixing bound
in total variation distance.

\vspace{-1em}
\paragraph{Organization.}
The rest of the paper is organized as follows.
\Cref{sec:preliminary} introduces the necessary preliminaries.
We define the parallel Kac's walk and state our main mixing result
in \cref{sec:parallel_kac_walk_on_k_columns}.
\Cref{sec:wasserstein_convergence} establishes the Wasserstein mixing bound.
Building on this bound, \cref{sec:tv_convergence} constructs a coupling
of nearby states and proves total variation mixing.
Some technical proofs are deferred to the appendix.

\paragraph{Acknowledgement.}

Q.C. was supported in part by the National Natural Science Foundation of China (Grants Nos. 12271460 and 12341101), Guangdong provincial grants (Grant Nos. 2024A1515011456 and GDZX2403006), and the Shenzhen Fundamental Research Program (Grant No. JCYJ20241202124023031).
M.Q. was supported by the National Research Foundation, Singapore,
through the National Quantum Office, hosted in A*STAR,
under its Centre for Quantum Technologies Funding Initiative (S24Q2d0009).
P.Y. and M.Z. were supported by the National Natural Science Foundation of China (Grant Nos. 62332009 and 12347104), the Quantum Science and Technology-National Science and Technology Major Project (Grant No. 2021ZD0302901), the NSFC/RGC Joint Research Scheme (Grant No. 12461160276), the Natural Science Foundation of Jiangsu Province (No. BK20243060), the Fundamental and Interdisciplinary Disciplines Breakthrough Plan of the Ministry of Education of China (No. JYB2025XDXM118), the ``111 Center'' (No. B26023), and the Fundamental Research Funds for the Central Universities (Grant No. 2026300376).

\paragraph{AI Disclosure.}

OpenAI GPT-6 Astra was used extensively throughout
the preparation of this manuscript,
including for mathematical exploration,
checking proof arguments,
drafting, and language polishing.
In particular, it contributed substantially
to developing the proofs in \cref{sec:tv_convergence}.
The authors verified and refined the proofs,
revised the text, and take full responsibility
for the content of the final manuscript.
\addtocontents{toc}{\protect\setcounter{tocdepth}{2}}

\section{Preliminaries}
\label{sec:preliminary}
\paragraph{Notation.}
Throughout, $\log$ denotes the base-$2$ logarithm.
For a positive integer $n$, let $[n]\coloneqq\{1,\dots,n\}$.
For an event $E$, let $E^{\mathrm c}$ denote its complement
and $\mathbf{1}_E$ its indicator.
We write $I_n$ for the $n\times n$ identity matrix and
$\mathbf e_i$ for the $i$-th standard basis vector.
We write $A\succeq B$ if $A-B$ is positive semidefinite.
For a matrix $A$, let $A^\dagger$ denote its conjugate
transpose
and let $\|A\|_F$ and $\|A\|_{\mathrm{op}}$ denote
its Frobenius norm and operator norm, respectively.
For matrices $A,B$ of the same size, we use the real Frobenius
inner product
$\langle A,B\rangle_F\coloneqq\operatorname{Re}(\tr{A^\dagger B})$.
For a linear map $L$ between matrix spaces, let
$\|L\|_{\mathrm{op}}$ denote its operator norm induced by
the Frobenius norms.

\paragraph{Stiefel Manifolds and Unitary Groups.}

For $1\leq k\leq d$, define the complex Stiefel manifold
$\stframe{d}{k}\coloneqq
\{X\in\C^{d\times k}:X^\dagger X=I_k\}$.
Its elements are matrices with orthonormal columns, and its
tangent space at $X\in\stframe{d}{k}$ is
$\mathsf T_X\stframe{d}{k}
=\{H\in\C^{d\times k}:X^\dagger H+H^\dagger X=0\}$,
which is a real linear space of dimension $2dk-k^2$.
When $k=d$, the Stiefel manifold is the unitary group $\ugroup{d}$,
whose tangent space at $X$ is
$\mathsf T_X\ugroup{d}=\{XA:A\in\C^{d\times d},\ A^\dagger=-A\}$.
The special unitary group is the subgroup
$\su(d)\coloneqq\{U\in\ugroup{d}:\det U=1\}$.
Its Lie algebra is
$\sualg(d)\coloneqq
\{A\in\C^{d\times d}:A^\dagger=-A,\ \tr{A}=0\}$,
which is a real linear space of dimension $d^2-1$,
and its tangent space at $U\in\su(d)$ is
$\mathsf T_U\su(d)=\{UA:A\in\sualg(d)\}$.
We use the following parametrization of $\su(2)$ by the unit
sphere $\mathbb S^3\coloneqq\{u\in\R^4:\|u\|_2=1\}$.
Define
\begin{equation}
    \sigma_0=
    \begin{pmatrix}
        1&0\\
        0&1
    \end{pmatrix}\,,
    \qquad
    \sigma_1=
    \begin{pmatrix}
        0&1\\
        -1&0
    \end{pmatrix}\,,
    \qquad
    \sigma_2=
    \begin{pmatrix}
        0&\i\\
        \i&0
    \end{pmatrix}\,,
    \qquad
    \sigma_3=
    \begin{pmatrix}
        \i&0\\
        0&-\i
    \end{pmatrix} \,.
    \label{eq:basis_for_su2}
\end{equation}
For $u=(u_0,u_1,u_2,u_3)\in\mathbb S^3$, let $h(u)\coloneqq\sum_{z=0}^3u_z\sigma_z$.
Then $h$ is a bijection from $\mathbb S^3$ to $\su(2)$.

We use $D$ to denote the Riemannian distance
on $\stframe{d}{k}$ induced by the Frobenius norm.
The following lemma compares the Frobenius distance with the
Riemannian distance on $\stframe{d}{k}$.
It is the complex analogue of \cite[Lemma~A.4]{PSV26}.
The proof is the same after replacing
transposes with conjugate transposes in the original proof,
so we omit it here.

\begin{lemma}
    \label{lem:frobenius_riemannian_distance_comparison}
    For all $X,Y\in\stframe{d}{k}$,
    \(
        \norm{X-Y}_F
        \leq
        D(X,Y)
    \).
    Moreover, whenever
    $\norm{X-Y}_F\leq 1$,
    \(
        D(X,Y)
        \leq
        \norm{X-Y}_F+2\norm{X-Y}_F^2
    \).
\end{lemma}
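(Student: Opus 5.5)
The first inequality is immediate. The Riemannian metric on $\stframe{d}{k}$ is the restriction of the Frobenius inner product on $\C^{d\times k}\cong\R^{2dk}$. So every piecewise smooth curve $\gamma:[0,1]\to\stframe{d}{k}$ from $X$ to $Y$ is also a curve in the ambient Euclidean space. Its length satisfies
\[
\int_0^1\|\gamma'(s)\|_F\,ds\ \ge\ \Bigl\|\int_0^1\gamma'(s)\,ds\Bigr\|_F=\|X-Y\|_F .
\]
Taking the infimum over curves gives $\|X-Y\|_F\le D(X,Y)$.

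For the reverse bound I will build an explicit curve, which is the complex version of the argument in \cite[Lemma~A.4]{PSV26}. Set $H=Y-X$ and $\delta=\|H\|_F\le 1$. The plan is to take the straight segment $c(s)=X+sH$ and project it back onto the manifold by polar retraction, $\gamma(s)=c(s)\bigl(c(s)^\dagger c(s)\bigr)^{-1/2}$. Then I bound the length of $\gamma$ by $\delta+2\delta^2$.

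\textbf{Well-definedness.} Write $G(s)=c(s)^\dagger c(s)=I_k+sE+s^2H^\dagger H$, with $E=X^\dagger H+H^\dagger X$. From $X^\dagger X=Y^\dagger Y=I_k$ we get $E=-H^\dagger H$. Hence $G(s)=I_k-s(1-s)H^\dagger H$, and its eigenvalues lie in $[1-\delta^2/4,\,1]$. So $G(s)$ is invertible and $\gamma$ is defined on $[0,1]$, with $\gamma(0)=X$ and $\gamma(1)=Y$, since both endpoints already have orthonormal columns.

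\textbf{Length bound.} Since $H^\dagger H$ is a fixed matrix, $G(s)^{-1/2}=f_s(H^\dagger H)$ is a scalar function of it, and everything commutes. Differentiating gives
\[
\gamma'(s)=H\,G^{-1/2}+c(s)\,\tfrac{d}{ds}G^{-1/2}.
\]
Here $\|G^{-1/2}\|_{\mathrm{op}}\le(1-\delta^2/4)^{-1/2}$. The derivative $\tfrac{d}{ds}G^{-1/2}$ equals $\tfrac12(1-2s)H^\dagger H\,G^{-3/2}$, whose Frobenius norm is at most about $\tfrac12\|H\|_{\mathrm{op}}\|H\|_F\le \tfrac12\delta^2$. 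Also $\|c(s)\|_{\mathrm{op}}\le 1+\delta$. Integrating over $s$ yields a length of at most $\delta(1-\delta^2/4)^{-1/2}+C\delta^2$ with an explicit $C$ close to $1$. Using $\delta\le1$, this is at most $\delta+2\delta^2$.

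The main obstacle is only the bookkeeping of constants, so that the total lands at or below $2\delta^2$. If the crude operator-norm estimates overshoot, I will instead bound $\|\gamma'(s)\|_F^2$ directly. Using the commuting structure, it becomes a function of the eigenvalues of $H^\dagger H$, which can be bounded termwise by $(\delta+2\delta^2)^2$. The complex setting changes nothing, because only $X^\dagger X=I_k$ and Hermitian functional calculus are used.
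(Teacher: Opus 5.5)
Your proof is correct. The paper omits this proof and defers to the real case in \cite{PSV26}, but your polar-projected segment $c(s)\bigl(c(s)^\dagger c(s)\bigr)^{-1/2}$ with $c(s)^\dagger c(s)=I_k-s(1-s)H^\dagger H$, together with your derivative formula, is exactly the path and computation the paper uses in \cref{clm:tv_basic_bounds}, so this is essentially the intended route. The constants close with room to spare, so no eigenvalue-wise fallback is needed: $\|c(s)\|_{\mathrm{op}}\le 1$ (since $c(s)^\dagger c(s)\preceq I_k$), $\|G(s)^{-3/2}\|_{\mathrm{op}}\le(4/3)^{3/2}$ and $\int_0^1|1-2s|\,ds=\tfrac12$ together bound the length by $\delta+0.6\,\delta^2$.
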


\paragraph{Probability Distances and Markov Chain.}
For probability measures $\mu$ and $\nu$ defined on a measurable
space $\br{\Omega,\F}$, their \emph{total variation distance} is
\(
    \norm{\mu-\nu}_{\mathrm{TV}}
    \coloneqq\sup_{A\in\F}
    \abs{\mu\!\br{A}-\nu\!\br{A}}.
\)
A coupling of two probability measures $\mu$ and $\nu$
is a joint distribution with marginals $\mu$ and $\nu$.
For probability measures $\mu$ and $\nu$ on $\stframe{d}{k}$, 
define the Wasserstein distance with respect to $D$ by
\[
    W_D(\mu,\nu)
    \coloneqq
    \inf_{\pi\in\Pi(\mu,\nu)}
    \expect{(X,Y)\sim\pi}{D(X,Y)} \enspace,
\]
where $\Pi(\mu,\nu)$ denotes the set of couplings of $\mu$ and $\nu$.

For a Markov chain on a state space $\Omega$ with transition
kernel $K$, let $K^t(x,\cdot)$ denote its distribution after
$t$ steps when started from $x\in\Omega$.
A probability measure $\pi$ is stationary if the chain
has distribution $\pi$ at every time when initialized
with distribution $\pi$.
For $\varepsilon\in(0,1)$, its total variation mixing time is
$t_{\mathrm{mix}}(\varepsilon)
\coloneqq\min\{t\in\mathbb Z_{\geq0}:
\sup_{x\in\Omega}\|K^t(x,\cdot)-\pi\|_{\mathrm{TV}}
\leq\varepsilon\}$.

We will use the following lemma
that reduces global Wasserstein contraction to a
contraction estimate for sufficiently close initial states.

\begin{lemma}[Local-to-global Wasserstein contraction {\cite[Lemma~3.3]{PSV26}}]
    \label{lem:local_to_global_wasserstein_contraction}
    Let $(E,D)$ be a compact geodesic metric space and let $K$ be a
    Markov kernel on $E$. Let $W_D$ denote the Wasserstein distance
    with cost $D$. Suppose that, for some $\rho>0$ and
    $\lambda\in[0,1)$,
    \[
        W_D\bigl(K(x,\cdot),K(y,\cdot)\bigr)
        \leq
        \lambda \cdot D(x,y) \enspace,
    \]
    whenever $D(x,y)\leq\rho$. Then, for all $x,y\in E$,
    \[
        W_D\bigl(K(x,\cdot),K(y,\cdot)\bigr)
        \leq
        \lambda \cdot D(x,y) \enspace.
    \]
\end{lemma}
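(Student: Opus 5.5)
The plan is the standard path-chaining argument. We may assume $D(x,y)>\rho$; otherwise the hypothesis applies directly. Since $(E,D)$ is a compact geodesic space, there is a geodesic $\gamma:[0,1]\to E$ from $x$ to $y$ of length $D(x,y)$. Choose $N$ large enough that $D(x,y)/N\leq\rho$, and put $z_i=\gamma(i/N)$ for $i=0,\dots,N$. Then $z_0=x$, $z_N=y$, and $D(z_{i-1},z_i)=D(x,y)/N\leq\rho$. Because $\gamma$ is a geodesic, $\sum_i D(z_{i-1},z_i)=D(x,y)$.

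Next apply the local hypothesis to each consecutive pair:
\[
W_D\bigl(K(z_{i-1},\cdot),K(z_i,\cdot)\bigr)\leq\lambda\, D(z_{i-1},z_i).
\]
Then use the triangle inequality for $W_D$ on probability measures over the compact metric space $E$. It holds because $W_D$ is a metric there; this follows from the gluing lemma, which applies since $E$ is compact and therefore Polish. Chaining gives
\[
W_D\bigl(K(x,\cdot),K(y,\cdot)\bigr)\leq\sum_{i=1}^N W_D\bigl(K(z_{i-1},\cdot),K(z_i,\cdot)\bigr)\leq\lambda\sum_{i=1}^N D(z_{i-1},z_i)=\lambda D(x,y).
\]

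The only step that needs care is the triangle inequality for $W_D$. Concretely, we take optimal couplings $\pi_i$ of $K(z_{i-1},\cdot)$ and $K(z_i,\cdot)$; these exist by compactness, since the set of couplings is weakly compact and $D$ is continuous. We glue them along the shared marginals using disintegration on the Polish space $E$. This produces a joint law of $(Z_0,\dots,Z_N)$ whose $(Z_0,Z_N)$ marginal is a coupling of $K(x,\cdot)$ and $K(y,\cdot)$. Its cost is then bounded by $\mathbb E\,D(Z_0,Z_N)\leq\sum_i\mathbb E\,D(Z_{i-1},Z_i)$, using the triangle inequality for $D$.

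Measurability of $x\mapsto K(x,\cdot)$ is not needed, because everything here concerns finitely many fixed points. Also, if $E$ were only a length space rather than geodesic, the same argument would go through using near-geodesic paths with an arbitrary slack $\delta>0$ and then letting $\delta\to0$.
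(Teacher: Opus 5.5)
Your proof is correct: you subdivide a constant-speed geodesic from $x$ to $y$ into $N\geq D(x,y)/\rho$ equal pieces, apply the local hypothesis to each consecutive pair, and chain the estimates with the triangle inequality for $W_D$ (obtained by gluing couplings on the compact, hence Polish, space $E$), which yields exactly $\lambda\cdot D(x,y)$. The paper gives no proof of its own and imports the lemma as \cite[Lemma~3.3]{PSV26}, and your path-chaining argument is the standard proof of that statement.
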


We will also use the following coupling bound for total variation distance.
\begin{lemma}[Coupling lemma, {\cite[Theorem 5.4]{LPW09}}]
\label{lem:coupling_lemma}
Let $K$ be a Markov transition kernel on a state space $\Omega$,
with stationary probability measure $\pi$.
Let $(X_t)_{t\geq0}$ and $(Y_t)_{t\geq0}$ be Markov chains with
transition kernel $K$, started at $X_0=x\in\Omega$ and $Y_0\sim\pi$,
respectively. Then, for any coupling of the two chains and every
integer $t\geq0$,
\(
    \tv{K^t(x,\cdot)-\pi}
    \leq\prob{}{X_t\ne Y_t}.
\)
\end{lemma}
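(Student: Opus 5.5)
The plan is to reduce the statement to an elementary inequality between the laws of the two random variables $X_t$ and $Y_t$ at the fixed time $t$. Any coupling of the two chains is a single probability space carrying both processes, so we can compare $\Pr[X_t\in A]$ and $\Pr[Y_t\in A]$ event by event.

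\emph{Step 1: identify the marginals.} Since $(X_s)_{s\ge0}$ is a Markov chain with kernel $K$ started at $x$, the law of $X_t$ is $K^t(x,\cdot)$. Since $(Y_s)_{s\ge0}$ has kernel $K$ and $Y_0\sim\pi$, stationarity of $\pi$ (applied inductively in $s$) gives that $Y_t\sim\pi$. Only these one-time marginals are used; how the coupling correlates the two chains across time is irrelevant.

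\emph{Step 2: bound each event.} Fix a measurable $A\in\F$. Then
\[
K^t(x,A)-\pi(A)=\Pr[X_t\in A]-\Pr[Y_t\in A]=\E\bigl[\mathbf 1_A(X_t)-\mathbf 1_A(Y_t)\bigr].
\]
The integrand is at most $\mathbf 1_{\{X_t\in A,\,Y_t\notin A\}}\le \mathbf 1_{\{X_t\ne Y_t\}}$, so $K^t(x,A)-\pi(A)\le\Pr[X_t\ne Y_t]$. Exchanging the roles of $X_t$ and $Y_t$ gives the same bound for $\pi(A)-K^t(x,A)$. Hence $\abs{K^t(x,A)-\pi(A)}\le\Pr[X_t\ne Y_t]$, and taking the supremum over $A\in\F$ yields the claim.

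\emph{Step 3: measurability.} The only subtle point is that $\{X_t\ne Y_t\}$ must be an event, i.e.\ the diagonal of $\Omega\times\Omega$ must be measurable. This holds for the spaces used in this paper, such as $\stframe{d}{k}$ with its Borel $\sigma$-algebra, because they are separable metric spaces and so the diagonal is closed. For a general state space I would either assume this property or interpret $\Pr$ as outer probability; the inequality in Step 2 holds verbatim in either case. I expect no real obstacle beyond this technical remark.
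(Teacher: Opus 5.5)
Your proof is correct and is the standard argument behind the cited result: the paper does not prove this lemma but cites \cite[Theorem 5.4]{LPW09}, where the bound comes from identifying the time-$t$ marginals as $K^t(x,\cdot)$ and $\pi$ and then using $\mu(A)-\nu(A)\le\Pr[X\in A,\,Y\notin A]\le\Pr[X\ne Y]$, exactly as you do. Your measurability remark is a worthwhile addition, since LPW09 treats finite state spaces while the paper applies the lemma on the compact metric space $\stframe{d}{k}$, where the diagonal is closed and hence measurable.
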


\section{The Parallel Kac's Walk}
\label{sec:parallel_kac_walk_on_k_columns}
We study the mixing time of the parallel Kac's walk \cite{LQSY+26} on $\stframe{d}{k}$,
which is a Markov chain defined as follows.

\begin{definition}[Parallel Kac's walk on $\stframe{d}{k}$]
    \label{def:parallel_kac_walk_on_k_columns}
    Let $d$ be even and set $m=d/2$.
    The parallel Kac's walk on $\stframe{d}{k}$ is a Markov chain
    $(X_t\in \stframe{d}{k})_{t \ge 0}$
    which evolves according to the following transition rule at each time step $t$:
    \begin{enumerate}
        \item Sample a random perfect matching of $[d]$ uniformly at random, denoted by
        \[
            M = \{(i_1,j_1),\dots,(i_m,j_m)\}\enspace.
        \]
        \item Sample $m$ independent Haar random $\su(2)$ matrices, denoted by $R_1,\dots,R_m$. For each $\ell\in[m]$, view $R_\ell$ as an element of $SU(d)$
        that acts on the two-dimensional plane spanned by $\mathbf e_{i_\ell}$ and $\mathbf e_{j_\ell}$
        and leaves all other coordinates unchanged.
        \item 
        The next state is then given by
        \(
            X_{t+1}
            =
            \br{\prod_{\ell=1}^{m}R_\ell}\cdot X_t
        \).
    \end{enumerate}
\end{definition}

Let $\mu_{d,k}$ be the Haar measure on $\stframe{d}{k}$ and
$\kackernel$ the transition kernel of the parallel Kac's walk.
Define its total variation mixing time by
\[
    t_{\mathrm{mix}}(\varepsilon)
    =\min\left\{t\geq0:
        \sup_{X\in\stframe{d}{k}}\tv{\kackernel^t(X,\cdot)-\mu_{d,k}}
        \leq\varepsilon\right\}\enspace.
\]
Our main result is the following bound on the mixing time.

\begin{restatable}{theorem}{tvMixingTheorem}
    \label{thm:tv_mixing}
    For every even $d\geq2$, $1\leq k<d$, and
    $\varepsilon\in(0,1)$, the parallel Kac's walk on $\stframe{d}{k}$ satisfies
    \[
        t_{\mathrm{mix}}(\varepsilon)
        ~\leq~2^{20}(k+\log d)\log(d/\varepsilon)\enspace.
    \]
\end{restatable}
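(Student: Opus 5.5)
We prove \cref{thm:tv_mixing} with the two-stage coupling outlined in the introduction, combined through \cref{lem:coupling_lemma}. Run two copies $(X_t)$ and $(Y_t)$ of the walk, with $X_0=X$ arbitrary and $Y_0\sim\mu_{d,k}$. The time horizon is split as $T_{\mathrm{tot}}=T_1+T_2$, where $T_1=C(k+\log d)\log(d/\delta)$ for a polynomially small target distance $\delta=\mathrm{poly}(\varepsilon/d)$, and $T_2=C(k+\log d)\log d$.

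\textbf{Stage 1 (Wasserstein contraction).} The goal is to show $W_D(\kackernel(X,\cdot),\kackernel(Y,\cdot))\le(1-c/(k+\log d))D(X,Y)$, possibly after a short burn-in, or for a block of steps. By \cref{lem:local_to_global_wasserstein_contraction}, it suffices to prove this for nearby $X,Y$.

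For $k\le d/2$, use the shared matching together with the pairwise aligning rotations $h_\ell\in\arg\min_{h\in\su(2)}\|X_\ell-hY_\ell\|_F$. One computes $\mathbb E\|X'-Y'\|_F^2=\sum_\ell\min_h\|X_\ell-hY_\ell\|_F^2$. Write $\Delta=X-Y$, locally approximated by a tangent vector $XA+X_\perp B$. For a random pair, the alignment removes the component of $\Delta_\ell$ lying in the $\sualg(2)$-orbit direction $\sualg(2)\cdot X_\ell$. Summed over a random matching, this removes an $\Omega(1/(k+\log d))$ fraction of $\|\Delta\|_F^2$, provided that the row norms of $X$ are delocalized, i.e.\ $\max_i\|X_{i,\cdot}\|^2\lesssim (k+\log d)/d$.

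This delocalization holds with high probability after $O(\log d)$ burn-in steps, using the single-state dispersion of the parallel walk from \cite{LQSY+26} applied columnwise, together with a union bound. Lemma~\ref{lem:frobenius_riemannian_distance_comparison} converts between $\|\cdot\|_F$ and $D$.

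For $d/2<k<d$, prove a $1-\Omega(1/d)$ contraction on $\su(d)$ with the same coupling, and then project to the first $k$ columns; projection is $1$-Lipschitz. Iterating the contraction gives $\mathbb E\, D(X_{T_1},Y_{T_1})\le \pi\sqrt{2k}\,(1-c/(k+\log d))^{T_1}\le\delta\varepsilon$, so by Markov's inequality $D(X_{T_1},Y_{T_1})\le\delta$ except with probability $\varepsilon/3$.

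\textbf{Stage 2 (coalescence).} Conditioned on $D(X,Y)\le\delta$, couple the next $T_2$ steps with a shared matching sequence $\mathcal M$. Let $\Phi_{X,\mathcal M}:\su(2)^{mT_2}\to\stframe{d}{k}$ denote the endpoint map. The key estimate is a quantitative surjectivity statement: with probability $1-\varepsilon/3$ over $(\mathcal M,\mathcal R)$, the differential $d\Phi$ has a right inverse of operator norm at most $\mathrm{poly}(d)$, and this bound is stable in a $\mathrm{poly}(d)^{-1}$-neighbourhood.

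To prove it, note that the tangent directions generated by perturbing the rotation at step $s$ are $U_{>s}\,\mathfrak g_{M_s}\,X_s$, where $\mathfrak g_{M_s}=\bigoplus_\ell\sualg(2)$. The Gram operator $\sum_s(\cdot)(\cdot)^\dagger$ should be lower bounded by the same spectral-gap computation as in Stage 1: that computation says exactly that a random $\mathfrak g_M X$ captures an $\Omega(1/(k+\log d))$ fraction of every tangent direction. Averaging over $T_2\gg(k+\log d)\log d$ independent steps, matrix concentration then gives a minimum singular value bounded below.

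Given the right inverse, an implicit-function/Newton argument constructs a correction map $\mathcal T$ with $\Phi_{Y,\mathcal M}(\mathcal T(\mathcal R))=\Phi_{X,\mathcal M}(\mathcal R)$. The perturbation is spread over all rotations and is of size $\mathrm{poly}(d)\delta$. The Jacobian of $\mathcal T$ is $I+O(\mathrm{poly}(d)\delta)$, and the Haar density on $\su(2)^{mT_2}$ is smooth, so $\|\mathcal T_\#\mathrm{Haar}-\mathrm{Haar}\|_{\mathrm{TV}}\le \mathrm{poly}(d)\,\delta\,\sqrt{mT_2}\le\varepsilon/3$ for a suitable choice of $\delta$. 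A maximal coupling between $\mathcal T(\mathcal R)$ and a fresh Haar sequence $\mathcal R_Y$ makes the chains agree except with probability $\varepsilon/3$. Summing the three error terms and applying \cref{lem:coupling_lemma} gives the theorem with $T_{\mathrm{tot}}=O((k+\log d)\log(d/\varepsilon))$, absorbing constants into $2^{20}$.

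\textbf{Main obstacle.} I expect the main difficulty to be the uniform Jacobian lower bound in Stage 2. It must hold on the whole region visited, not only at the starting point, and with only polynomial loss, so that $\log(1/\delta)=O(\log(d/\varepsilon))$. The first approach to try is to reuse the Stage 1 one-step variance identity as a matrix inequality, $\mathbb E_M[P_{\mathfrak g_M X}]\succeq \frac{c}{k+\log d}P_{\mathsf T_X}$, and then apply matrix Chernoff along the trajectory.
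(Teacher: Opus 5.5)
Your Stage~1 is essentially the paper's \cref{sec:wasserstein_convergence}, and the architecture of your Stage~2 matches the paper's: surjectivity of $J_{X,\mathcal M}$, a correction map, and a maximal coupling of rotation lists. The gap is in how you propose to prove surjectivity.

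\textbf{The matrix inequality fails for $k$ close to $d$.} The Frobenius-metric inequality $\E_M[P_{\mathfrak g_MX}]\succeq\frac{c}{k+\log d}P_{\mathsf T_X}$ does hold for $k\le d/2$ on the row-delocalization event. It is the infinitesimal form of \cref{clm:pair_observation}, because the $\sigma_zX_e$, $z=1,2,3$, are orthogonal with squared norm $\norm{X_e}_F^2\le2\Delta$. But that claim uses $1-k/d\ge\frac12$, and the inequality is false when $d-k=o(d)$, including $k=d-1$.

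Take $k=d-1$, write $XX^\dagger=I_d-vv^\dagger$, and let $K=\i X\in\mathsf T_X\stframe{d}{k}$ be the global phase direction. A direct computation gives $\langle K,P_{\mathfrak g_MX}K\rangle_F=\sum_{(i,j)\in M}(|v_i|^2+|v_j|^2)^2/(2-|v_i|^2-|v_j|^2)$. This is $O(\log d/d)$ whenever $\max_i|v_i|^2=O(\log d/d)$, which is the typical case since $v$ itself performs a single-state walk. Meanwhile $\norm{K}_F^2=d-1$. So the per-step gain in this direction is $O(\log d/d^2)$, not $\Omega(1/(k+\log d))=\Omega(1/d)$, and matrix Chernoff gives nothing after $O(d\log d)$ steps. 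Your Stage~1 computation for $k>d/2$ lives on $\su(d)$, so it does not supply the bound on $\mathsf T_X\stframe{d}{k}$.

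\textbf{The missing idea.} Measure tangent vectors by their minimal-norm lift $A_X(H)\in\sualg(d)$ from \cref{clm:tv_lift}. Since $\langle A_X(H),A_X(BX)\rangle_F=\langle A_X(H),B\rangle_F$, the component of $A=A_X(H)$ in $\mathfrak g_M$ has expected squared norm $\frac1{d-1}\sum_{i<j}\norm{A^0_{(i,j)}}_F^2\ge\frac1{d-1}\norm{A}_F^2$, uniformly in $k<d$. Also $\norm{A_X(BX)}_F^2\le5\Delta\norm{B}_F^2$, and passing between $\norm{A_X(H)}_F$ and $\norm{H}_F$ costs only a factor $2d$. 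The paper runs exactly this energy $W_t=\norm{A_{X_t}(H_t)}_F^2$ in \cref{lem:tv_endpoint_control}. It builds an explicit linear control $H\mapsto\mathcal B$ with $\norm{\mathcal B}_F^2\le\frac{d^2}{k+\log d}\norm{H}_F^2$ and $\E\norm{H_T}_F^2\le d^{-37}\norm{H}_F^2$, and obtains $\sigma_{\mathrm{sur}}\ge\sqrt{k+\log d}/(2d)$ by duality rather than by matrix concentration. Alternatively, for $k$ close to $d$ you could run Stage~2 on $\su(d)$, where the Frobenius and lift metrics coincide, and then project.

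\textbf{Fixed $T_2$ cannot reach arbitrary $\varepsilon$.} With $T_2=C(k+\log d)\log d$ fixed, the surjectivity event cannot have probability $1-\varepsilon/3$ for every $\varepsilon$. If all $T_2$ matchings coincide, which has probability $((d-1)!!)^{-(T_2-1)}$ independent of $\varepsilon$, the image of $\Phi_{X,\mathcal M}$ lies in an orbit of $\su(2)^{d/2}$ of dimension at most $3d/2<2dk-k^2$ (for $d\ge4$). Then $J_{X,\mathcal M}$ is not surjective. So a single pass of your two-stage coupling cannot fail with probability only $\varepsilon/3$ for small $\varepsilon$. The paper instead proves failure probability $d^{-20}$ for one block of $T_3=O((k+\log d)\log d)$ steps from arbitrary states. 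It then restarts the block $\lceil\log(1/\varepsilon)/(20\log d)\rceil$ times, which is where the $\log(d/\varepsilon)$ factor comes from.

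\textbf{The correction map needs a global cutoff.} Your map is only defined where the right inverse exists. To get a globally defined $C^1$ map whose push-forward of $\nu_m$ you can control, you need a smooth cutoff with derivative bounds. The paper uses $\chi\circ f$ with $f(\mathcal R)=\operatorname{tr}\bigl((A_{0,\mathcal R}+(32kd\kappa^2)^{-1}I)^{-1}\bigr)$, builds a flow along the path $X_q$, and bounds the flow's divergence by integration by parts.

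\textbf{Smaller points.} The summands $U_{>s}(\cdot)U_{>s}^\dagger$ of your Gram operator depend on future randomness. Any concentration argument must first pull them back to $\mathsf T_X\stframe{d}{k}$, where they form an adapted sequence. For row delocalization, apply the single-state result to $U_t^\dagger\mathbf e_i$ together with the $\btdist{k,d-k}$ tail of $\norm{X_0X_0^\dagger U}_2^2$, as in \cref{lem:maximum_row_norm}. A columnwise union bound only gives $\max_i\norm{\mathbf e_i^\dagger X_t}_2^2=O(k\log d/d)$, which loses a $\log d$ factor in the rate.
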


Note that every matrix in $\stframe{d}{d-1}$
has a unique completion to a unitary in $\su(d)$.
This correspondence immediately gives the following corollary.

\begin{corollary}
For every even $d\geq2$ and $\varepsilon\in(0,1)$,
the total variation mixing time of the parallel Kac's walk
on $\su(d)$ is at most
$O(d\log(d/\varepsilon))$.
\end{corollary}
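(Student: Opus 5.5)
The statement to prove is the corollary: total variation mixing of the parallel Kac's walk on $\su(d)$ in $O(d\log(d/\varepsilon))$ steps. The plan is to deduce it from \cref{thm:tv_mixing} with $k=d-1$, via a measurable bijection between $\su(d)$ and $\stframe{d}{d-1}$ that intertwines the two walks and their Haar measures.

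\textbf{The bijection.} Define $\pi:\su(d)\to\stframe{d}{d-1}$ by taking the first $d-1$ columns. It is injective because the last column of $U\in\su(d)$ is determined by the first $d-1$. That column must be a unit vector $v$ orthogonal to the span of those columns, so $v=e^{\i\theta}w$ for a fixed unit normal $w$. The condition $\det U=1$ then fixes the phase $e^{\i\theta}$ uniquely. For surjectivity, any $X\in\stframe{d}{d-1}$ extends to a unitary $[X\,|\,w]$, and multiplying $w$ by $\overline{\det[X\,|\,w]}$ makes the determinant equal to $1$. The inverse map is continuous, hence measurable.

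\textbf{Equivariance and Haar measures.} Left multiplication commutes with column projection: $\pi(RU)=R\,\pi(U)$ for every $R\in\su(d)$. Each step of \cref{def:parallel_kac_walk_on_k_columns} applies a random $\prod_\ell R_\ell\in\su(d)$ on the left. So if $(U_t)$ is the walk on $\su(d)$ with the same randomness, then $(\pi(U_t))$ is exactly the walk on $\stframe{d}{d-1}$ started at $\pi(U_0)$, and $\pi_*\kackernel^t_{\su(d)}(U_0,\cdot)=\kackernel^t(\pi(U_0),\cdot)$.

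For the stationary measure, if $V$ is Haar on $\su(d)$ then $\pi(V)=V\pi(I)$. This is the pushforward of Haar on $\su(d)$ under $V\mapsto VX_0$ with $X_0=\pi(I)$. The paper defines $\mu_{d,k}$ through Haar on $\ugroup{d}$. The two agree because $\su(d)$ already acts transitively on $\stframe{d}{d-1}$: given $X$, complete it to $[X\,|\,w]\in\su(d)$ as above. A transitive compact group action gives a unique invariant probability measure, so the two pushforwards coincide. Hence $\pi_*\operatorname{Haar}(\su(d))=\mu_{d,d-1}$.

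\textbf{Conclusion.} Total variation distance is preserved under a measurable bijection with measurable inverse. Therefore
\[
\tv{\kackernel^t_{\su(d)}(U_0,\cdot)-\operatorname{Haar}}=\tv{\kackernel^t(\pi(U_0),\cdot)-\mu_{d,d-1}}.
\]
Taking the supremum over $U_0$ and applying \cref{thm:tv_mixing} with $k=d-1<d$ gives a mixing time at most $2^{20}(d-1+\log d)\log(d/\varepsilon)=O(d\log(d/\varepsilon))$.

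\textbf{Main obstacle.} The only delicate point is the identification of the stationary measures, i.e.\ that $\su(d)$ acts transitively on $\stframe{d}{d-1}$. This is handled by the phase-adjustment argument in the construction of $\pi^{-1}$ above.
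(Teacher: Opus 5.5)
Your proposal is correct and follows the paper's route: the paper also obtains the corollary from \cref{thm:tv_mixing} with $k=d-1$, using that each $X\in\stframe{d}{d-1}$ has a unique completion to an element of $\su(d)$. Your write-up fills in two steps the paper leaves implicit, namely the equivariance $\pi(RU)=R\,\pi(U)$ and the identification $\pi_*\operatorname{Haar}(\su(d))=\mu_{d,d-1}$. (Continuity of $\pi^{-1}$, which you assert without justification, holds because $\pi$ is a continuous bijection from a compact space onto a Hausdorff space.)
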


We will prove \cref{thm:tv_mixing} in \cref{sec:tv_convergence}
using a coupling argument.
Before doing so, we establish a Wasserstein mixing bound in
\cref{sec:wasserstein_convergence}, which allows us to bring
two walks close before coupling their final states to agree.

\section{Wasserstein Mixing}
\label{sec:wasserstein_convergence}
In this section, we prove that the parallel Kac's walk mixes
in Wasserstein distance after $O((k+\log d)\log(d/\varepsilon))$ steps.
The following theorem gives a quantitative bound.

\begin{restatable}[Wasserstein mixing]{theorem}{wassersteinMixingTheorem}
\label{thm:wasserstein_mixing}
Let $d\geq2$ be even, $1\leq k\leq d$, and let
$\kackernel$ be the transition kernel of the parallel Kac's walk
on $\stframe{d}{k}$.
For all $X,Y\in\stframe{d}{k}$ with $\det X=\det Y$ when $k=d$,
every $\varepsilon\in(0,1)$, and every integer
\(
    t\geq4000(k+\log d)\log(d/\varepsilon),
\)
we have
\(
    W_D\!\left(
        \kackernel^t(X,\cdot),
        \kackernel^t(Y,\cdot)
    \right)
    \leq\varepsilon.
\)
\end{restatable}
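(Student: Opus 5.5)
Via the local-to-global lemma (\cref{lem:local_to_global_wasserstein_contraction}), it suffices to show a one-step contraction $W_D(\kackernel(X,\cdot),\kackernel(Y,\cdot))\le\lambda D(X,Y)$ for nearby $X,Y$, with $\lambda=1-c/(k+\log d)$. Iterating then gives $W_D(\kackernel^t(X,\cdot),\kackernel^t(Y,\cdot))\le\lambda^t D(X,Y)$. The diameter of $\stframe{d}{k}$ in $D$ is $O(\sqrt{k})$, so $t=O((k+\log d)\log(d/\varepsilon))$ steps reach $\varepsilon$. The space is a compact geodesic metric space, except that for $k=d$ we must restrict to a fixed coset of $\su(d)$; this explains the hypothesis $\det X=\det Y$.

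A caveat: a uniform one-step contraction of rate $1/(k+\log d)$ may fail, because the $\log d$ term comes from needing the rows to be "spread". I would therefore use a multi-step version. Apply the lemma to the kernel $\kackernel^{s}$ with $s=\Theta(k+\log d)$ and aim for $\lambda=1/2$. Nothing in \cref{lem:local_to_global_wasserstein_contraction} prevents applying it to a power of the kernel.

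For the local coupling with $Y$ close to $X$, I would use the coupling from the overview. Share the matching, and for each pair $\ell$ take $h_\ell$ minimizing $\|X_\ell-hY_\ell\|_F$. Then $X_\ell'=R_\ell X_\ell$ and $Y_\ell'=R_\ell h_\ell Y_\ell$, which has the correct marginal since $R_\ell h_\ell$ is Haar. Write $Y=X+H$ with $H$ approximately tangent. To first order, the new squared distance is $\sum_\ell \min_{A\in\sualg(2)}\|H_\ell-AX_\ell\|_F^2$, i.e.\ the projection of $H_\ell$ removes its component along $\sualg(2)X_\ell$. The key quantity is the expected fraction of $\|H\|_F^2$ removed per step. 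I would lower bound it by relating the averaged projection over random pairs $(i,j)$ to the Gram structure of $X$: the components $\sualg(2)X_\ell$ span about $3$ real directions per pair out of $\sim 4k$ dimensions of row data. That suggests a removed fraction of $\Omega(1/k)$ when rows of $X$ are not too degenerate, e.g.\ $\|X_{i\cdot}\|^2\approx k/d$. Second-order errors are handled by closeness and \cref{lem:frobenius_riemannian_distance_comparison}, converting Frobenius into $D$.

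The main obstacle is degenerate $X$, e.g.\ $X$ supported on $k$ rows, where most pairs see zero rows and no contraction occurs. Here the $\log d$ term should arise: after $O(\log d)$ steps the parallel walk spreads the row mass, analogous to the single-vector $O(\log d)$ mixing of \cite{LQSY+26}. I would show that for $t\ge C\log d$, with high probability the row norms satisfy $\max_i\|X_{t,i\cdot}\|^2\le O((k+\log d)/d)$, via a moment or martingale bound on row norms. After that, each step contracts by $1-\Omega(1/(k+\log d))$, and the bad event is absorbed since $D$ is bounded.

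For $k>d/2$, I would instead prove the $\su(d)$ case $k=d$ directly, with contraction $1-\Omega(1/d)$, where the rows are automatically unit vectors. I would then project to the first $k$ columns, which is $1$-Lipschitz for $D$ and commutes with the walk.
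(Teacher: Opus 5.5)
Your architecture matches the paper's: local-to-global on a block of steps, the shared-matching coupling with optimal $\su(2)$ alignment, row delocalization after $O(\log d)$ steps, a per-step rate of order $1/(d\Delta)$, and a separate $\su(d)$ argument projected onto $\stframe{d}{k}$ for $k>d/2$. The gap is that the one-step estimate at the core rests only on a dimension count, and that count cannot become a proof. It is blind to the global-phase direction $H=\i X$. Since $\tr{\sigma_z X_\ell X_\ell^\dagger}=0$ for $z\in\{1,2,3\}$ whenever $X_\ell X_\ell^\dagger=I_2$, no pairwise rotation sees this direction when $k=d$. For $k=d-1$ with perfectly spread rows (all squared row norms equal to $k/d$), your coupling removes only a $\Theta(1/d^2)$ fraction of it per step, not $\Omega(1/k)$.

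The missing ingredient is a frame inequality showing that the pairwise orbit directions jointly control the whole displacement: $\sum_{i<j}\sum_{z=1}^3p_{(i,j),z}^2\ge\frac34\norm{X-Y}_F^2$ for $k\le d/2$ (\cref{clm:pair_observation}). The paper proves it as follows. It rewrites the sum through $P=\frac12(KX^\dagger-XK^\dagger)$ with $K=Y-X$, bounds it below by $2\norm{P-\frac{\tr{P}}{d}I_d}_F^2$, and controls the trace term via $\abs{\tr{A}}^2\le k\norm{A}_F^2$ (with $A$ the anti-Hermitian part of $X^\dagger K$) and $k\le d/2$. Combined with the exact per-pair decrease $2(\norm{p_\ell}_2-p_{\ell,0})\ge\frac{1}{4\Delta}\sum_{z\ge1}p_{\ell,z}^2$, this gives the rate $3/(16\Delta(d-1))$.

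This trace obstruction is the real reason for the split at $d/2$, and it reappears in the $\su(d)$ case. There, $\det X=\det Y$ must be used inside the contraction estimate itself, not only to make the state space geodesic. The paper shows that the eigenangles of $YX^\dagger$ sum to zero once $D(X,Y)\le 1/(16(d-1))$ and compares $\sin\theta$ with $\theta$ (\cref{clm:su_pair_observation}). It then checks that the $\su(d)$ and $\ugroup{d}$ Riemannian distances agree for the coupled outputs.

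A second, smaller error: the bad row-norm event cannot be absorbed using boundedness of $D$. \cref{lem:local_to_global_wasserstein_contraction} requires a bound of the form $\lambda D(X,Y)$ for all sufficiently close pairs, and an additive term $\Pr[\text{bad}]\cdot\mathrm{diam}$ ruins this as $D(X,Y)\to0$. What works is that the coupling is pathwise non-expansive, $\norm{X'-Y'}_F\le\norm{X-Y}_F$, because the identity is a candidate in the minimization defining $h_\ell$. So the bad events contribute at most $\Pr[\mathcal G_t^{\mathrm c}]\cdot\norm{X-Y}_F^2$ to the expected squared distance. The same property keeps $\norm{X_t-Y_t}_F\le1$ throughout, which you need to convert back to $D$ via \cref{lem:frobenius_riemannian_distance_comparison}. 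Finally, the statement carries the explicit constant $4000$, which your $O(\cdot)$ bookkeeping does not yet produce.
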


We prove the theorem by treating two ranges of $k$ separately.
In \cref{subsec:wasserstein_low_rank}, we consider $k\leq d/2$
and combine a one-step Frobenius contraction with bounds on
the maximum row norm to obtain Wasserstein contraction over
blocks of steps.
In \cref{subsec:wasserstein_high_rank}, we consider $d/2<k\leq d$,
first establish contraction on $\su(d)$, and then transfer
the resulting mixing time to $\stframe{d}{k}$.
Finally, we use the bounds established in these two cases to prove
\cref{thm:wasserstein_mixing} in \cref{subsec:wasserstein_mixing_bound}.

\subsection{\texorpdfstring{The Case $k\leq d/2$}{The Case k ≤ d/2}}
\label{subsec:wasserstein_low_rank}

For $k\leq d/2$, our main result is a $1/d$ contraction after $O((k+\log d)\log d)$ steps described in the following lemma.
We prove this lemma at the end of this subsection.

\begin{restatable}{lemma}{lowrankcontraction}
\label{lem:low_rank_contraction}
Let $d\geq 2$ be even, $1\leq k\leq d/2$,
and $\kackernel$ be the transition kernel of the parallel Kac's walk on $\stframe{d}{k}$.
Let
\(
    T_0
    =
    \left\lceil 1000(k+\log d)\log d\right\rceil
\).
Then, for every
$X,Y\in\stframe{d}{k}$,
\[
    W_D\!\left(\kackernel^{T_0}(X,\cdot),\kackernel^{T_0}(Y,\cdot)\right)
    ~\leq~
    \frac{1}{d}\cdot D(X,Y) \enspace.
\]
\end{restatable}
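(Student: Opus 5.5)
By \cref{lem:local_to_global_wasserstein_contraction}, applied to the $T_0$-step kernel $\kackernel^{T_0}$ on the compact geodesic space $(\stframe{d}{k},D)$, it suffices to prove the contraction for pairs $X,Y$ with $D(X,Y)\leq\rho$, where $\rho$ is a small threshold (say $\rho=1/d^{10}$). For such pairs, \cref{lem:frobenius_riemannian_distance_comparison} makes $D$ and $\|\cdot\|_F$ agree up to a factor $1+O(\rho)$. It is therefore enough to build a coupling of $T_0$ steps with
\[
\E\bigl[D(X_{T_0},Y_{T_0})\bigr]\leq \tfrac{1}{d}\,D(X,Y).
\]
We will actually prove this with a smaller constant, say $\tfrac{1}{2d}$, to absorb the conversion losses. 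The quantity we track is $\|X_t-Y_t\|_F^2$. Since the coupled steps are unitary, the Frobenius distance never increases, so the pair never leaves the local regime where the linearization and the metric comparison are valid.

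\textbf{One-step coupling.} Use a shared uniformly random perfect matching. For each matched pair $\ell$, pick the optimal alignment $h_\ell\in\operatorname*{arg\,min}_{h\in\su(2)}\|X_\ell-hY_\ell\|_F^2$ and update by $X'_\ell=R_\ell X_\ell$ and $Y'_\ell=R_\ell h_\ell Y_\ell$. Then
\[
\|X'-Y'\|_F^2=\sum_\ell \min_{h\in\su(2)}\|X_\ell-hY_\ell\|_F^2 .
\]
Writing $\Delta=X-Y$ and linearizing, the gain on pair $\ell$ is the squared length of the projection of $\Delta_\ell$ onto the 3-dimensional tangent directions $\{AX_\ell: A\in\sualg(2)\}$, up to $O(\|\Delta\|^3)$ terms. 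Averaging over the matching, I will show that
\[
\E\|X'-Y'\|_F^2\leq \bigl(1-c/(k+\log d)\bigr)\|X-Y\|_F^2
\]
on the event that $X$ is \emph{delocalized}, meaning its maximum squared row norm satisfies $\max_i\|e_i^\dagger X\|^2\leq C(k+\log d)/d$. The mechanism is as follows. The tangent condition $X^\dagger\Delta+\Delta^\dagger X=0$ forces $\Delta$ to carry a nontrivial component in the directions generated by pair rotations. Each pair $\{i,j\}$ removes the part of $\Delta_{\{i,j\}}$ lying along $\sualg(2)\cdot X_{\{i,j\}}$. Summing over the $d-1$ partners of each row weights every row by its correlation with the others, and delocalization supplies the $\|X_\ell\|^2\approx 2k/d$ scale. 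This mirrors the \cite{PSV26} computation, with the complex $\su(2)$ replacing $\so(2)$.

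\textbf{Delocalization.} Next I show that after a burn-in of $O(\log d)$ steps, and then at all later times up to $T_0$ with probability $1-d^{-10}$, the rows of $X_t$ satisfy the bound above. Each row norm squared evolves as a mixture $\|x_i\|^2\mapsto$ a random convex combination of $\|x_i\|^2$ and $\|x_j\|^2$ with a Beta-type weight. An exponential-moment or martingale bound then shows the maximum concentrates near $k/d$, with fluctuations of order $\log d/d$. A union bound over $d$ rows and $T_0$ times completes this step.

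\textbf{Combining.} On the good event, iterating the one-step contraction over $T_0=1000(k+\log d)\log d$ steps gives a factor
\[
(1-c/(k+\log d))^{T_0/2}\leq d^{-4}
\]
on $\E\|X_{T_0}-Y_{T_0}\|_F^2$. Jensen's inequality converts this to a bound of $d^{-2}\|X-Y\|_F$ on the expected distance. On the bad event, non-expansiveness gives a contribution at most $d^{-10}\|X-Y\|_F$. Converting back to $D$ with \cref{lem:frobenius_riemannian_distance_comparison} yields the $1/d$ factor.

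\textbf{Main obstacle.} The hardest step is the one-step contraction. It requires a uniform lower bound, over all tangent directions $\Delta$ and all delocalized $X$, on the averaged projected energy $\E_M\sum_\ell \|P_{X_\ell}\Delta_\ell\|^2$. I would first try to prove it by expressing the average as a quadratic form $\langle\Delta,\mathcal L_X\Delta\rangle$ and comparing $\mathcal L_X$ to the identity on the tangent space. The comparison would use the fact that the pair rotations $\sualg(2)_{ij}$ generate $\sualg(d)$, which acts transitively on $\stframe{d}{k}$, and the spectral gap loss would be controlled by the maximum row norm.
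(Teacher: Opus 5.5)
Your architecture is the paper's: local-to-global reduction, a shared matching with the optimal $\su(2)$ alignment, burn-in followed by row-norm control, then Jensen and \cref{lem:frobenius_riemannian_distance_comparison}. But the step you flag as the main obstacle is the heart of the lemma, and the proposal does not supply it. Comparing $\mathcal L_X$ with the identity via the fact that pair rotations generate $\sualg(d)$ gives only positivity, not a constant. It also never confronts why $k\le d/2$ is needed. The $\sualg(2)$ generators are traceless, so the tangent direction $H=\mathrm{i}\theta X$ (a common phase on the columns) has minimal $\sualg(d)$-lift of squared norm $\frac{d}{d-k}\|H\|_F^2$, and the gap degenerates like $1-k/d$.

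The missing idea is the exact, unnormalized inequality of \cref{clm:pair_observation}. Set $K=Y-X$ and $P=\frac12(KX^\dagger-XK^\dagger)$. Then $\sum_{z=1}^3p_{(i,j),z}^2=|P_{ii}-P_{jj}|^2+4|P_{ij}|^2$, so $\sum_{i<j}\sum_z p_{(i,j),z}^2\ge 2\|P-\frac{\mathrm{Tr}\,P}{d}I_d\|_F^2$. Split $K$ into its part in the column space of $X$ and its orthogonal part, and let $A$ be the anti-Hermitian part of $X^\dagger K$. This gives $\|P\|_F^2=\|A\|_F^2+\frac12\|(I_d-XX^\dagger)K\|_F^2$ and $|\mathrm{Tr}\,P|^2/d=|\mathrm{Tr}\,A|^2/d\le\frac kd\|A\|_F^2\le\frac12\|A\|_F^2$. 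The Hermitian part $-\frac12K^\dagger K$ is second order. Together these yield $\ge\frac34\|X-Y\|_F^2$ whenever $\|X-Y\|_F\le1$, with no delocalization hypothesis at all.

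Delocalization enters only through the exact bound $\|p_\ell\|_2-p_{\ell,0}\ge\sum_{z\ge1}p_{\ell,z}^2/(2\|p_\ell\|_2)$, with $\|p_\ell\|_2\le\|X_\ell\|_F\|Y_\ell\|_F$ at most twice the maximal squared row norm. This exact bound also replaces your per-pair linearization, which is not uniformly $O(\|X-Y\|_F^3)$. On pairs with $\|X_\ell\|_F\ll\|X_\ell-Y_\ell\|_F$, the projected energy can be of order $\|X_\ell-Y_\ell\|_F^2$, while the true gain is at most $2\|X_\ell\|_F\|Y_\ell\|_F$.

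Your delocalization step rests on a false premise. After a pair rotation the new row is $ax_i+bx_j$, with squared norm $|a|^2\|x_i\|^2+|b|^2\|x_j\|^2+2\,\mathrm{Re}(\bar a b\,x_jx_i^\dagger)$. So row norms are not random convex combinations of each other and do not evolve autonomously; for $k=1$ and $x_i=x_j$ the squared norm can double. The paper instead uses duality in \cref{lem:maximum_row_norm}: $\|\mathbf e_i^\dagger X_t\|_2^2=\|X_0X_0^\dagger Z_t\|_2^2$, where $Z_t=G_0^\dagger\cdots G_{t-1}^\dagger\mathbf e_i$ is distributed as the parallel Kac walk on the sphere. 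That walk is Wasserstein-close to a Haar vector $U$ after $O(\log(d/\delta))$ steps, and $\|X_0X_0^\dagger U\|_2^2\sim\mathrm{Beta}(k,d-k)$ has the needed moment tail.

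Finally, a contraction that holds conditionally on the current state cannot simply be iterated ``on the good event''. You need a recursion such as $\E W_{t+1}\le(1-\gamma)\E W_t+\gamma\Pr[\mathcal G_t^{\mathrm c}]\,W_0$, where $W_t=\|X_t-Y_t\|_F^2$, $\gamma$ is the one-step rate, and non-expansiveness supplies $W_t\le W_0$. This fix is routine.
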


By iterating \cref{lem:low_rank_contraction}, we immediately obtain the following theorem.

\begin{theorem}
    \label{thm:low_rank_wasserstein_mixing}
    Let $d\geq 2$ be even, $1\leq k\leq d/2$,
    and $\kackernel$ be the transition kernel of
    the parallel Kac's walk on $\stframe{d}{k}$.
    For every $t\geq 0$
    and $X,Y \in \stframe{d}{k}$,
    we have
    \[
        W_D\!\left(\kackernel^{t}(X,\cdot),\kackernel^{t}(Y,\cdot)\right)
        ~\leq~
        d^{-\left\lfloor\tfrac{t}{\left\lceil 1000(k+\log d)\log d\right\rceil}\right\rfloor}\cdot D(X,Y) \enspace.
    \]
\end{theorem}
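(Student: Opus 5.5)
The plan is to iterate \cref{lem:low_rank_contraction} over consecutive blocks of $T_0=\lceil 1000(k+\log d)\log d\rceil$ steps, glued together by the standard composition property of Wasserstein distance under a Markov kernel. Set $q\coloneqq\lfloor t/T_0\rfloor$ and write $t=qT_0+r$ with $0\le r<T_0$.

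\textbf{Step 1: extend the lemma from point masses to measures.} For probability measures $\mu,\nu$ on $\stframe{d}{k}$ we will show
\[
    W_D\bigl(\mu\kackernel^{T_0},\nu\kackernel^{T_0}\bigr)\le \tfrac1d\, W_D(\mu,\nu).
\]
Fix a coupling $\pi$ of $\mu$ and $\nu$. For each pair $(x,y)$, \cref{lem:low_rank_contraction} supplies a coupling $\gamma_{x,y}$ of $\kackernel^{T_0}(x,\cdot)$ and $\kackernel^{T_0}(y,\cdot)$ whose expected cost is at most $\frac1d D(x,y)+\eta$, for arbitrary $\eta>0$. Since $\stframe{d}{k}$ is compact, we may take $\gamma_{x,y}$ to be an exact optimal coupling. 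We then choose this family measurably in $(x,y)$, using the measurable selection theorem for optimal transport plans, e.g.\ Villani's Corollary~5.22. Averaging, $\int\gamma_{x,y}\,d\pi(x,y)$ is a coupling of $\mu\kackernel^{T_0}$ and $\nu\kackernel^{T_0}$ with expected cost at most $\frac1d\int D\,d\pi$. Taking the infimum over $\pi$ gives the claim. This measurability point is the only mildly technical step.

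\textbf{Step 2: the leftover $r$ steps do not increase distance.} For any $s\ge0$,
\[
    W_D\bigl(\kackernel^{s}(X,\cdot),\kackernel^{s}(Y,\cdot)\bigr)\le D(X,Y).
\]
Couple the two chains by using the same matchings and the same rotations in both. Each step multiplies both states on the left by the same unitary, and left multiplication by a unitary is an isometry of $\stframe{d}{k}$ for the Frobenius norm, hence also for the Riemannian distance $D$. The distance is therefore preserved deterministically. By the same averaging argument as in Step 1, this extends to measures: $W_D(\mu\kackernel^{r},\nu\kackernel^{r})\le W_D(\mu,\nu)$.

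\textbf{Step 3: combine.} Start from $\delta_X$ and $\delta_Y$. Applying Step 2 for the first $r$ steps and Step 1 $q$ times gives
\[
    W_D\bigl(\kackernel^{t}(X,\cdot),\kackernel^{t}(Y,\cdot)\bigr)\le d^{-q}\,W_D\bigl(\kackernel^{r}(X,\cdot),\kackernel^{r}(Y,\cdot)\bigr)\le d^{-q}D(X,Y),
\]
which is exactly the stated bound.
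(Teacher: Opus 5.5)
Your proposal is correct and follows the paper's approach: the paper obtains this theorem by iterating \cref{lem:low_rank_contraction} over blocks of $T_0$ steps. Your Steps 1--3 spell out what that iteration implicitly uses, namely the composition property of $W_D$ under the kernel and the fact that the leftover $r<T_0$ steps do not increase $W_D$ (via the identical coupling, which acts as an isometry for $D$). As a minor remark, Step 1 can be done without measurable selection by Kantorovich--Rubinstein duality: if $f$ is $1$-Lipschitz, then $x\mapsto\int f\,\mathrm d\kackernel^{T_0}(x,\cdot)$ is $\tfrac1d$-Lipschitz.
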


\subsubsection{One-Step Frobenius Contraction}
To prove \cref{lem:low_rank_contraction},
we begin with a coupling for nearby initial states that never
increases the Frobenius distance.
We show that the Frobenius distance contracts in expectation
at a rate related to the squared row norms of the initial states.

\begin{lemma}
    \label{lem:frobenius_contraction}
    Suppose that $k\leq d/2$.
    For every $X,Y \in \stframe{d}{k}$ such
    that $\norm{X-Y}_{F}\leq 1$ and every row of $X$ and $Y$ has
    squared $2$-norm at most $\Delta$,
    there exists a coupling of one step of parallel Kac's walk from
    $(X,Y)$ to $(X',Y')$ such that $\norm{X'-Y'}_{F} \leq \norm{X-Y}_{F}$ and
    \[
	    	\expect{}{ \norm{X'-Y'}_{F}^{2} } \leq \br{1 - \frac{3}{16\Delta(d-1)}}\cdot \norm{X-Y}_{F}^{2}\enspace.
    \]
\end{lemma}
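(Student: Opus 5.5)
The plan is to use the coupling from the proof overview: a shared uniformly random perfect matching $M$, and for each pair $\ell=(i,j)\in M$ an aligning element $h_\ell\in\operatorname*{arg\,min}_{h\in\su(2)}\|X_\ell-hY_\ell\|_F^2$. We then set $X'_\ell=R_\ell X_\ell$ and $Y'_\ell=R_\ell h_\ell Y_\ell$ with $R_\ell$ independent Haar. The marginals are correct because $R_\ell h_\ell$ is again Haar on $\su(2)$ by left-invariance, and $h_\ell$ is a measurable function of $(X,Y,M)$ only.

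\textbf{Monotonicity.} Unitary invariance gives
\[
\|X'-Y'\|_F^2=\sum_{\ell}\min_{h}\|X_\ell-hY_\ell\|_F^2 .
\]
Taking $h=I$ shows $\|X'-Y'\|_F\le\|X-Y\|_F$ deterministically.

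\textbf{Per-pair gain.} Using the parametrization $h(u)=\sum_z u_z\sigma_z$, $u\in\mathbb S^3$, we have
\[
\|X_\ell-h(u)Y_\ell\|_F^2=\|X_\ell\|_F^2+\|Y_\ell\|_F^2-2\langle u,a^{(\ell)}\rangle ,
\qquad a^{(\ell)}_z\coloneqq\langle X_\ell,\sigma_zY_\ell\rangle_F .
\]
Hence the gain on pair $\ell$ is $g_\ell=2(\|a^{(\ell)}\|_2-a^{(\ell)}_0)$. Since $\|a\|-a_0=(\|a\|^2-a_0^2)/(\|a\|+a_0)$ and $\|a^{(\ell)}\|\le\|X_\ell\|_F\|Y_\ell\|_F\le2\Delta$ (each matrix has two rows of squared norm at most $\Delta$), we get
\[
g_\ell\ \ge\ \frac{1}{2\Delta}\sum_{z=1}^3\bigl(a^{(\ell)}_z\bigr)^2 .
\]
The case $a_0<0$ is only better for us.

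\textbf{Averaging over the matching.} Each unordered pair $\{i,j\}$ lies in $M$ with probability $1/(d-1)$, so
\[
\E\bigl[\|X-Y\|_F^2-\|X'-Y'\|_F^2\bigr]\ \ge\ \frac{1}{2\Delta(d-1)}\sum_{i<j}\sum_{z=1}^3\langle X,E^{ij}_zY\rangle_F^2 .
\]
Here $E^{ij}_z\in\sualg(d)$ is $\sigma_z$ embedded in coordinates $i,j$. Writing $\langle X,EY\rangle_F=\langle E,XY^\dagger\rangle_F$, this equals $\sum_{i<j,z}\langle E^{ij}_z,S\rangle_F^2$ with $S$ the anti-Hermitian part of $XY^\dagger$.

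The off-diagonal generators capture the off-diagonal entries of $S$ exactly. The diagonal generators $\sigma_3$ on all pairs capture $\sum_{i<j}(\operatorname{Im}S_{ii}-\operatorname{Im}S_{jj})^2$. By the identity $\sum_{i<j}(s_i-s_j)^2=d\|s\|^2-(\sum s_i)^2$, this is at least a constant times $\|\operatorname{diag}S\|^2$ minus a trace term $|\operatorname{tr}S|^2/d$. Altogether it suffices to show
\[
\|S\|_F^2-\tfrac{1}{d}\abs{\operatorname{tr} S}^2\ \ge\ \tfrac38\,\|X-Y\|_F^2 .
\]

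\textbf{Main obstacle: this last inequality.} I would write $Y=X+H$ and decompose $H=XA+X_\perp B$. The first-order part of $S$ is $\tfrac12(XH^\dagger-HX^\dagger)$, whose squared norm is roughly $\|A_{\mathrm{skew}}\|^2+\tfrac12\|B\|^2$. The Hermitian part of $A$ is second order, since $X^\dagger H+H^\dagger X=-H^\dagger H$.

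To avoid pure linearization I would instead use exact identities: $\|X-Y\|_F^2=2k-2\operatorname{Re}\operatorname{tr}(X^\dagger Y)$, and $\|S\|_F^2$ computed from $\|XY^\dagger\|_F^2=k$ and $\operatorname{Re}\operatorname{tr}\bigl((XY^\dagger)^2\bigr)$. The hypothesis $\|X-Y\|_F\le1$ is then used to absorb the quadratic error terms, which is where the constant $3/8$ (hence $3/16$) comes from.

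The trace term $\operatorname{tr}S=\i\operatorname{Im}\operatorname{tr}(X^\dagger Y)$ is the part invisible to $\sualg(d)$. Here I expect $k\le d/2$ to enter: $XX^\dagger$ is a rank-$k$ projection, so the diagonal mass satisfies $\|\operatorname{diag}S\|_1\le\ldots$, and I would try to bound $\abs{\operatorname{tr}S}^2/d\le (k/d)\,\|S\|_F^2\le\tfrac12\|S\|_F^2$ via Cauchy--Schwarz on the rank-$\le 2k$ matrix $S$. Carrying out this trace control together with the second-order bookkeeping is the step I expect to be delicate.
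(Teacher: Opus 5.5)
Your coupling, the monotonicity argument, the per-pair gain and the reduction to the anti-Hermitian part $S$ of $XY^\dagger$ (which is $-P$ in the paper's notation) match the paper. Your bound $\norm{a^{(\ell)}}_2\le\norm{X_\ell}_F\norm{Y_\ell}_F\le2\Delta$ even gives a per-pair gain of $\frac{1}{2\Delta}\sum_{z\ge1}(a^{(\ell)}_z)^2$, twice what the paper uses. The gap is the inequality you call the main obstacle. It is the real content of the lemma (the paper's \cref{clm:pair_observation}), you do not prove it, and the route you sketch fails as written.

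\textbf{The trace bound.} Cauchy--Schwarz over the nonzero eigenvalues of the rank-$\le2k$ matrix $S$ gives only $\abs{\tr{S}}^2\le2k\norm{S}_F^2$, i.e.\ $\abs{\tr{S}}^2/d\le(2k/d)\norm{S}_F^2$. This is vacuous at $k=d/2$. The version with $k$ is true, but only because $\tr{S}=\tr{X^\dagger SX}$ involves just the $k\times k$ compression of $S$.

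\textbf{Your target constant.} Even granting $\abs{\tr{S}}^2/d\le\frac12\norm{S}_F^2$, reaching your target $\frac38\norm{X-Y}_F^2$ would require $\norm{S}_F^2\ge\frac34\norm{X-Y}_F^2$, which is false. For $k=1$, $X=\mathbf e_1$, $Y=\cos\theta\,\mathbf e_1+\sin\theta\,\mathbf e_2$, one has $\norm{S}_F^2=\frac12\sin^2\theta$ and $\norm{X-Y}_F^2=4\sin^2(\theta/2)$, a ratio of $\frac12\cos^2(\theta/2)\le\frac12$. The trace penalty must be charged only against the part of $S$ that can carry trace.

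\textbf{The missing idea.} As in the paper, you need to separate that part. With $K=Y-X$, split $X^\dagger K$ into its Hermitian part $-\frac12K^\dagger K$ (second order) and its skew part $A$. Then
$P=XAX^\dagger+\frac12\bigl((I_d-XX^\dagger)KX^\dagger-XK^\dagger(I_d-XX^\dagger)\bigr)$
has three mutually Frobenius-orthogonal pieces, the last two traceless. Hence $\norm{P}_F^2=\norm{A}_F^2+\frac12\norm{(I_d-XX^\dagger)K}_F^2$ and $\tr{P}=\tr{A}$. Using $\abs{\tr{A}}^2\le k\norm{A}_F^2$, $k\le d/2$, and $\norm{K}_F^2=\frac14\norm{K^\dagger K}_F^2+\norm{A}_F^2+\norm{(I_d-XX^\dagger)K}_F^2$, this gives
$\norm{P}_F^2-\abs{\tr{P}}^2/d\ge\frac12\bigl(\norm{K}_F^2-\frac14\norm{K}_F^4\bigr)\ge\frac38\norm{K}_F^2$,
where $\norm{K}_F\le1$ absorbs the Hermitian part.

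\textbf{An alternative completion.} Your ``exact identities'' idea can also be finished, via eigenvalues. Let $x_i+\i y_i$ ($i\in[k]$) be the eigenvalues of the contraction $N=X^\dagger Y$. Then $\norm{S}_F^2=\frac12\sum_i(1-x_i^2+y_i^2)$ and $\abs{\tr{S}}^2=(\sum_iy_i)^2\le k\sum_iy_i^2$. Also $\norm{X-Y}_F^2=2\sum_i(1-x_i)$ with $0\le1-x_i\le\frac12$. Together these yield $\frac38$ immediately.

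\textbf{Bookkeeping.} Your reduction actually gives at least $2\bigl(\norm{S}_F^2-\abs{\tr{S}}^2/d\bigr)$, because the off-diagonal generators contribute $4\abs{S_{ij}}^2$ per pair. With your sharper per-pair constant, a lower bound of $\frac{3}{16}\norm{X-Y}_F^2$ would already suffice. The correct trace bound $\abs{\tr{S}}^2/d\le\frac{k}{d}\norm{S}_F^2$ plus $\norm{S}_F^2\ge\frac38\norm{X-Y}_F^2$ would then do, but the latter still needs one of the arguments above.

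Some argument of this kind has to be supplied. As it stands, the proposal stops exactly where the substance of the lemma lies.
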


\begin{proof}
We construct the coupling as follows:
\begin{itemize}
    \item Sample a uniformly random perfect matching $M=\{(i_1,j_1),\dots,(i_m,j_m)\}$ of $[d]$.
    \item For each $\ell\in[m]$,
    let \(X_\ell\) and \(Y_\ell\) denote the \(2\times k\) submatrices
    obtained by restricting \(X\) and \(Y\), respectively,
    to rows \(i_\ell\) and \(j_\ell\).
    Recall the four matrices \(\sigma_0,\dots,\sigma_3\)
    defined in \cref{eq:basis_for_su2},
    and set $p_\ell\coloneqq(p_{\ell,0},\dots,p_{\ell,3})$
    where $p_{\ell,z} = \re\!\br{\tr{X_\ell^\dagger\sigma_zY_\ell}}$
    for $z\in\{0,1,2,3\}$. 
    If $p_\ell\neq0$,
    set $h_\ell = \sum_{z=0}^{3} \frac{p_{\ell,z}}{\norm{p_\ell}_2}\sigma_z$;
    if $p_\ell=0$, set $h_\ell=\sigma_0$.
    \item Sample $m$ independent Haar random $\su(2)$ matrices
    $R_1,\dots,R_m$. For each $\ell\in[m]$, view $R_\ell$ and $h_\ell$
    as elements of $SU(d)$ that act on the two-dimensional plane spanned
    by $\mathbf e_{i_\ell}$ and $\mathbf e_{j_\ell}$ and fix all other coordinates.
    \item Set
    \(
        X'
        =
        \br{\prod_{\ell=1}^{m}R_\ell}\cdot X
    \)
    and
    \(
        Y'
        =
        \br{\prod_{\ell=1}^{m}(R_\ell h_\ell)}\cdot Y
    \).
\end{itemize}
This is a valid coupling since 
$R_\ell h_\ell$ is Haar random by the right invariance of Haar measure.

We first show that this coupling does not increase the Frobenius distance. 
Note that
\begin{align*}
    \norm{X-Y}_F^2-\norm{X'-Y'}_F^2
    ~&=~
    \sum_{\ell = 1}^{m}
    \br{\norm{X_\ell-Y_\ell}_F^2-\norm{R_\ell X_\ell-R_\ell h_\ell Y_\ell}_F^2} \\
    ~&=~
    \sum_{\ell = 1}^{m}
    \br{\norm{X_\ell-Y_\ell}_F^2-\norm{X_\ell-h_\ell Y_\ell}_F^2} \enspace.
\end{align*}
Using the fact that
$\norm{A - B}_F^2 = \norm{A}_F^2 + \norm{B}_F^2 -2\cdot\re\!\br{\tr{A^\dagger B}}$,
we have
\begin{align}
    \norm{X-Y}_F^2-\norm{X'-Y'}_F^2
    ~&=~
    2\cdot\sum_{\ell = 1}^{m}
    \br{\textstyle\re\!\br{\tr{X_\ell^\dagger h_\ell Y_\ell}} - \re\!\br{\tr{X_\ell^\dagger Y_\ell}}} \nonumber\\
    ~&=~
    2\cdot\sum_{\ell = 1}^{m}
    \br{\sum_{z=0}^{3} \frac{p_{\ell,z}}{\norm{p_\ell}_2} \cdot
    {\textstyle\re\!\br{\tr{X_\ell^\dagger \sigma_z Y_\ell}} - \re\!\br{\tr{X_\ell^\dagger Y_\ell}}}} \nonumber\\
    ~&=~
    2\cdot\sum_{\ell = 1}^{m}
    \br{\norm{p_\ell}_2 - p_{\ell,0}} \enspace. \label{eq:contraction_of_frobenius_distance}
\end{align}
The second equality assumes $p_\ell\neq0$ for every $\ell$.
If $p_\ell=0$, then $h_\ell=I_2$ and
$\re\!\br{\tr{X_\ell^\dagger h_\ell Y_\ell}}
=p_{\ell,0}=\|p_\ell\|_2=0$,
so the final equality remains valid.
Since $\norm{p_\ell}_2 - p_{\ell,0} \geq 0$,
$\norm{X'-Y'}_F \leq \norm{X-Y}_F$ holds as desired.

We next show that the expected decrease in the squared Frobenius distance
is bounded below by a positive fraction of the original distance.
Note that if $\norm{p_\ell}_2 + p_{\ell,0} > 0$, we have
\begin{align}\label{eq:lower_bound_on_norm_p_minus_p0}
    \norm{p_\ell}_2-p_{\ell,0}
    ~=~
    \frac{\norm{p_\ell}_2^2-p_{\ell,0}^2}
         {\norm{p_\ell}_2+p_{\ell,0}} 
    ~=~
    \frac{ \sum_{z=1}^{3}p_{\ell,z}^2}
         {\norm{p_\ell}_2+p_{\ell,0}} 
    ~\geq~
    \frac{1}{8\Delta}\cdot
    \sum_{z=1}^{3}p_{\ell,z}^2\enspace,
\end{align}
where the last inequality uses the fact that
$\norm{p_\ell}_2+p_{\ell,0}\leq 2\norm{p_\ell}_2$
and that $\abs{p_{\ell,z}}\leq \norm{X_\ell}_F \cdot \norm{Y_\ell}_F \leq 2\Delta$ for all $z\in\{1,2,3\}$ by the Cauchy--Schwarz inequality.
If $\norm{p_\ell}_2+p_{\ell,0}=0$,
then $p_{\ell,z}=0$ for $z\in\{1,2,3\}$,
so \cref{eq:lower_bound_on_norm_p_minus_p0} holds trivially.
It remains to lower bound the sum on the right-hand side over all pairs.
We use the following claim which is proved in \cref{app:pair_observation}.
\begin{restatable}{claim}{pairObservationClaim}
    \label{clm:pair_observation}
    Suppose that $k\leq d/2$ and $X,Y\in\stframe{d}{k}$ satisfy
    $\norm{X-Y}_F\leq1$. For every $1\leq i<j\leq d$, let $X_{(i,j)}$
    and $Y_{(i,j)}$ denote the $2\times k$ submatrices obtained by
    restricting $X$ and $Y$ to rows $i$ and $j$, and define
    \(
        p_{(i,j),z}
        =
        \re\!\br{\tr{X_{(i,j)}^\dagger\sigma_zY_{(i,j)}}}
    \)
    for $z\in\{0,1,2,3\}$. Then, we have
    \[
        \sum_{i<j}\sum_{z=1}^{3}p_{(i,j),z}^2
        ~\geq~
        \frac34\cdot\norm{X-Y}_F^2\enspace.
    \]
\end{restatable}

Combining \cref{eq:contraction_of_frobenius_distance,eq:lower_bound_on_norm_p_minus_p0}, we have
\begin{align*}
    \norm{X-Y}_F^2-\norm{X'-Y'}_F^2
    ~&\geq~
    \frac{1}{4\Delta}\cdot
    \sum_{\ell = 1}^{m}\sum_{z=1}^{3}p_{\ell,z}^2 \enspace.
\end{align*}
By \cref{clm:pair_observation} and the fact that every fixed pair
$(i,j)$ belongs to a uniformly random perfect matching with
probability $1/(d-1)$, we have
\begin{align}
    \expect{}{
        \norm{X-Y}_F^2-
        \norm{X'-Y'}_F^2
    }
    ~&\geq~
    \frac{1}{4\Delta}\cdot
    \expect{}{
        \sum_{(i,j)\in M}\sum_{z=1}^{3}p_{(i,j),z}^2
    } \nonumber \\
    ~&=~
    \frac{1}{4\Delta(d-1)}\cdot
    \sum_{i<j}\sum_{z=1}^{3}p_{(i,j),z}^2 \nonumber\\
    ~&\geq~
    \frac{3}{16\Delta(d-1)}\cdot
    \norm{X-Y}_F^2\enspace. \label{eq:expected_contraction}
\end{align}
Equivalently,
\begin{align*}
    \expect{}{\norm{X'-Y'}_F^2}
    ~&\leq~
    \br{1-\frac{3}{16\Delta(d-1)}}\cdot
    \norm{X-Y}_F^2\enspace. \qedhere
\end{align*}
\end{proof}

The estimate above becomes stronger when the
maximum row norm is small.
In particular, a squared row-norm bound of order
$(k+\log d)/d$ gives a one-step contraction rate of order
$1/(k+\log d)$.
The following lemma provides the row-norm control needed
to obtain this rate.

\begin{lemma}
    \label{lem:maximum_row_norm}
    Let $(X_t)_{t\geq0}$ be the parallel Kac's walk on $\stframe{d}{k}$
    started from an arbitrary initial state $X_0\in\stframe{d}{k}$.
    For every $\delta\in(0,1)$ and every integer
    \(
        t\geq 30\log\br{2d/\delta}
    \),
    we have
    \[
        \prob{}{
            \max_{i\in[d]}\norm{\mathbf e_i^\dagger X_t}_2^2
            \geq
            4\cdot\frac{k+\log\br{d/\delta}}{d}
        }
        \leq
        \delta\enspace.
    \]
\end{lemma}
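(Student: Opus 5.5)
We will prove \cref{lem:maximum_row_norm} by a union bound over the $d$ rows together with a moment-generating-function recursion for each row. Fix a row index $i$ and set $w_t\coloneqq X_t^\dagger\mathbf e_i\in\C^k$. The squared row norm is then $\norm{\mathbf e_i^\dagger X_t}_2^2=\norm{P\,U_t^\dagger\mathbf e_i}_2^2$, where $U_t$ is the accumulated product of rotations and $P=X_0X_0^\dagger$ is a rank-$k$ projection. Instead of following the row directly, we will follow the whole distribution of squared row norms $a_t(j)\coloneqq\norm{\mathbf e_j^\dagger X_t}_2^2$, $j\in[d]$, which satisfy $\sum_j a_t(j)=k$ and $a_t(j)\le1$.

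The key structural fact is the following. When rows $j,j'$ are matched and rotated by a Haar $R\in\su(2)$, the new pair of rows is $R$ applied to the $2\times k$ block. If $G$ denotes the Gram matrix of that block, the new norm of row $j$ is $u^\dagger Gu$, where $u$ is a uniformly random unit vector in $\C^2$ (the first row of $R$). Its conditional law is that of $\lambda_1\beta+\lambda_2(1-\beta)$, with $\beta\sim\mathrm{Unif}[0,1]$ (a $\btdist{1,1}$ variable) and $\lambda_1+\lambda_2=a(j)+a(j')$. Each new norm therefore lies between $0$ and $a(j)+a(j')$, and for convex $\phi$ it is dominated in convex order by $(a(j)+a(j'))\beta$, the case where $G$ has rank one.

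We use the potential $\Phi_t\coloneqq\sum_j \e^{s\,d\,a_t(j)}$ with a constant $s$, say $s=1/2$. Conditioning on the matching and using the convex-order domination, each matched pair contributes at most $2\,\E_\beta[\e^{sd(a+a')\beta}]=2\,\frac{\e^{sd(a+a')}-1}{sd(a+a')}$ to $\E[\Phi_{t+1}]$. We then show that this is at most $\gamma(\e^{sda}+\e^{sda'})+C$ for some $\gamma<1$ and an additive constant $C$. This needs an elementary inequality of the type $\frac{2(\e^{x+y}-1)}{x+y}\le\gamma(\e^x+\e^y)+C$, and it is where we must be careful: if one of $x,y$ is near $0$, the left side is $\approx 2\e^x/x$, which is small compared with $\e^x$ when $x$ is large. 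If both are large, $\e^{x+y}/(x+y)$ can exceed $\e^x+\e^y$. The contraction therefore has to come from averaging over the random matching. Since $\sum_j a(j)=k$, at most $O(k/\text{level})$ rows carry large mass, so a uniformly random partner $j'$ has $a(j')\le 2k/d$ with probability at least $1/2$. This suggests replacing the exponential potential by a truncated or level-based argument.

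The cleaner route we will actually pursue is to bound a single coordinate's moment generating function directly. Write $a_t(i)=\norm{P v_t}^2$ with $v_t\coloneqq U_t^\dagger\mathbf e_i$ a unit vector that performs its own (dual) parallel Kac walk. After the single-vector dispersion of \cite{LQSY+26}, $O(\log(d/\delta))$ steps suffice for the coordinates of $v_t$ to be nearly uniform. Concretely, each coordinate squared modulus is dominated by a product of $t$ independent $\btdist{1,1}$-like splittings, which is near $1/d$ after $30\log(2d/\delta)$ rounds. Given this, $\norm{Pv_t}^2$ for a near-spread $v_t$ concentrates like a $\chi^2_{2k}/(2d)$ variable. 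A Hanson--Wright or MGF bound then gives tail $\le \e^{-(k+\log(d/\delta))}\cdot\e^{k}\le\delta/d$ at threshold $4(k+\log(d/\delta))/d$, and a union bound over $i$ finishes the argument.

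The main obstacle is justifying this chi-square comparison rigorously for the random vector $v_t$, which is not exactly Haar. We plan to handle it by comparing moments: we will show $\E[\e^{\theta d\norm{Pv_t}^2}]\le (1-\theta)^{-k}+d^{-2}\e^{\theta d}$ for $\theta=1/2$, by an inductive MGF recursion over steps that uses the $\beta$-splitting and the independence of rotations across pairs. This recursion is the technical heart of the proof.
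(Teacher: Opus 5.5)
Your reduction matches the paper's: $\norm{\mathbf e_i^\dagger X_t}_2^2=\norm{Pv_t}_2^2$ with $P=X_0X_0^\dagger$ and $v_t=G_0^\dagger\cdots G_{t-1}^\dagger\mathbf e_i$ a single-vector parallel Kac walk from $\mathbf e_i$. You then need a per-row tail of $\delta/d$ and a union bound over rows. But the per-row tail is the entire content of the lemma, and you only announce it (``this recursion is the technical heart''); the route you sketch cannot deliver it.

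First, coordinate delocalization of $v_t$ is the wrong property. Flat moduli $\abs{v_t(j)}^2\approx1/d$ say nothing about $\norm{Pv_t}_2^2$ for an arbitrary rank-$k$ projection, since a flat vector can lie in the range of $P$. The coordinates of $v_t$ are not independent, so Hanson--Wright does not apply. Also, $\prod_s\beta_s$ is a \emph{lower} bound on $\abs{v_t(i)}^2$, not an upper one.

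Second, and decisively, exponential moments at scale $\theta d$ are exponentially large at this time scale. Take $k=1$, $X_0=\mathbf e_i$. Each step gives $\abs{v_{s+1}(i)}^2=(\abs{v_s(i)}^2+\abs{v_s(j)}^2)\beta_s\ge\abs{v_s(i)}^2\beta_s$ with i.i.d.\ uniform $\beta_s$. Hence $\Pr[\norm{Pv_t}_2^2\ge1/2]\ge\Pr[\prod_{s<t}\beta_s\ge1/2]\ge\tfrac12(\ln2)^t/t!$, which is only quasi-polynomially small for $t=\Theta(\log(d/\delta))$. So $\E[\e^{\theta d\norm{Pv_t}_2^2}]\ge\e^{\theta d/2}\cdot\tfrac12(\ln2)^t/t!$ is exponentially large in $d$. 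No MGF recursion can give anything like $(1-\theta)^{-k}+o(1)$, and the same obstruction kills your initial potential $\sum_j\e^{sda_t(j)}$. The bound you do propose, with additive term $d^{-2}\e^{\theta d}$, is vacuous. After Markov at threshold $4(k+\log(d/\delta))/d$ with $\theta=1/2$, that term contributes $d^{-2}\e^{d/2-2(k+\log(d/\delta))}$, which is enormous when $k\ll d$.

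The missing idea is to transfer the Haar tail through a $1$-Lipschitz comparison with polynomially small error, not through exponential moments. You cite the right single-vector result but use it for the wrong conclusion. The paper applies the Wasserstein bound of \cite{LQSY+26} (Theorem 4.8): for $t\ge30\log(2d/\delta)$ it couples $v_t$ with a Haar vector $U$ so that $\E\norm{v_t-U}_2\le2^{-t/10+\log d}\le\delta^3/(8d^2)$. Markov then gives $\norm{v_t-U}_2<\sqrt{(k+\log(d/\delta))/(16d)}$ except with probability $\delta/(2d)$. For the Haar vector, $\norm{PU}_2^2\sim\btdist{k,d-k}$. Its $q$-th moment with $q=\lceil\log(2d/\delta)\rceil$ (a polynomial moment) gives $\norm{PU}_2^2<3(k+\log(d/\delta))/d$ except with probability $\delta/(2d)$. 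Since $\norm{Pv_t}_2\le\norm{PU}_2+\norm{v_t-U}_2$ and $(\sqrt3+1/4)^2<4$, the row bound follows. A total variation bound of order $\delta/d$ for the single-vector walk would also suffice. What fails is transferring an exponential moment, which would require an error of order $\e^{-\Omega(d)}$.
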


\begin{proof}
For each $s\in\{0,\dots,t-1\}$, let $G_s\in SU(d)$ be the random update
matrix at time $s$, so that $X_{s+1}=G_sX_s$ and $X_t=G_{t-1}\cdots G_0X_0$.
Since $X_0^\dagger X_0=I_k$, we have that for any fixed $i\in[d]$,
\begin{align}
    \norm{\mathbf e_i^\dagger X_t}_2^2
    =
    \norm{X_0^\dagger G_0^\dagger\cdots G_{t-1}^\dagger \mathbf e_i}_2^2
    =
    \norm{X_0X_0^\dagger G_0^\dagger\cdots G_{t-1}^\dagger \mathbf e_i}_2^2
    \enspace. \label{eq:parallel_kac_walk_row_norm}
\end{align}
Note that the vector $Z_t\coloneqq G_0^\dagger\cdots G_{t-1}^\dagger \mathbf e_i$
has the same distribution as the $t$-step parallel Kac's walk on the complex unit sphere
\cite[Section 4.2]{LQSY+26}
started from the unit vector $\mathbf e_i$.
Applying \cite[Theorem 4.8]{LQSY+26}\footnote{Although Theorem 4.8 is stated for sufficiently large dimensions,
its proof does not use this assumption, so the result holds for every even
dimension $d\geq2$.}
with $c=t/(10\log d)-1$,
there exists a coupling of $Z_t$ with a Haar random vector $U$ such that
\[
    \expect{}{\norm{Z_t-U}_2}
    ~\leq~
    2^{-t/10+\log d}
    ~\leq~
    \frac{\delta^3}{8d^2}
    \enspace.
\]
Using Markov's inequality yields
\begin{align}
    \prob{}{\norm{Z_t-U}_2\geq
        \sqrt{\frac{k+\log\br{d/\delta}}{16d}}}
    ~&\leq~
    \sqrt{\frac{16d}{k+\log\br{d/\delta}}}
    \cdot
    \frac{\delta^3 }{8d^2}
    ~\leq~
    \frac{\delta}{2d}\enspace.
    \label{eq:parallel_vector_coupling_error}
\end{align}

We now control the projection of the Haar random vector $U$ onto the
column space of $X_0$.
Since $(X_0X_0^\dagger)^2 = X_0X_0^\dagger$ and $ (X_0X_0^\dagger)^\dagger = X_0X_0^\dagger $,
we have that $X_0X_0^\dagger$ is a rank-$k$ orthogonal projection.
Hence, by \cite[Appendix E.4]{RSS22},
we have that $\norm{X_0X_0^\dagger U}_2^2$ follows the $\btdist{k,d-k}$ distribution.
Let $q=\lceil\log(2d/\delta)\rceil$ and the moment formula for the beta distribution gives
\[
    \expect{}{\norm{X_0X_0^\dagger U}_2^{2q}}
    ~=~
    \prod_{j=0}^{q-1}\frac{k+j}{d+j}
    ~\leq~
    \br{\frac{k+q-1}{d}}^q\enspace.
\]
Again, using Markov's inequality therefore gives
\begin{align}
    \prob{}{\norm{X_0X_0^\dagger U}_2^2\geq
        \frac{3\br{k+\log\br{d/\delta}}}{d}}
    ~\leq~
    \br{\frac{k+q-1}{3\br{k+\log\br{d/\delta}}}}^q 
    ~\leq~
    2^{-q} 
    ~\leq~
    \frac{\delta}{2d}\enspace,
    \label{eq:haar_projection_tail}
\end{align}
where the second inequality holds
because $ k+q-1 \leq k + \log(d/\delta) + 1 \leq \frac{3}{2}(k+\log(d/\delta)) $.

Suppose that neither of the two events in
\cref{eq:parallel_vector_coupling_error,eq:haar_projection_tail}
occurs.
By \cref{eq:parallel_kac_walk_row_norm} and the triangle inequality, we have
\begin{align*}
    \norm{\mathbf e_i^\dagger X_t}_2^2
    ~=~
    \norm{X_0X_0^\dagger Z_t}_2^2
    ~\leq~
    \br{\norm{X_0X_0^\dagger U}_2+\norm{Z_t-U}_2}^2 
    &<~
    \frac{4\br{k+\log\br{d/\delta}}}{d}\enspace.
\end{align*}
Therefore, for a fixed row $i\in[d]$, we have
\[
	\prob{}{
        \norm{\mathbf e_i^\dagger X_t}_2^2 \geq \frac{4\br{k+\log\br{d/\delta}}}{d}
    }
    ~\leq~
    \frac{\delta}{d}\enspace.
\]
The lemma then follows by taking a union bound over all $i\in[d]$.
\end{proof}

\subsubsection{Multi-Step Wasserstein Contraction}

We now combine the one-step contraction and row-norm estimates
to prove \cref{lem:low_rank_contraction},
which is restated below for convenience.

\lowrankcontraction*

\begin{proof}
Apply \cref{lem:local_to_global_wasserstein_contraction}
to the $T_0$-step transition kernel.
It suffices to prove the desired contraction
when $D(X,Y)\leq1$.
Fix such $X,Y\in\stframe{d}{k}$,
and we use the coupling in \cref{lem:frobenius_contraction}
for $T_0$ steps of the parallel Kac's walk.
Let
\(
    t_0
    =
    \left\lceil30\log(2\cdot d^7)\right\rceil.
\)
For every $t\in\{t_0,\dots,T_0-1\}$, define
\[
    \mathcal{G}_t
    =
    \left\{
        \max_{i\in[d]}\norm{\mathbf e_i^\dagger X_t}_2^2
        ~\leq~ 4\cdot \frac{k+7\log d}{d}
        \quad\wedge\quad
        \max_{i\in[d]}\norm{\mathbf e_i^\dagger Y_t}_2^2
        ~\leq~ 4\cdot \frac{k+7\log d}{d}
    \right\}\enspace.
\]
Applying \cref{lem:maximum_row_norm} with $\delta=d^{-6}$
and taking a union bound gives
\(
    \prob{}{\mathcal{G}_t}
    \geq
    1 - 2\cdot d^{-6}
\).

Since the coupling does not increase the Frobenius distance,
using \cref{lem:frobenius_riemannian_distance_comparison},
we have
\(
    \norm{X_t-Y_t}_F \leq \norm{X-Y}_F \leq D(X,Y) \leq 1
\)
for all $t\in\{0,\dots,T_0\}$.
Therefore, 
for $t \in \{t_0,\dots,T_0-1\}$,
on the event $\mathcal{G}_t$,
\cref{lem:frobenius_contraction} gives
\begin{align*}
    \expect{}{\left.\norm{X_{t+1}-Y_{t+1}}_F^2\,\right\vert X_t,Y_t}
    ~&\leq
    \left(1-\frac{3d}{64(k+7\log d)(d-1)}\right) \cdot \norm{X_{t}-Y_{t}}_F^2 \\
    ~&\leq
     \left(1-\frac{1}{160(k+\log d)}\right)\cdot \norm{X_{t}-Y_{t}}_F^2 \enspace.
\end{align*}
On the complement of $\mathcal{G}_t$,
the coupling does not increase the Frobenius distance.
Therefore,
\begin{align*}
    &\expect{}{\norm{X_{t+1}-Y_{t+1}}_F^2} \\
    &\qquad\leq~
    \left(1-\frac{1}{160(k+\log d)}\right)\cdot
    \expect{}{
        \mathbf{1}_{\mathcal{G}_t} \cdot
        \norm{X_t-Y_t}_F^2
    }  
    ~+~
    \expect{}{
        \mathbf{1}_{\mathcal{G}_t^{\mathrm c}} \cdot
        \norm{X_t-Y_t}_F^2
    } \\
    &\qquad=~
    \left(1-\frac{1}{160(k+\log d)}\right)\cdot
    \expect{}{\norm{X_t-Y_t}_F^2}
    ~+~
    \frac{1}{160(k+\log d)}\cdot
    \expect{}{
        \mathbf{1}_{\mathcal{G}_t^{\mathrm c}} \cdot
        \norm{X_t-Y_t}_F^2
    } \\
    &\qquad\leq~
    \left(1-\frac{1}{160(k+\log d)}\right)\cdot
    \expect{}{\norm{X_t-Y_t}_F^2}
    +
    \frac{2\cdot d^{-6}}{160(k+\log d)}\cdot
    \norm{X-Y}_F^2
    \enspace.
\end{align*}
Iterating this bound yields
\begin{align*}
    \expect{}{\norm{X_{T_0}-Y_{T_0}}_F^2}
    ~&\leq~
    \left(1-\frac{1}{160(k+\log d)}\right)^{T_0-t_0} \cdot~
    \expect{}{\norm{X_{t_0}-Y_{t_0}}_F^2} \\
    &\qquad+~
    \frac{2\cdot d^{-6}}{160(k+\log d)}\cdot
    \norm{X-Y}_F^2 ~\cdot
    \sum_{j=0}^{T_0-t_0-1}
    \left(1-\frac{1}{160(k+\log d)}\right)^j \\
    ~&\leq~
    \left[
        \left(1-\frac{1}{160(k+\log d)}\right)^{T_0-t_0}
        +~2 \cdot d^{-6}
    \right]\norm{X-Y}_F^2 \\
    ~&\leq~
    \left[
        \e^{-\frac{T_0-t_0}{160(k+\log d)}}
        +2 \cdot d^{-6}
    \right]\norm{X-Y}_F^2
    \enspace.
\end{align*}
Since
\(
    T_0-t_0
    \geq
    800(k+\log d)\log d,
\)
we have
\(
    \frac{T_0-t_0}{160(k+\log d)}
    \geq
    5\log d
    \geq
    5\ln d
\).
Therefore,
\begin{align}
    \expect{}{\norm{X_{T_0}-Y_{T_0}}_F^2}
    ~&\leq~
    \left(d^{-5}+2 \cdot d^{-6}\right)\norm{X-Y}_F^2
    ~\leq~
    \frac{1}{4\cdot d^2}\norm{X-Y}_F^2 \enspace.
    \label{eq:contraction_of_frobenius_distance_after_T0_steps}
\end{align}
Applying \cref{lem:frobenius_riemannian_distance_comparison}
and Jensen's inequality, we then obtain
\begin{align*}
    \expect{}{D(X_{T_0},Y_{T_0})}
    ~&\leq~
    \sqrt{
        \expect{}{\norm{X_{T_0}-Y_{T_0}}_F^2}
    }
    +
    2\expect{}{\norm{X_{T_0}-Y_{T_0}}_F^2} \\
    ~&\leq~
    \frac{1}{2d} \cdot \norm{X-Y}_F
    +
    \frac{1}{2\cdot d^2} \cdot \norm{X-Y}_F^2 
    ~\leq~
    \frac{1}{d} \cdot D(X,Y)
    \enspace,
\end{align*}
where
the second inequality follows from \cref{eq:contraction_of_frobenius_distance_after_T0_steps},
and the last inequality uses
\cref{lem:frobenius_riemannian_distance_comparison}
and the assumption that $D(X,Y)\leq 1$.
Therefore, we have
\[
    W_D\!\left(
        \kackernel^{T_0}(X,\cdot),
        \kackernel^{T_0}(Y,\cdot)
    \right)
    ~\leq~
    \frac{1}{d}\cdot D(X,Y)
    \enspace.
\]
Applying
\cref{lem:local_to_global_wasserstein_contraction}
completes the proof.
\end{proof}

\subsection{\texorpdfstring{The Case $d/2<k\leq d$}{The Case d/2 < k ≤ d}}
\label{subsec:wasserstein_high_rank}
For $d/2<k\leq d$, this subsection establishes a Wasserstein convergence bound
of $O(d\log(d/\varepsilon))$ steps.
Since $d<2k$, this suffices for the dependence on $k$
in \cref{thm:wasserstein_mixing}.
We first prove one-step contraction on $\su(d)$ and then
transfer the result to $\stframe{d}{k}$ by projecting onto the first $k$ columns.

\subsubsection{\texorpdfstring{Wasserstein Contraction on $\su(d)$}{Wasserstein Contraction on SU(d)}}

We establish one-step Wasserstein contraction on $\su(d)$,
using a coupling similar to that in \cref{lem:frobenius_contraction}.
Since that lemma requires $k\leq d/2$, we give a separate
contraction proof for $\su(d)$.

\begin{lemma}
    \label{lem:su_parallel_contraction}
    Let $d\geq2$ be even and $\kackernel$ be the transition kernel of the parallel Kac's walk on $\su(d)$.
    For every $X,Y\in\su(d)$,
    \[
        W_D\!\left(
            \kackernel(X,\cdot),
            \kackernel(Y,\cdot)
        \right)
        ~\leq~
        \left(1-\frac{1}{4(d-1)}\right)\cdot
        D(X,Y)\enspace.
    \]
\end{lemma}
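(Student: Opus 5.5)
Use \cref{lem:local_to_global_wasserstein_contraction} on the compact geodesic space $(\su(d),D)$: it suffices to prove the contraction for $X,Y$ with $D(X,Y)\le\rho$, for some small $\rho>0$. Since $D$ and the Frobenius distance agree to first order (\cref{lem:frobenius_riemannian_distance_comparison}), it is enough to show that some coupling satisfies $\E[\|X'-Y'\|_F]\le(1-\tfrac{1}{2(d-1)})\|X-Y\|_F$ with $\|X'-Y'\|_F\le\|X-Y\|_F$. Indeed, the $O(\|X-Y\|_F^2)$ gap between $D$ and $\|\cdot\|_F$ is then absorbed into the slack between $1/(2(d-1))$ and $1/(4(d-1))$ once $\rho$ is small enough.

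I will use the same coupling as in \cref{lem:frobenius_contraction}: a shared matching, the optimal aligning $h_\ell\in\su(2)$ built from $p_\ell$, and updates $X'=\prod R_\ell X$, $Y'=\prod R_\ell h_\ell Y$. The computation \cref{eq:contraction_of_frobenius_distance} holds verbatim for $k=d$ and gives
\[
\|X-Y\|_F^2-\|X'-Y'\|_F^2=2\sum_\ell(\|p_\ell\|_2-p_{\ell,0}).
\]
The row-norm bound is now trivial, since every row of a unitary has norm $1$, so $\Delta=1$. \Cref{eq:lower_bound_on_norm_p_minus_p0} then gives $\|p_\ell\|_2-p_{\ell,0}\ge\frac{1}{8}\sum_{z\ge1}p_{\ell,z}^2$, and a sharper version is available for nearby $X,Y$: there $p_{\ell,0}\approx\|X_\ell\|_F^2=2$, so the denominator is about $4$.

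The main obstacle replaces \cref{clm:pair_observation}, which needed $k\le d/2$. I need a direct lower bound of the form $\sum_{i<j}\sum_{z=1}^3 p_{(i,j),z}^2\ge c\,\|X-Y\|_F^2$ on $\su(d)$, with $c$ roughly $2$. I would linearize by writing $Y=X\e^{A}$, or better $Y=\e^{B}X$ with $B\in\sualg(d)$ small; the left-multiplicative form is more natural because the walk acts from the left. Then $p_{(i,j),z}\approx\re\tr{X_{(i,j)}^\dagger\sigma_z(BX)_{(i,j)}}=\re\tr{\sigma_z (BXX^\dagger)_{(i,j),(i,j)}}$. Since $XX^\dagger=I$, this reduces to $\re\tr{\sigma_z^\dagger B_{\{i,j\}}}$, where $B_{\{i,j\}}$ is the $2\times2$ principal block of $B$. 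The elements $\sigma_1,\sigma_2,\sigma_3$ span $\sualg(2)$. So $\sum_{z\ge1}p_z^2$ equals the squared norm of the projection of $B_{\{i,j\}}$ onto $\sualg(2)$: twice the squared modulus of the off-diagonal entries, plus $(\im B_{ii}-\im B_{jj})^2/2$.

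Summing over pairs, each off-diagonal entry appears once. The diagonal part gives $\frac12\sum_{i<j}(b_i-b_j)^2=\frac d2\sum_i b_i^2$, using $\sum_i b_i=0$ because $\tr B=0$. This is where the condition $\det X=\det Y$ is used: it is what lets me take $B$ traceless. The total is therefore at least $\|B\|_F^2\approx\|X-Y\|_F^2$, up to constant factors, which I expect to close to roughly $c=2$ with care.

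Combining, $\E[\|X'-Y'\|_F^2]\le(1-\frac{c'}{d-1})\|X-Y\|_F^2+O(\|X-Y\|_F^3)$. Jensen's inequality then turns this into a bound on $\E\|X'-Y'\|_F$ with rate halved. For the nonlinear remainders I would rely on $h_\ell$ being at least as good as the first-order choice, and I would track constants so that the final rate is $1/(4(d-1))$.
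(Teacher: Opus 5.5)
Your proposal is essentially the paper's proof: the same local-to-global reduction, coupling and exact identity, the bound $\|p_\ell\|_2+p_{\ell,0}\le4$, a substitute pair-sum bound, then Jensen and the Frobenius--Riemannian comparison. Your linearization is exactly the paper's $P=\frac12(YX^\dagger-XY^\dagger)$, since $YX^\dagger=\e^{B}$ gives $P=\sinh B$ and $p_{(i,j),z}=\re\!\br{\tr{P_{\{i,j\}}\sigma_z}}$ exactly, so the paper handles the nonlinearity in the pair-sum bound through the eigenvalues $\i\sin\theta_a$ of $P$ and the mean value theorem (constant $\frac32$), and no comparison of $h_\ell$ with a first-order choice is needed.

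Two small fixes. First, since $\|\sigma_z\|_F^2=2$, $\sum_{z\ge1}p_{(i,j),z}^2=4|P_{ij}|^2+|P_{ii}-P_{jj}|^2$ is twice the squared projection norm you wrote; this factor is what gives $c=2$, whereas your count yields only $c=1$. Second, \cref{lem:frobenius_riemannian_distance_comparison} is stated for $\ugroup{d}$, so bounding the intrinsic distance on $\su(d)$ from above needs the same small-angle argument $\sum_a\theta_a=0$ that makes $B$ traceless.
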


\begin{proof}
By \cref{lem:local_to_global_wasserstein_contraction},
it suffices to consider $X,Y\in\su(d)$ satisfying
$D(X,Y)\leq\frac{1}{16(d-1)}$
and prove the desired contraction.
To this end, we use the same coupling in \cref{lem:frobenius_contraction}.
Given $X$ and $Y$, we sample a uniformly random perfect matching
and define $X_\ell$, $Y_\ell$, $p_\ell$, and $h_\ell$
exactly as in the proof of \cref{lem:frobenius_contraction}.
We then sample independent Haar random matrices in $\su(2)$,
denoted by $R_1,\dots,R_m$, and set
\(
    X'
    =
    \br{\prod_{\ell=1}^{m}R_\ell}\cdot X
\)
and
\(
    Y'
    =
    \br{\prod_{\ell=1}^{m}(R_\ell h_\ell)}\cdot Y
\).
As in the proof of \cref{lem:frobenius_contraction}, this is a valid
coupling because $R_\ell h_\ell$ is Haar random by the right
invariance of Haar measure.
We show that the coupling satisfies
\[
	\expect{}{ D(X',Y') }
    ~\leq~
    \left(1-\frac{1}{4(d-1)}\right)\cdot D(X,Y)\enspace.
\]

By \cref{eq:contraction_of_frobenius_distance}, we have
\begin{equation}
    \norm{X-Y}_F^2-\norm{X'-Y'}_F^2
    ~=~
    2\sum_{\ell=1}^m
    \left(\norm{p_\ell}_2-p_{\ell,0}\right) \enspace.
    \label{eq:su_exact_frobenius_decrease}
\end{equation}
Since $\norm{p_\ell}_2-p_{\ell,0}\geq0$, we have
$\norm{X'-Y'}_F\leq\norm{X-Y}_F$.
Since $X$ and $Y$ are unitaries,
$\norm{X_\ell}_F=\norm{Y_\ell}_F=\sqrt2$.
Thus, by the Cauchy--Schwarz inequality,
$ p_{\ell,0} \leq \norm{p_\ell}_2 = \re\!\br{\tr{X_\ell^\dagger h_\ell Y_\ell}} \leq \norm{X_\ell}_F\cdot\norm{h_\ell Y_\ell}_F \leq 2 $.
If $\norm{p_\ell}_2+p_{\ell,0}>0$, then, as in
\cref{eq:lower_bound_on_norm_p_minus_p0},
\begin{equation}
    \norm{p_\ell}_2-p_{\ell,0}
    ~=~
    \frac{\sum_{z=1}^3p_{\ell,z}^2}
         {\norm{p_\ell}_2+p_{\ell,0}}
    ~\geq~
    \frac14\sum_{z=1}^3p_{\ell,z}^2.
    \label{eq:su_p_decrease_lower_bound}
\end{equation}
If $\norm{p_\ell}_2+p_{\ell,0}=0$,
then $p_{\ell,z}=0$ for $z\in\{1,2,3\}$,
so \cref{eq:su_p_decrease_lower_bound} holds trivially.
It remains to lower bound the sum on the right-hand side over all pairs.
We use the following claim, which is proved in
\cref{app:su_pair_observation}.
\begin{restatable}{claim}{suPairObservationClaim}
    \label{clm:su_pair_observation}
    Let $d\geq2$ and suppose that $X,Y\in\su(d)$ satisfy
    $D(X,Y)\leq \frac{1}{16(d-1)}$. For every $1\leq i<j\leq d$, let
    $X_{(i,j)}$ and $Y_{(i,j)}$ denote the $2\times d$ submatrices
    obtained by restricting $X$ and $Y$ to rows $i$ and $j$, and define
    \(
        p_{(i,j),z}
        =
        \re\!\br{\tr{X_{(i,j)}^\dagger\sigma_zY_{(i,j)}}}
    \)
    for $z\in\{0,1,2,3\}$. Then, we have
    \[
        \sum_{i<j}\sum_{z=1}^3p_{(i,j),z}^2
        ~\geq~
        \frac32\cdot\norm{X-Y}_F^2\enspace.
    \]
\end{restatable}

Combining
\cref{eq:su_exact_frobenius_decrease,eq:su_p_decrease_lower_bound},
we have
\begin{align*}
    \norm{X-Y}_F^2-\norm{X'-Y'}_F^2
    ~&\geq~
    \frac12\sum_{\ell=1}^m\sum_{z=1}^3p_{\ell,z}^2\enspace.
\end{align*}
Similar to \cref{eq:expected_contraction},
by \cref{clm:su_pair_observation},
we have
\begin{align*}
    \expect{}{
        \norm{X-Y}_F^2-\norm{X'-Y'}_F^2
    }
    ~\geq~
    \frac{3}{4(d-1)}\cdot\norm{X-Y}_F^2\enspace.
\end{align*}
Equivalently,
\begin{equation}
    \expect{}{\norm{X'-Y'}_F^2}
    \leq
    \left(1-\frac{3}{4(d-1)}\right)
    \norm{X-Y}_F^2\enspace.
    \label{eq:su_expected_frobenius_decrease}
\end{equation}
Let $D_U$ denote the Riemannian distance on $\ugroup{d}$, and
let $e^{\mathrm{i}\theta_1},\dots,e^{\mathrm{i}\theta_d}$
be the eigenvalues of $(X')^\dagger Y'$, where $\theta_j\in(-\pi,\pi]$.
Using $|\theta|\leq(\pi/2)|1-e^{\mathrm{i}\theta}|$
for $\theta\in(-\pi,\pi]$ and the fact that the coupling
does not increase the Frobenius distance, we obtain
\[
    \left|\sum_{j=1}^d\theta_j\right|
    \leq\frac{\pi}{2}\sum_{j=1}^d
        |1-e^{\mathrm{i}\theta_j}|
    \leq\frac{\pi d}{2}
        \left(\sum_{j=1}^d
        |1-e^{\mathrm{i}\theta_j}|^2\right)^{1/2}
    = \frac{\pi d}{2}\|X'-Y'\|_F 
    \leq\frac{\pi d}{32(d-1)}<2\pi \enspace.
\]
Together with $\det((X')^\dagger Y')=1$, this gives
$\sum_j\theta_j=0$.
Thus, by \cite[Remark~1.6(a) and Theorem~5.1(a)]{PD25} and
\cref{lem:frobenius_riemannian_distance_comparison},
\[
    D(X',Y')=D_U(X',Y')
    \leq \|X'-Y'\|_F+2\|X'-Y'\|_F^2.
\]
Taking expectations and applying Jensen's inequality, we obtain 
\begin{align*}
    \expect{}{D(X',Y')}
    ~&\leq~
    \sqrt{\expect{}{\norm{X'-Y'}_F^2}}
    +2\expect{}{\norm{X'-Y'}_F^2} \\
    ~&\leq~
    \sqrt{1-\frac{3}{4(d-1)}}\cdot\norm{X-Y}_F
    +2\left(1-\frac{3}{4(d-1)}\right)\norm{X-Y}_F^2 \\
    ~&\leq~
    \left(
        \sqrt{1-\frac{3}{4(d-1)}}
        +\frac{1}{8(d-1)}
    \right)\cdot D(X,Y) \\
    ~&\leq~
    \left(1-\frac{1}{4(d-1)}\right)\cdot D(X,Y)\enspace,
\end{align*}
where the second inequality follows from
\cref{eq:su_expected_frobenius_decrease},
the third inequality uses
\cref{lem:frobenius_riemannian_distance_comparison} and
$D(X,Y)\leq \frac{1}{16(d-1)}$,
and the last inequality uses $\sqrt{1-x}\leq1-x/2$ for $x\in[0,1]$.
\end{proof}

\subsubsection{Projection to the Stiefel Manifold}

We use the contraction estimate on $\su(d)$ to obtain a convergence
bound on $\stframe{d}{k}$ by projecting onto the first $k$ columns.

\begin{theorem}
    \label{thm:high_rank_wasserstein_mixing}
    Let $d\geq2$ be even, $d/2<k\leq d$,
    and $\kackernel$ be the transition kernel of
    the parallel Kac's walk on $\stframe{d}{k}$.
    For every $\varepsilon\in(0,1)$, every integer
    \(
        t\geq
        \left\lceil4(d-1)\log\br{\frac{\pi\sqrt d}{\varepsilon}}\right\rceil
    \)
    and $X,Y\in\stframe{d}{k}$, with $\det X=\det Y$ when $k=d$,
    we have
    \(
        W_D\!\left(\kackernel^t(X,\cdot),\kackernel^t(Y,\cdot)\right)
        \leq\varepsilon
    \).
\end{theorem}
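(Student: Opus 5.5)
Given $X,Y\in\stframe{d}{k}$, I will complete each to a matrix in $\su(d)$ and run the $\su(d)$ result, \cref{lem:su_parallel_contraction}, on the completions. For $k<d$, extend the orthonormal columns of $X$ to a unitary $\tilde X\in\ugroup{d}$, and rescale the last column by a phase so that $\det\tilde X=1$; this is possible because $k<d$ leaves at least one free column. Build $\tilde Y$ from $Y$ in the same way. For $k=d$, take $\tilde X=X$ and $\tilde Y=Y$. Here the hypothesis $\det X=\det Y$ matters: the walk multiplies by $\su(d)$ matrices, so it preserves the determinant. After multiplying both by a common unit scalar $c$ with $c^d\det X=1$, the walk on $\{\det=\det X\}$ is conjugate to the walk on $\su(d)$, with Riemannian distances unchanged.

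Write $\pi_k(U)$ for the first $k$ columns of $U$. The walk commutes with $\pi_k$, since $\pi_k(GU)=G\pi_k(U)$ for $G$ the product of the $\su(2)$ blocks. Hence $\kackernel^t(X,\cdot)$ is the pushforward under $\pi_k$ of the $\su(d)$ walk's $t$-step law from $\tilde X$, and likewise for $Y$.

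Next I need $\pi_k$ to be $1$-Lipschitz from $(\su(d),D_{\su})$ to $(\stframe{d}{k},D)$. Let $\gamma$ be a geodesic in $\su(d)$. Then $\pi_k\circ\gamma$ is a curve in $\stframe{d}{k}$, and its Frobenius speed is at most that of $\gamma$, because deleting columns can only decrease the Frobenius norm. So $D(\pi_k U,\pi_k V)\le D_{\su}(U,V)$. Any coupling of the two $\su(d)$ laws therefore projects to a coupling on $\stframe{d}{k}$ with no larger cost, and $W_D$ on $\stframe{d}{k}$ is bounded by $W_{D_{\su}}$ on $\su(d)$.

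On $\su(d)$, iterate \cref{lem:su_parallel_contraction}. Gluing optimal one-step couplings through the Markov property, the standard composition argument, gives
\[
W\bigl(\kackernel^t(\tilde X,\cdot),\kackernel^t(\tilde Y,\cdot)\bigr)\le \Bigl(1-\tfrac{1}{4(d-1)}\Bigr)^t D_{\su}(\tilde X,\tilde Y)\le e^{-t/(4(d-1))}\,\mathrm{diam}(\su(d)).
\]
For the diameter, any $U\in\su(d)$ can be written $U=e^{A}$ with $A\in\sualg(d)$ having eigenvalues $\i\theta_j$, $|\theta_j|\le\pi$, chosen so that $\sum\theta_j=0$ is attainable up to adjustment. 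This gives $D_{\su}(I,U)\le\|A\|_F\le\pi\sqrt d$, and by invariance the diameter is at most $\pi\sqrt d$.

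The trace-zero adjustment is the delicate point. The eigenvalue angles need not be simultaneously in $[-\pi,\pi]$ and summing to zero, so the honest bound might be $O(\pi\sqrt d)$ with a slightly worse constant, or one uses that the Riemannian distance in $\su(d)$ equals that in $\ugroup{d}$ only locally. I would try to show directly that one can shift angles by $\pm2\pi$ to get $\sum\theta_j=0$ while keeping $\sum\theta_j^2\le\pi^2 d$. This is the main obstacle, but only for constants.

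With $t\ge 4(d-1)\log(\pi\sqrt d/\varepsilon)$ (base-$2$ logarithm, so it is at least the natural log), we get $e^{-t/(4(d-1))}\pi\sqrt d\le\varepsilon$, which proves the theorem.
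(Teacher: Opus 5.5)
Your argument is the paper's proof. The paper picks $U\in\su(d)$ with $UX=Y$ and pushes the $\su(d)$ walks from $I_d$ and $U$ forward along $V\mapsto VX$. That map is your $\pi_k$ precomposed with the right translation $V\mapsto V\tilde X$, which is an isometry of $\su(d)$. The paper then iterates \cref{lem:su_parallel_contraction} and bounds the initial distance by the diameter of $\su(d)$, as you do. The only step you leave open is $\operatorname{diam}\su(d)\le\pi\sqrt d$. The paper does not prove this; it imports it from \cite[Theorem~5.2(a)]{PD25}. Your worry about a worse constant is unfounded, and the step closes in a few lines.

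Fix $U\in\su(d)$.
\begin{itemize}
\item Since $\det U=1$, its eigenvalue angles admit lifts $\theta\in\mathbb R^d$ with $\sum_j\theta_j=0$. These lifts form a translate of the lattice $2\pi\{k\in\mathbb Z^d:\sum_jk_j=0\}$, so some lift minimizes $\sum_j\theta_j^2$.
\item The exchange $(\theta_a,\theta_b)\mapsto(\theta_a-2\pi,\theta_b+2\pi)$ preserves the congruences and the zero sum. It changes $\sum_j\theta_j^2$ by $8\pi^2-4\pi(\theta_a-\theta_b)$, so minimality forces $\max_j\theta_j-\min_j\theta_j\le2\pi$.
\item Let $a=\min_j\theta_j$. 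Summing $(\theta_j-a)(\theta_j-a-2\pi)\le0$ over $j$ and using $\sum_j\theta_j=0$ gives $\sum_j\theta_j^2\le-d\,a(a+2\pi)\le d\pi^2$. This is Popoviciu's inequality.
\item Take $A=W\,\mathrm{diag}(\i\theta_1,\dots,\i\theta_d)W^\dagger$ for a unitary eigenbasis $W$ of $U$. Then $A\in\sualg(d)$, $\exp(A)=U$, and $\norm{A}_F\le\pi\sqrt d$.
\item The curve $s\mapsto\exp(sA)$, $s\in[0,1]$, lies in $\su(d)$ and has length $\norm{A}_F$, so $D_{\su}(I_d,U)\le\pi\sqrt d$. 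Left invariance extends this to all pairs.
\end{itemize}

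Separately, your base-$2$ logarithm would have absorbed a modestly worse constant anyway. Indeed $e^{-t/(4(d-1))}\le(\varepsilon/(\pi\sqrt d))^{1/\ln 2}$, which leaves a spare factor $(\pi\sqrt d/\varepsilon)^{1/\ln 2-1}>1.9$. With the diameter bound filled in, your proof is complete.
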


\begin{proof}
Let $\widehat{\kackernel}$ and $\widehat D$ denote the transition kernel
and Riemannian distance on $\su(d)$, respectively.
Choose $U\in\su(d)$ such that $UX=Y$.
The map $V\mapsto VX$ sends the walks on $\su(d)$ started from
$I_d$ and $U$ to the walks on $\stframe{d}{k}$ started from $X$ and $Y$,
respectively.
Since $XX^\dagger$ is an orthogonal projection and satisfies $XX^\dagger\leq I_d$, for every $d\times d$ matrix $A$,
\(
    \norm{AX}_F^2
    =\tr{A^\dagger AXX^\dagger}
    \leq\tr{A^\dagger A}
    =\norm{A}_F^2
\).
The map therefore does not increase lengths of paths or the corresponding
Wasserstein distances.
Thus, iterating \cref{lem:su_parallel_contraction}, we obtain
\begin{align*}
    W_D\!\left(\kackernel^t(X,\cdot),\kackernel^t(Y,\cdot)\right)
    ~\leq~
    W_{\widehat D}\!\left(
        \widehat{\kackernel}^t(I_d,\cdot),\widehat{\kackernel}^t(U,\cdot)
    \right) 
    ~&\leq~
    \left(1-\frac{1}{4(d-1)}\right)^t 
    \cdot\widehat D(I_d,U) \\
    ~&\leq~
    \pi\sqrt d\cdot\e^{-t/(4(d-1))}
    ~\leq~\varepsilon\enspace,
\end{align*}
where the third inequality uses
$\operatorname{diam}_{\widehat D}(\su(d))=\pi\sqrt d$
by \cite[Theorem~5.2(a)]{PD25}.
\end{proof}

\subsection{Wasserstein Mixing Bound}
\label{subsec:wasserstein_mixing_bound}

We are now ready to prove \cref{thm:wasserstein_mixing} using the bounds
established in the preceding two subsections.
We restate the theorem for convenience.

\wassersteinMixingTheorem*

\begin{proof}
Fix $X,Y\in\stframe{d}{k}$, with $\det X=\det Y$ when $k=d$,
$\varepsilon\in(0,1)$, and an integer
$t\geq4000(k+\log d)\log(d/\varepsilon)$.
Suppose first that $k\leq d/2$, and let
\(
    T_0=\left\lceil1000(k+\log d)\log d\right\rceil.
\)
The same argument in the proof of
\cref{thm:high_rank_wasserstein_mixing} also applies here and gives
$D(X,Y)\leq\pi\sqrt d$.
Since $T_0\leq1001(k+\log d)\log d$ and $\pi<d^2$,
\[
    \frac{t}{T_0}
    \geq\frac{4000}{1001}\cdot\frac{\log(d/\varepsilon)}{\log d}
    \geq1+\frac{\log(\pi\sqrt d/\varepsilon)}{\log d}.
\]
Thus, \cref{thm:low_rank_wasserstein_mixing} gives
\[
    W_D\!\left(\kackernel^t(X,\cdot),\kackernel^t(Y,\cdot)\right)
    \leq d^{- \frac{\log(\pi\sqrt d/\varepsilon)  }{\log d}}  \cdot \pi\sqrt d
    =\varepsilon.
\]
Now suppose that $d/2<k\leq d$.
Since $d-1<2k$ and $\pi<d^2$, we have
\[
    4(d-1)\log(\pi\sqrt d/\varepsilon)
    \leq20k\log(d/\varepsilon)
    \leq t.
\]
Therefore, \cref{thm:high_rank_wasserstein_mixing}
directly gives the desired bound.
\end{proof}

\section{Total Variation Mixing}
\label{sec:tv_convergence}
In this section, we prove \cref{thm:tv_mixing}.
The main idea is to first use the Wasserstein bound from \cref{sec:wasserstein_convergence}
to bring two walks close,
and then couple their final states to agree with high probability.
Throughout this section, $d\geq2$ is even and $1\leq k<d$.

In \cref{subsec:tv-notation}, we introduce the transition map
and its Jacobian.
In \cref{subsec:tv-local-smoothing}, we construct a coupling
for nearby initial states under a quantitative surjectivity
assumption on the Jacobian.
We establish the required Jacobian bound in
\cref{subsec:tv-surjectivity}.
Finally, we use these estimates to prove \cref{thm:tv_mixing}
in \cref{subsec:tv-mixing-bound}.

\subsection{The Transition Map and Its Jacobian}
\label{subsec:tv-notation}

Fix even $d\geq2$, $1\leq k<d$, $T\in\N$, and $X\in\stframe{d}{k}$. Let
$\mathcal M=(M_1,\dots,M_T)$ be a sequence of perfect matchings on $[d]$,
and set $m=dT/2$.
We first define a map $\Phi_{X,\mathcal{M}}$ that, for a fixed initial matrix
$X$ and matching sequence $\mathcal{M}$, sends $m$ $\su(2)$ rotations applied
during the parallel Kac's walk to the final matrix after $T$ steps.

\begin{definition}[Transition map]
\label{def:tv-transition-map}
Let
$\mathcal R=(R_{t,e})_{t\in[T],\,e\in M_t}\in\su(2)^m$ be a list of
rotations. For a pair $e=(i,j)$, we regard $R_{t,e}$ as an element of
$\su(d)$ acting on coordinates $i$ and $j$ and as the identity on all other
coordinates. For each $t\in[T]$, set
\(
    G_t(\mathcal R)=\prod_{e\in M_t}R_{t,e}.
\)
The transition map associated with $X$ and $\mathcal M$ is
$\Phi_{X,\mathcal M}:\su(2)^m\longrightarrow\stframe{d}{k}$
defined by
\[
    \Phi_{X,\mathcal M}(\mathcal R)
    =G_T(\mathcal R)\cdots G_1(\mathcal R)X\enspace.
\]
We denote by $\nu_m$ the product Haar probability measure on $\su(2)^m$;
equivalently, under $\mathcal R\sim\nu_m$, the rotations $R_{t,e}$ are
independent and Haar distributed on $\su(2)$.
\end{definition}

To describe how $\Phi_{X,\mathcal{M}}$ responds to perturbations of the rotations,
we write each tangent vector at $R_{t,e}\in\su(2)$ as $R_{t,e}B_{t,e}$ with
$B_{t,e}\in\sualg(2)$, corresponding to the curve
$s\mapsto R_{t,e}\exp(sB_{t,e})$.
The Jacobian map $J_{X,\mathcal{M}}(\mathcal{R})$ defined below sends the
perturbation directions $\mathcal{B}=(B_{t,e})$ to the resulting first-order
change in the final matrix $\Phi_{X,\mathcal{M}}(\mathcal{R})$.

\begin{definition}[Jacobian map]
\label{def:tv-jacobian}
For fixed $\mathcal R=(R_{t,e})_{t\in[T],\,e\in M_t}\in\su(2)^m$,
define the Jacobian map
\(
    J_{X,\mathcal M}(\mathcal R):\sualg(2)^m
    \longrightarrow
    \mathsf T_{\Phi_{X,\mathcal M}(\mathcal R)}\stframe{d}{k}
\)
of $\Phi_{X,\mathcal M}$ at $\mathcal R$ by
\[
    J_{X,\mathcal M}(\mathcal R)[\mathcal B]
    =\left.\frac{\mathrm{d}}{\mathrm{d}s}
    \Phi_{X,\mathcal M}
    \bigl((R_{t,e}\exp(sB_{t,e}))_{t\in[T],\,e\in M_t}\bigr)
    \right|_{s=0}
    \enspace,
\]
where $\mathcal B=(B_{t,e})_{t\in[T],\,e\in M_t}\in\sualg(2)^m$.
\end{definition}

By choosing proper orthonormal bases,
we can view $J_{X,\mathcal M}(\mathcal R)$ as a real matrix
and its transpose can be viewed as a linear map
\(
    J_{X,\mathcal M}(\mathcal R)^\top:
    \mathsf T_{\Phi_{X,\mathcal M}(\mathcal R)}\stframe{d}{k}
    \longrightarrow\sualg(2)^m.
\)
We equip $\sualg(2)^m$ with the norm
\(
    \norm{\mathcal{B}}_F^2=\sum_{t,e}\norm{B_{t,e}}_F^2
\)
for \(\mathcal{B}=(B_{t,e})_{t\in[T],\,e\in M_t}\in\sualg(2)^m\),
and define the \emph{surjectivity modulus} of the Jacobian as
\[
    \sigma_{\mathrm{sur}}\bigl(J_{X,\mathcal{M}}(\mathcal{R})\bigr)
    \coloneqq
    \inf_{\substack{
        H\in\mathsf{T}_{\Phi_{X,\mathcal{M}}(\mathcal{R})}\stframe{d}{k}\\
        \norm{H}_F=1}}
    \norm{J_{X,\mathcal{M}}(\mathcal{R})^{\top}H}_F\enspace.
\]
Intuitively, a lower bound on $\sigma_{\mathrm{sur}}(J)$ ensures
that any change in the final state can be produced by a controlled adjustment of the rotations.

\subsection{Coupling Nearby Initial States}
\label{subsec:tv-local-smoothing}

This subsection constructs a coupling for nearby initial states
under a lower bound on $\sigma_{\mathrm{sur}}$.
The construction adjusts the rotations so that the final states agree,
while controlling the total variation distance
between the adjusted rotation distribution and ideal Haar measure.

\begin{lemma}
\label{lem:tv-gate-correction}
Fix $T\in\N$, $X\in\stframe{d}{k}$ and a matching sequence $\mathcal{M}$ of length $T$.
Let $m = dT/2$ and $\nu_m$ be the product Haar probability on $\su(2)^m$.
Suppose that a set $\mathfrak{R}_{\mathcal M}\subseteq\su(2)^m$ satisfies that
$\Pr_{\mathcal{R} \sim \nu_m}[ \mathcal{R} \in \mathfrak{R}_{\mathcal M} ] \ge 1 - \eta_{\mathcal M}$ and
that, for some $\kappa\ge1$,
$\sigma_{\mathrm{sur}}\bigl(J_{X,\mathcal{M}}(\mathcal{R})\bigr)\geq\kappa^{-1}$
for every $\mathcal{R}\in\mathfrak{R}_{\mathcal M}$.
Set $C=2^{20}$. Then, for every $Y\in\stframe{d}{k}$ with
$\|X-Y\|_F\le 1/(C\kappa^4d^9T^3)$,
there is a map
$\mathcal T:\su(2)^m\to\su(2)^m$
such that the following two properties hold:
\begin{enumerate}[label=\textup{(\roman*)}]
  \item For every $\mathcal{R}\in\mathfrak{R}_{\mathcal M}$,
  \(
    \Phi_{Y,\mathcal{M}}(\mathcal T(\mathcal{R}))=\Phi_{X,\mathcal{M}}(\mathcal{R})
  \).
  \item For $\mathcal{R}\sim\nu_m$, the total variation distance between
  the distribution of $\mathcal T(\mathcal{R})$ and $\nu_m$
  is at most $C\kappa^4d^9T^3\|X-Y\|_F$.
\end{enumerate}
Thus, there is a coupling of rotation lists $\mathcal{R}_X$ and $\mathcal{R}_Y$,
each with distribution $\nu_m$, such that the final matrices
$X_T=\Phi_{X,\mathcal{M}}(\mathcal{R}_X)$ and $Y_T=\Phi_{Y,\mathcal{M}}(\mathcal{R}_Y)$ satisfy
\(
  \prob{}{X_T\ne Y_T}\le\eta_{\mathcal M}+C\kappa^4d^9T^3\|X-Y\|_F.
\)
\end{lemma}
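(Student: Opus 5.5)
The plan is to build $\mathcal T$ as an explicit perturbation $\mathcal T(\mathcal R)=(R_{t,e}\exp(B_{t,e}(\mathcal R)))_{t,e}$. The field $\mathcal B(\mathcal R)\in\sualg(2)^m$ will be of size $O(\kappa\|X-Y\|_F)$ and will solve the matching equation by a quantitative Newton / implicit-function argument.

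\textbf{Reduction to a left-multiplied start.} Since $k<d$, I first pick $W\in\su(d)$ with $Y=WX$ and $\|W-I_d\|_F\le c\|X-Y\|_F$. To do this, extend $X$ and $Y$ to unitaries whose extra columns are close, and fix the determinant using a free column; this is where $k<d$ is used. Then
\[
\Phi_{Y,\mathcal M}(\mathcal R')=G_T(\mathcal R')\cdots G_1(\mathcal R')WX .
\]
For fixed $\mathcal R$, define $F_{\mathcal R}(\mathcal B)\coloneqq\Phi_{Y,\mathcal M}(\mathcal R\exp\mathcal B)-\Phi_{X,\mathcal M}(\mathcal R)$. Then $\|F_{\mathcal R}(0)\|_F\le\|W-I\|_F\lesssim\|X-Y\|_F$.

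\textbf{Smoothness constants.} Every factor in $\Phi$ is unitary, so I will bound the derivatives directly. The Jacobian satisfies $\|J\|_{\mathrm{op}}\lesssim\sqrt{dT}$. The Jacobian is Lipschitz in $\mathcal R$, and also in the initial matrix $Y$ versus $X$, with constant $L=\mathrm{poly}(d,T)$; the same holds for one further derivative. At $\mathcal R\in\mathfrak R_{\mathcal M}$ we have $\sigma_{\mathrm{sur}}\ge\kappa^{-1}$. By these Lipschitz bounds, $\sigma_{\mathrm{sur}}(J_{Y,\mathcal M}(\mathcal R\exp\mathcal B))\ge(2\kappa)^{-1}$ for $\|\mathcal B\|_F\lesssim 1/(\kappa L)$.

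\textbf{Solving the equation.} I run the Newton iteration with the minimal-norm right inverse $J^{\top}(JJ^{\top})^{-1}$, whose norm is at most $2\kappa$. The iteration has to respect the manifold: I either project through $\Phi$ or apply the Kantorovich theorem in local charts. The Kantorovich condition $\kappa^2L\|F(0)\|\ll1$ holds because $\|X-Y\|_F\le1/(C\kappa^4d^9T^3)$. The iteration converges to a solution $\mathcal B^\ast(\mathcal R)$ with $\|\mathcal B^\ast\|_F\le4\kappa\|F(0)\|$. Differentiating the fixed-point equation in $\mathcal R$ then gives a smooth dependence with
\[
\|\nabla_{\mathcal R}\mathcal B^\ast\|_{\mathrm{op}}\lesssim\kappa^2L\|\mathcal B^\ast\|+\kappa L\|X-Y\|_F\lesssim\kappa^3L\|X-Y\|_F .
\]

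\textbf{Globalizing.} To handle $\mathcal R\notin\mathfrak R_{\mathcal M}$ without boundary pathologies of the arbitrary set $\mathfrak R_{\mathcal M}$, I multiply $\mathcal B^\ast$ by a smooth cutoff $\chi$. The cutoff is a function of $\sigma_{\mathrm{sur}}(J_{X,\mathcal M}(\mathcal R))$: it equals $1$ when $\sigma_{\mathrm{sur}}\ge\kappa^{-1}$ and $0$ when $\sigma_{\mathrm{sur}}\le(2\kappa)^{-1}$. Since $\sigma_{\mathrm{sur}}$ is $L$-Lipschitz in $\mathcal R$ (singular values are 1-Lipschitz in the matrix), $\chi$ has gradient $O(\kappa L)$. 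Property (i) then holds on $\mathfrak R_{\mathcal M}$.

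\textbf{Total variation bound.} The map $\mathcal T(\mathcal R)=\mathcal R\exp(\chi\mathcal B^\ast)$ is a displacement of the compact group $\su(2)^m$ with small displacement and derivative $O(\kappa^4L\|X-Y\|_F)\ll1$. Hence it is a diffeomorphism: it is a local diffeomorphism that is homotopic to the identity, so it has degree one, and it is injective because the displacement is small. The pushforward of $\nu_m$ therefore has density $|\det D\mathcal T|^{-1}\circ\mathcal T^{-1}$. This gives
\[
\tv{\mathcal T_\#\nu_m-\nu_m}\le\sup\bigl|\det D\mathcal T-1\bigr|\lesssim\mathrm{tr}|D\mathcal T-I|\lesssim m\,\kappa^4L\|X-Y\|_F .
\]
Here $m=dT/2$, and I also account for the Haar-measure correction from right translation, which is nonzero only through the curvature term, also controlled. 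Bookkeeping the polynomial $L$ should yield $C\kappa^4d^9T^3$.

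\textbf{Coupling.} Finally I take $\mathcal R_X=\mathcal R$ and let $\mathcal R_Y$ be maximally coupled to $\mathcal T(\mathcal R)$. Then $\Pr[X_T\ne Y_T]\le\Pr[\mathcal R\notin\mathfrak R_{\mathcal M}]+\Pr[\mathcal R_Y\ne\mathcal T(\mathcal R)]$, which is the claimed bound.

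The main obstacle is the Jacobian-determinant step. Controlling $|\det D\mathcal T-1|$ requires the derivative of the implicitly defined $\mathcal B^\ast$, including how the pseudo-inverse $J^{\top}(JJ^{\top})^{-1}$ varies with $\mathcal R$. The bound must be uniform over the cutoff region and must avoid a factor worse than $m$. I would first try bounding $\mathrm{tr}|D\mathcal T-I|$ by $\sqrt m\,\|D\mathcal T-I\|_F$ and estimating the Frobenius norm via the low rank (at most $2dk$) of $\nabla\mathcal B^\ast$, which factors through the tangent space of $\stframe{d}{k}$.
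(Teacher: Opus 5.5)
Your route is genuinely different from the paper's and is viable in outline. The paper never solves the matching equation. It interpolates $X_q$ from $X$ to $Y$ and flows the rotations along the time-dependent field $\mathcal B^{q,\mathcal R}=-\chi(f(\mathcal R))\,J_{X_q,\mathcal M}(\mathcal R)^\top A_{q,\mathcal R}^{-1}H_{\mathcal R}(q)$. This makes $\frac{\mathrm d}{\mathrm dq}\Phi_{X_q,\mathcal M}(\mathcal T_q(\mathcal R))=0$ wherever the cutoff equals $1$. The total variation bound then comes from integrating the divergence of this field against $\nu_m$, by integration by parts using right invariance. That costs $3m\sup\|\mathrm d_{\mathcal R}\mathcal B^{q,\mathcal R}\|_{\mathrm{op}}$ per unit of $q$.

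The flow gives for free that each $\mathcal T_q$ is a $C^1$ diffeomorphism. It also avoids any nonlinear solve, any degree or injectivity argument, and any Jacobian determinant on $\su(2)^m$. Its price is showing that the cutoff stays active along the flow (the drift bound on $f$), plus the extra $kd$ factors coming from the smooth surrogate $f$.

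Your one-shot solve with a static cutoff makes (i) immediate on $\mathfrak R_{\mathcal M}$. All your constants are low-degree polynomials in $m$ and $\kappa$, so the bookkeeping you defer plausibly fits inside $C\kappa^4d^9T^3$. The detour through $W$ is unnecessary, since $\|F_{\mathcal R}(0)\|_F=\|G_T(\mathcal R)\cdots G_1(\mathcal R)(Y-X)\|_F=\|X-Y\|_F$ exactly.

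Two steps need repair before this is a proof.

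First, the system is underdetermined ($3m$ unknowns against $2dk-k^2$ equations). So $F_{\mathcal R}(\mathcal B^\ast)=0$ does not pin down $\mathcal B^\ast$, and the Newton limit satisfies no equation whose differentiation yields $\nabla_{\mathcal R}\mathcal B^\ast$. To fix this, restrict the unknown to $\mathcal B=Q(\mathcal R)V$, where $Q(\mathcal R)$ is the right inverse of $J_{X,\mathcal M}(\mathcal R)$ and $V$ ranges over the tangent space at $\Phi_{X,\mathcal M}(\mathcal R)$. Read the equation in a chart, for instance by projecting the difference onto that tangent space. This gives a square system whose Jacobian is close to the identity. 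The implicit function theorem then yields a unique small solution that depends smoothly on $\mathcal R$, with a derivative bound of the kind you wrote.

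Second, $\sigma_{\mathrm{sur}}$ is only Lipschitz, because singular values can cross. Hence $\chi\circ\sigma_{\mathrm{sur}}$ and $\mathcal T$ are not $C^1$, and neither ``local diffeomorphism of degree one'' nor the density formula applies as stated. You can either use a smooth spectral surrogate, as the paper does with $f(\mathcal R)=\mathrm{tr}\bigl((A_{0,\mathcal R}+\epsilon I)^{-1}\bigr)$, or carry out injectivity and change of variables for bi-Lipschitz maps via the area formula.

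Finally, the low-rank idea for $\nabla\mathcal B^\ast$ is not needed, and the part coming from differentiating the right inverse need not have low rank anyway. The crude bound $\mathrm{tr}|E|\le 3m\|E\|_{\mathrm{op}}$ already gives the single factor $m$, which is exactly what the paper pays.
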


\begin{proof}
We first construct the map $\mathcal T$ and then verify
the two properties in the statement.
The main idea is to move the initial state from $X$ to $Y$
along a path while adjusting the rotations to keep the final
state fixed for rotation lists in $\mathfrak R_{\mathcal M}$.

We connect the initial matrices $X$ and $Y$ by the path
$Z_q=(1-q)X+qY$ for $q\in[0,1]$.
Note that
\(
	Z_q^\dagger Z_q  =  I_k - q(1-q)(X-Y)^\dagger(X-Y)  \succeq\bigl(1-\|X-Y\|_F^2/4\bigr)I_k
\).
By assumption, $\|X-Y\|_F\le 1/4$, so $Z_q^\dagger Z_q$ is invertible.
Then, we define $X_q=Z_q(Z_q^\dagger Z_q)^{-1/2}$ such that $X_q\in \stframe{d}{k}$,
and write
\begin{itemize}
  \item $\Phi_{X_q,\mathcal M}: \su(2)^m\to\stframe{d}{k}$ for the map from rotations to the final matrix after $T$ steps of the walk starting at $X_q$ as in \cref{def:tv-transition-map},
  \item $J_{X_q,\mathcal M}(\mathcal R) : \sualg(2)^m\to\mathsf T_{\Phi_{X_q,\mathcal M}(\mathcal R)}\stframe{d}{k}$ for the Jacobian of the map $\Phi_{X_q,\mathcal M}$ at $\mathcal R \in \su(2)^m$ as in \cref{def:tv-jacobian}, and
  \item $H_{\mathcal R}(q)\coloneqq
    \frac{\mathrm{d}}{\mathrm{d}q}\Phi_{X_q,\mathcal M}(\mathcal R)$
    for the derivative of the final matrix along the path $X_q$,
    with fixed $\mathcal R\in\su(2)^m$.
\end{itemize}

The map $J_{X_q,\mathcal M}(\mathcal R)^\top$ is defined on
$\mathsf T_{\Phi_{X_q,\mathcal M}(\mathcal R)}\stframe{d}{k}$,
which is a subspace of $\C^{d\times k}$.
Below, we extend it to a map
$\C^{d\times k}\to\sualg(2)^m$ by setting it to zero
on the orthogonal complement of this tangent space.
Let $\Pi_{\Phi_{X_q,\mathcal M}(\mathcal R)}$ denote the orthogonal
projection onto $\mathsf T_{\Phi_{X_q,\mathcal M}(\mathcal R)}\stframe{d}{k}$ and define
\[
  A_{q,\mathcal R}
  \coloneqq J_{X_q,\mathcal M}(\mathcal R)
            J_{X_q,\mathcal M}(\mathcal R)^\top
            +I-\Pi_{\Phi_{X_q,\mathcal M}(\mathcal R)}.
\]
Thus, $A_{q,\mathcal R}$ acts as
$J_{X_q,\mathcal M}(\mathcal R)J_{X_q,\mathcal M}(\mathcal R)^\top$
on $\mathsf T_{\Phi_{X_q,\mathcal M}(\mathcal R)}\stframe{d}{k}$
and as the identity on its orthogonal complement.
Define
\[
  f(\mathcal R) \coloneqq \operatorname{tr}
       \bigl((A_{0,\mathcal R}+(32kd\kappa^2)^{-1}I)^{-1}\bigr) 
  \quad
  \text{and}
  \quad \Omega \coloneqq \{\mathcal R:f(\mathcal R)<12kd\kappa^2\} \enspace.
\]
It is easy to see that $A_{0,\mathcal R}$ is self-adjoint and positive
semidefinite, and thus $f(\mathcal R)$ is well defined for every
$\mathcal R\in\su(2)^m$.
Intuitively, $f(\mathcal R)$ measures how difficult it is to adjust
the final matrix by changing the rotations.
When $f$ is small,
every tangent direction can be produced with a controlled rotation adjustment,
so we use $f$ to decide where to activate the adjustment.
The following claim  provides the bounds needed later,
which is proved in \cref{app:tv_correction_invertibility_proof}.

\begin{restatable}{claim}{correctionInvertibilityClaim}
\label{clm:tv_correction_invertibility}
For every $\mathcal R\in\mathfrak{R}_{\mathcal M}$, $f(\mathcal R)<2kd\kappa^2$.
Moreover, for every $\mathcal R\in\Omega$ and $q\in[0,1]$,
$A_{q,\mathcal R}$ is invertible on $\C^{d\times k}$
with $\|A_{q,\mathcal R}^{-1}\|_{\mathrm{op}}\le40kd\kappa^2$.
\end{restatable}
For $\mathcal R\in\Omega$, let $Q_q(\mathcal R)=J_{X_q,\mathcal M}(\mathcal R)^\top A_{q,\mathcal R}^{-1}$.
Choose a continuously differentiable function $\chi:\mathbb R\to[0,1]$
such that $\chi(x)=1$ for $x\le4kd\kappa^2$,
$\chi(x)=0$ for $x\ge8kd\kappa^2$, and
$|\chi'(x)|\le1/(2kd\kappa^2)$.\footnote{For example, with
$u=(x-4kd\kappa^2)/(4kd\kappa^2)$, take
$\chi(x)=1$ for $u\le0$, $\chi(x)=1-3u^2+2u^3$ for $0<u<1$,
and $\chi(x)=0$ for $u\ge1$.}
Define the rotation directions
$\mathcal B^{q,\mathcal R}=(B_{t,e}^{q,\mathcal R})_{t\in[T],\,e\in M_t}
\in\sualg(2)^m$ by
\[
  \mathcal B^{q,\mathcal R}=
  \begin{cases}
    -\chi(f(\mathcal R))Q_q(\mathcal R)H_{\mathcal R}(q),
      & \mathcal R\in\Omega,\\
    0, & \mathcal R\notin\Omega.
  \end{cases}
\]

\smallskip
\noindent\textbf{Construction of the map $\mathcal T_q$.}
For each initial list $\mathcal R=(R_{t,e})$, let
$\mathcal R(q)=(R_{t,e}(q))$ solve
\[
  \frac{\mathrm d}{\mathrm d q}R_{t,e}(q)
  =R_{t,e}(q)B_{t,e}^{q,\mathcal R(q)},
  \qquad R_{t,e}(0)=R_{t,e}
  \quad(t\in[T],\ e\in M_t).
\]
Define $\mathcal T_q(\mathcal R)=\mathcal R(q)$ and
$\mathcal T=\mathcal T_1$.
We have the following properties of the map $\mathcal T_q$ and its proof is given in
\cref{app:tv_correction_flow_bounds_proof}.

\begin{restatable}{claim}{correctionFlowBoundsClaim}
\label{clm:tv_correction_flow_bounds}
The map $\mathcal T_q$ is well defined for $q\in[0,1]$,
and each $\mathcal T_q$ is continuously differentiable.
Moreover, for every $q\in[0,1]$:
\begin{enumerate}[label=\textup{(\alph*)}]
\item For every $\mathcal R\in\su(2)^m$, each rotation $R_{t,e}(q)$
remains in $\su(2)$, so $\mathcal T_q(\mathcal R)\in\su(2)^m$.
\item For every $\mathcal R\in\su(2)^m$,
$|f(\mathcal T_q(\mathcal R))-f(\mathcal R)|\le2kd\kappa^2$.
\item If $\mathcal R\sim\nu_m$, the total variation distance between
the distribution of $\mathcal T_q(\mathcal R)$ and $\nu_m$
is at most $C\kappa^4d^9T^3\|X-Y\|_F$.
\end{enumerate}
\end{restatable}

Note that \textup{(ii)} follows immediately from
\cref{clm:tv_correction_flow_bounds}\textup{(c)} with $q=1$.
We now verify \textup{(i)}.
Consider a fixed $\mathcal R\in\mathfrak{R}_{\mathcal M}$.
Combining \cref{clm:tv_correction_invertibility} with
\cref{clm:tv_correction_flow_bounds}\textup{(b)},
we have
$f(\mathcal T_q(\mathcal R))<4kd\kappa^2$ and
therefore $\chi(f(\mathcal T_q(\mathcal R)))=1$ for all $q\in[0,1]$.
As $q$ varies, both the initial matrix $X_q$ and the rotations
$\mathcal T_q(\mathcal R)$ change. The chain rule gives
\begin{align*}
  \frac{\mathrm d}{\mathrm d q}\Phi_{X_q,\mathcal M}(\mathcal T_q(\mathcal R))
  &=\left.\frac{\mathrm d}{\mathrm d u}
      \Phi_{X_u,\mathcal M}(\mathcal T_q(\mathcal R))\right|_{u=q}
    +\left.\frac{\mathrm d}{\mathrm d u}
      \Phi_{X_q,\mathcal M}(\mathcal T_u(\mathcal R))\right|_{u=q}.
\end{align*}
The first term varies the initial matrix while keeping the rotations
fixed, so it equals $H_{\mathcal T_q(\mathcal R)}(q)$ by definition.
The second term keeps the initial matrix fixed and varies the rotations.
By the definition of $\mathcal T_q$, we have
$\frac{\mathrm d}{\mathrm d q}R_{t,e}(q)
=R_{t,e}(q)B_{t,e}^{q,\mathcal T_q(\mathcal R)}$.
Also, note that
\(
  \left.\frac{\mathrm d}{\mathrm d s}
    \bigl(R_{t,e}(q)\exp(sB_{t,e}^{q,\mathcal T_q(\mathcal R)})\bigr)
  \right|_{s=0}
  =R_{t,e}(q)B_{t,e}^{q,\mathcal T_q(\mathcal R)}.
\)
By the product rule and the definition of $J$, we obtain
\[
  \left.\frac{\mathrm d}{\mathrm d u}
    \Phi_{X_q,\mathcal M}(\mathcal T_u(\mathcal R))\right|_{u=q}
  =J_{X_q,\mathcal M}(\mathcal T_q(\mathcal R))
    [\mathcal B^{q,\mathcal T_q(\mathcal R)}].
\]
Since $\chi(f(\mathcal T_q(\mathcal R)))=1$, we have
$\mathcal B^{q,\mathcal T_q(\mathcal R)}=-Q_q(\mathcal T_q(\mathcal R))H_{\mathcal T_q(\mathcal R)}(q)$.
Moreover, $H_{\mathcal T_q(\mathcal R)}(q)$ lies in the tangent space at
$\Phi_{X_q,\mathcal M}(\mathcal T_q(\mathcal R))$, on which $Q_q(\mathcal T_q(\mathcal R))$
is a right inverse of $J_{X_q,\mathcal M}(\mathcal T_q(\mathcal R))$.
Combining the two derivative terms therefore yields
\begin{align*}
  \frac{\mathrm d}{\mathrm d q}\Phi_{X_q,\mathcal M}(\mathcal T_q(\mathcal R))
  &=H_{\mathcal T_q(\mathcal R)}(q)
    +J_{X_q,\mathcal M}(\mathcal T_q(\mathcal R))[\mathcal B^{q,\mathcal T_q(\mathcal R)}]\\
  &=H_{\mathcal T_q(\mathcal R)}(q)
    -J_{X_q,\mathcal M}(\mathcal T_q(\mathcal R))Q_q(\mathcal T_q(\mathcal R))H_{\mathcal T_q(\mathcal R)}(q)
    =0.
\end{align*}
Thus, $\Phi_{X_q,\mathcal M}(\mathcal T_q(\mathcal R))$
is independent of $q$.
Evaluating at $q=0$ and $q=1$ proves
$\Phi_{Y,\mathcal M}(\mathcal T(\mathcal R))
=\Phi_{X,\mathcal M}(\mathcal R)$ for every
$\mathcal R\in\mathfrak{R}_{\mathcal M}$, as required in \textup{(i)}.

Finally, we prove the coupling bound.
Sample $\mathcal R_X\sim\nu_m$ and set
$\widetilde{\mathcal R}=\mathcal T(\mathcal R_X)$.
By property \textup{(ii)}, we can couple $\widetilde{\mathcal R}$
with $\mathcal R_Y\sim\nu_m$ such that
\[
  \Pr[\widetilde{\mathcal R}\ne\mathcal R_Y]
  \le C\kappa^4d^9T^3\|X-Y\|_F.
\]
If $\mathcal R_X\in\mathfrak{R}_{\mathcal M}$ and
$\widetilde{\mathcal R}=\mathcal R_Y$, property \textup{(i)} gives
\[
  X_T=\Phi_{X,\mathcal M}(\mathcal R_X)
     =\Phi_{Y,\mathcal M}(\mathcal T(\mathcal R_X))
     =\Phi_{Y,\mathcal M}(\mathcal R_Y)=Y_T.
\]
Therefore, a union bound gives
\[
  \Pr[X_T\ne Y_T]
  \le\Pr[\mathcal R_X\notin\mathfrak{R}_{\mathcal M}]
     +\Pr[\widetilde{\mathcal R}\ne\mathcal R_Y]
  \le\eta_{\mathcal M}+C\kappa^4d^9T^3\|X-Y\|_F,
\]
which is the claimed coupling bound.
\end{proof}

\subsection{Quantitative Surjectivity of the Jacobian}
\label{subsec:tv-surjectivity}

We establish the quantitative surjectivity bound needed
in \cref{lem:tv-gate-correction}.
For each initial state, the following lemma bounds
$\sigma_{\mathrm{sur}}$ from below with high probability
over the matchings and rotations sampled during
$O((k+\log d)\log d)$ steps.

\begin{lemma}
    \label{lem:tv_endpoint_control}
    Let $1\leq k<d$ and
    $T=\lceil2^{16}(k+\log d)\log d\rceil$.
    For every $X\in\stframe{d}{k}$, we have
    \[
        \prob{\mathcal M,\mathcal R}{
            \sigma_{\mathrm{sur}}(J_{X,\mathcal M}(\mathcal R))
            <\frac{\sqrt{k+\log d}}{2d}}
        \leq d^{-30}\enspace,
    \]
    where $\mathcal M$ and $\mathcal R$ are the matchings and Haar
    rotations sampled in $T$ steps of the walk.
\end{lemma}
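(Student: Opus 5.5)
We reduce the surjectivity modulus to a quadratic form that can be read off step by step. For a tangent vector $H$ at the final state $X_T=\Phi_{X,\mathcal M}(\mathcal R)$, perturbing $R_{t,e}$ by $R_{t,e}\exp(sB_{t,e})$ changes the final matrix by $P_t\,\iota_e(R_{t,e}B_{t,e})\,X_{t-1}$ to first order. Here $P_t=G_T\cdots G_{t+1}$, $\iota_e$ embeds a $2\times 2$ matrix on the coordinates of $e$, and $X_{t-1}$ is the intermediate state. Since $R_{t,e}B_{t,e}X_{t-1,e}=R_{t,e}B_{t,e}R_{t,e}^\dagger X_{t,e}$ and conjugation by $R_{t,e}$ is an isometry of $\sualg(2)$, we get
\[
\|J^\top H\|_F^2=\sum_{t=1}^T\sum_{e\in M_t}\bigl\|\Pi_{\sualg(2)}\bigl((P_t^\dagger H)_e\,X_{t,e}^\dagger\bigr)\bigr\|_F^2 .
\]
Here $(\cdot)_e$ is the restriction to the two rows of $e$, and $\Pi_{\sualg(2)}$ is the orthogonal projection onto $\sualg(2)$, i.e.\ taking the traceless anti-Hermitian part. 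Write $H_t:=P_t^\dagger H$. Then $H_t$ is a tangent vector at $X_t$ with $\|H_t\|_F=\|H\|_F=1$, and it evolves backwards via $H_{t-1}=G_t^\dagger H_t$. So $\sigma_{\mathrm{sur}}^2$ equals the infimum over $H$ of $\sum_t S_t(H_t,X_t)$, where $S_t$ is the one-step ``pair observation'' quantity. This has exactly the form of \cref{clm:pair_observation}, with the difference $X-Y$ replaced by the tangent vector $H_t$ (indeed, the $p_{\ell,z}$ there are the linearisation of this quantity).

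Step 1 is a linearised version of \cref{clm:pair_observation}. For any $Z\in\stframe{d}{k}$ and tangent vector $H$ at $Z$, averaging over the pair $e$ with $e$ uniform among all pairs gives
\[
\sum_{i<j}\bigl\|\Pi_{\sualg(2)}\bigl(H_{(i,j)}Z_{(i,j)}^\dagger\bigr)\bigr\|_F^2\ \ge\ c\,\|H\|_F^2 .
\]
I would prove this by the same computation as in \cref{app:pair_observation}, letting $Y=X+sH$ with $s\to0$. It uses $k<d$: the component of $H$ of the form $ZA$ with $A$ anti-Hermitian, together with the part orthogonal to the column space, must be controlled, and the trace part of $HZ^\dagger$ (the $\mathfrak u(1)$ direction per pair) is lost. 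Because the matching is uniform, each step then contributes in expectation at least $c\|H_t\|_F^2/(d-1)$ times a factor coming from the row norms. As in \cref{lem:frobenius_contraction}, the relevant normalisation is that $\|X_{t,e}\|$ is at most $\sqrt{2\Delta}$. So the guaranteed per-step mass relative to $\|H\|^2$ is of order $\Delta/d$, with $\Delta\approx (k+\log d)/d$ after burn-in by \cref{lem:maximum_row_norm}.

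Step 2 turns the expectation bound into a high-probability bound, uniform over $H$; I expect this to be the main obstacle. The tangent vector $H$ is chosen adversarially after $\mathcal M,\mathcal R$ are revealed, so per-$H$ expectations are not enough. My plan is a determinant/trace (spectral) argument instead of a net. Define the positive operator
\[
\mathcal Q_t=\sum_{s\ge t}P_s\, \mathcal S_s\, P_s^\dagger
\]
on the tangent space, where $\mathcal S_s$ is the per-step Gram operator. I would show that $\mathbb E\,\mathrm{tr}\,(\mathcal Q+\lambda I)^{-1}$ decays under each step, via a one-step inequality of the form $\mathbb E[\mathrm{tr}(A+\mathcal S)^{-1}]\le(1-c'/(k+\log d))\,\mathrm{tr}A^{-1}$ valid once $A\succeq\lambda$. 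This mirrors the function $f$ used in \cref{lem:tv-gate-correction}. The trace is at most about $dk/\lambda$ at the start. After $T=\Theta((k+\log d)\log d)$ steps the trace drops below $\lambda^{-1}d^{-40}$, and Markov's inequality yields $\lambda_{\min}(\mathcal Q)\ge\lambda$ with failure probability at most $d^{-30}$. The delicate point is the one-step decrease for the inverse trace, which needs the pair-observation bound in a matrix (rather than scalar) form, together with the row-norm good event $\mathcal G_t$ from \cref{lem:maximum_row_norm} holding at all times with probability $1-Td^{-6}$. If $\mathcal G_t$ fails only with probability $d^{-6}$, I would tighten the union bound by applying \cref{lem:maximum_row_norm} with $\delta=d^{-40}$, at the cost of constants.

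Step 3 collects the constants. Choosing $\lambda=(k+\log d)/(4d^2)$ gives $\sigma_{\mathrm{sur}}\ge\sqrt{k+\log d}/(2d)$. The choice $T=\lceil 2^{16}(k+\log d)\log d\rceil$ absorbs the burn-in $30\log(2d^{41})$ and the contraction count. A final union bound over the burn-in row-norm event and the Markov step gives total failure probability at most $d^{-30}$.
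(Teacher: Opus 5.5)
Your reformulation is correct, and it is the dual of the paper's approach. With $H_t=P_t^\dagger H$ one does have $\|J^\top H\|_F^2=\sum_t\sum_{e\in M_t}\|\Pi_{\mathfrak{su}(2)}(H_{t,e}X_{t,e}^\dagger)\|_F^2$, so $\sigma_{\mathrm{sur}}^2=\lambda_{\min}(\mathcal Q)$ with $\mathcal Q=JJ^\top$. The gap is in Step~2, which carries the whole difficulty, and the inverse-trace potential cannot do the job.

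First, the one-step inequality $\mathbb E\,\mathrm{tr}(A+\mathcal S)^{-1}\le(1-c'/(k+\log d))\,\mathrm{tr}\,A^{-1}$ cannot hold for all $A\succeq\lambda I$. Every step has $\mathrm{tr}\,\mathcal S_t=\tfrac32 k$ exactly, so for $A=\Lambda I$ the relative decrease is at most $\tfrac32k/(D\Lambda)\le 3/(2d\Lambda)$. This is $o(1/(k+\log d))$ once $\Lambda\gg(k+\log d)/d$.

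Second, and decisively, your endpoint is unattainable. Deterministically $\mathrm{tr}\,\mathcal Q=\tfrac32kT$, so by AM--HM $\mathrm{tr}(\mathcal Q+\lambda I)^{-1}\ge D^2/(\tfrac32kT+D\lambda)$, with $D=2dk-k^2\ge dk$. For your $T$ and $\lambda=(k+\log d)/(4d^2)$ this is at least $d^2k/(2^{17}(k+\log d)\log d)$. Markov's inequality would need $\mathbb E\,\mathrm{tr}(\mathcal Q+\lambda I)^{-1}\le d^{-30}/(2\lambda)$, since $\lambda_{\min}(\mathcal Q)<\lambda$ forces $\mathrm{tr}(\mathcal Q+\lambda I)^{-1}>1/(2\lambda)$. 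The deterministic lower bound exceeds that threshold by a factor of at least $d^{30}/(2^{18}\log d)$. The inverse trace decays like $1/t$ at best, never geometrically to $\lambda^{-1}d^{-40}$. A matrix-exponential potential such as $\mathrm{tr}\exp(-\theta\mathcal Q)$ does decay geometrically and might be made to work, but that is a different argument which you have not supplied.

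The idea you are missing is that the paper never bounds $\lambda_{\min}(JJ^\top)$ directly. It works on the control side, where uniformity over the test direction comes for free because everything is linear.

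Starting from $H\in\mathsf T_X\stframe{d}{k}$, it sets $B_{t+1,e}=-A_e^0/(5\Delta)$, where $A_e^0$ is the traceless $e$-block of the minimal-norm lift $A=A_{X_t}(H_t)\in\mathfrak{su}(d)$ (\cref{clm:tv_lift}). It corrects only for $t\ge t_0$ on the row-norm event $\mathcal G_t$, which does not depend on $H$. Hence $H\mapsto H_T$ is a random linear map with $H_T=G_T\cdots G_1H+J[\mathcal B]$.

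The lifted energy $W_t=\|A_{X_t}(H_t)\|_F^2$ never increases. On $\mathcal G_t$ it contracts by $1-1/(5\Delta(d-1))$ in conditional expectation. This gives $\mathbb E\|H_T\|_F^2\le d^{-37}\|H\|_F^2$ for each fixed $H$, and $\|\mathcal B\|_F^2\le d^2\|H\|_F^2/(k+\log d)$ always. Markov's inequality and a union bound over an orthonormal basis of the tangent space (dimension at most $d^2$) then give $\|H_T\|_F\le\frac12\|H\|_F$ for all $H$ at once. Taking $H=G_1^\dagger\cdots G_T^\dagger V$ in $1=\langle V,H_T\rangle_F-\langle J^\top V,\mathcal B\rangle_F$ finishes the proof.

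The lift also repairs a defect in your Step~1: the constant $c$ is not uniform in $k<d$. Take $k=d-1$, $H=\mathrm iZ$, and a delocalized missing column. Then the left-hand side equals $(d-1)/d$ while $\|H\|_F^2=d-1$, so $c\lesssim 1/d$. For $A\in\mathfrak{su}(d)$, by contrast, $\sum_{i<j}\|A^0_{(i,j)}\|_F^2\ge\|A\|_F^2$ holds with constant $1$, and the factor $d$ is paid only once through $\|A_X(H)\|_F^2\le 2d\|H\|_F^2$.

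Finally, the expected per-step observation is of order $c\|H_t\|_F^2/d$ with no factor $\Delta$. The quantity $\Delta$ only bounds the per-step maximum $\|\mathcal S_t\|\le 2\Delta$.
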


\begin{proof}
By the definition of $\sigma_{\mathrm{sur}}$, it suffices to show that,
$\norm{J_{X,\mathcal M}(\mathcal R)^\top V}_F
\geq \sqrt{k+\log d}/(2d)$ holds for every unit vector
$V\in\mathsf T_{\Phi_{X,\mathcal M}(\mathcal R)}\stframe{d}{k}$,
with probability at least $1-d^{-30}$.

Fix $X\in\stframe{d}{k}$.
For each selection $(\mathcal M,\mathcal R)$ of the walk, define
$X_0=X$ and $X_t=G_tX_{t-1}$ for $1\leq t\leq T$,
with $G_t$ as in \cref{def:tv-transition-map}.
We will specify below a rule for choosing rotation directions
and use it to construct a sequence of tangent vectors as follows.
Given $H\in\mathsf T_X\stframe{d}{k}$, set $H_0=H$.
For $t=0,\ldots,T-1$, given $H_t\in\mathsf T_{X_t}\stframe{d}{k}$,
the rule chooses $B_{t+1,e}\in\sualg(2)$ for each $e\in M_{t+1}$
using $X_t$, $H_t$, and $M_{t+1}$, with linear dependence on $H_t$.
Let $B_{t+1}\in\sualg(d)$ be the matrix whose blocks on the pairs
of $M_{t+1}$ are $B_{t+1,e}$ and whose other entries are zero, and set
\begin{equation}
    H_{t+1}=G_{t+1}(H_t+B_{t+1}X_t)\enspace.
    \label{eq:tv_tangent_recursion}
\end{equation}
This recursion gives
$H_{t+1}\in\mathsf T_{X_{t+1}}\stframe{d}{k}$.
Write $\mathcal B=(B_{t,e})_{t\in[T],\,e\in M_t}$ for the resulting
list of rotation directions.
For each fixed $(\mathcal M,\mathcal R)$, both $\mathcal B$ and $H_T$
depend linearly on the initial $H$.

We will show that, for every fixed
$H\in\mathsf T_X\stframe{d}{k}$,
\begin{equation}
    \norm{\mathcal B}_F^2\leq\frac{d^2}{k+\log d}\norm{H}_F^2
    \enspace,\qquad
    \expect{\mathcal M,\mathcal R}{\norm{H_T}_F^2}
        \leq d^{-37}\norm{H}_F^2\enspace,
    \label{eq:tv_control_targets}
\end{equation}
where the first bound will hold for every $(\mathcal M,\mathcal R)$.

We first prove the lemma assuming these two bounds.
Note that the map $L:H\mapsto H_T$ is linear, and
$\mathsf T_X\stframe{d}{k}$ has dimension $D=2dk-k^2\leq d^2$.
By Markov's inequality and a union bound over an orthonormal basis
$E_1,\ldots,E_D$ of $\mathsf T_X\stframe{d}{k}$, the second bound in \cref{eq:tv_control_targets}
implies that $\norm{L(E_i)}_F\leq1/(2\sqrt D)$ for every $i$,
except with probability at most $4D^2d^{-37}\leq d^{-30}$.
Therefore, with probability at least $1-d^{-30}$ over the choice of
$(\mathcal M,\mathcal R)$, we have that, for every
$H=\sum_{i=1}^D a_iE_i\in\mathsf T_X\stframe{d}{k}$ with $a_i\in\mathbb R$,
\begin{align}
    \norm{H_T}_F =\norm{\sum_{i=1}^D a_iL(E_i)}_F\!\!
    \leq\sum_{i=1}^D |a_i|\norm{L(E_i)}_F \leq\frac1{2\sqrt D}\sum_{i=1}^D |a_i|
    \leq\frac12\left(\sum_{i=1}^D |a_i|^2\right)^{1/2}
    \!\!=\frac12\norm{H}_F\,. \label{eq:tv_endpoint_control_HT_bound}
\end{align}
Fix a $(\mathcal M,\mathcal R)$ for which \cref{eq:tv_endpoint_control_HT_bound} holds.
For any $V\in\mathsf T_{X_T}\stframe{d}{k}$ with $\norm{V}_F=1$,
take $H=G_1^\dagger\cdots G_T^\dagger V$.
Then $H\in\mathsf T_X\stframe{d}{k}$, $\norm{H}_F=1$, and
$G_T\cdots G_1H=V$.
Iterating \cref{eq:tv_tangent_recursion}, we have
\[
    H_T=G_T\cdots G_1H +\sum_{t=1}^T G_T\cdots G_tB_tX_{t-1}\enspace.
\]
To identify the sum, hold the chosen directions $\mathcal B$ fixed and set
$R_{t,e}(s)=R_{t,e}\exp(sB_{t,e})$ and
$G_t(s)=\prod_{e\in M_t}R_{t,e}(s)$.
Since the pairs in $M_t$ are disjoint,
$G_t(s)=G_t\exp(sB_t)$, so $G_t'(0)=G_tB_t$.
By \cref{def:tv-jacobian} and the product rule,
\begin{align*}
    J_{X,\mathcal M}(\mathcal R)[\mathcal B]
    &=\left.\frac{\mathrm d}{\mathrm ds}
        G_T(s)\cdots G_1(s)X\right|_{s=0} =\sum_{t=1}^T
        G_T\cdots G_{t+1}(G_tB_t)G_{t-1}\cdots G_1X\\
    &=\sum_{t=1}^T G_T\cdots G_tB_tX_{t-1}\enspace.
\end{align*}
Combining these identities yields
\begin{equation*}
    H_T=G_T\cdots G_1H+J_{X,\mathcal M}(\mathcal R)[\mathcal B]\enspace.
\end{equation*}
For our choice of $H$, we have
$H_T=V+J_{X,\mathcal M}(\mathcal R)[\mathcal B]$.
Therefore,
\begin{align*}
    1~=~\norm{V}_F^2
    &~=~\langle V,H_T\rangle_F
        -\langle V,J_{X,\mathcal M}(\mathcal R)[\mathcal B]\rangle_F
    ~=~\langle V,H_T\rangle_F
        -\langle J_{X,\mathcal M}(\mathcal R)^\top V,\mathcal B\rangle_F\\
    &~\leq~|\langle V,H_T\rangle_F|
        +|\langle J_{X,\mathcal M}(\mathcal R)^\top V,\mathcal B\rangle_F|
        ~\leq~\norm{V}_F\norm{H_T}_F
        +\norm{J_{X,\mathcal M}(\mathcal R)^\top V}_F\norm{\mathcal B}_F\\
    &~\leq~\frac12+\frac d{\sqrt{k+\log d}}
        \norm{J_{X,\mathcal M}(\mathcal R)^\top V}_F\enspace,
\end{align*}
where we used the Cauchy--Schwarz inequality in the second inequality and the bounds in \cref{eq:tv_control_targets,eq:tv_endpoint_control_HT_bound} in the last inequality.
Rearranging gives
$\norm{J_{X,\mathcal M}(\mathcal R)^\top V}_F
\geq\sqrt{k+\log d}/(2d)$.

We now construct the rotation directions $\mathcal B$ and prove
\cref{eq:tv_control_targets}.
For any $X\in\stframe{d}{k}$ and $H\in\mathsf T_X\stframe{d}{k}$,
we define $A_X(H)$ to be the matrix of smallest Frobenius norm among all
$A\in\sualg(d)$ satisfying $AX=H$.
We have the following properties of $A_X$, which is proved in \cref{app:tv_lift}.

\begin{restatable}{claim}{tvLiftClaim}
\label{clm:tv_lift}
For every $X\in\stframe{d}{k}$, the matrix $A_X(H)$ exists and is unique
for every $H\in\mathsf T_X\stframe{d}{k}$, and depends linearly on $H$.
Moreover, the following properties hold.
\begin{enumerate}
    \item For every $H\in\mathsf T_X\stframe{d}{k}$,
    \(
        \norm{H}_F^2
        \leq\norm{A_X(H)}_F^2
        \leq2d\norm{H}_F^2.
    \)
    \item For every unitary $G\in\mathbb C^{d\times d}$,
    \(
        \norm{A_{GX}(GH)}_F
        =\norm{A_X(H)}_F.
    \)

    \item Suppose every row of $X$ has squared Frobenius norm at most $\Delta>0$
    and $B\in\sualg(d)$ is block diagonal on a perfect matching,
    with blocks in $\sualg(2)$.
    Then, for every $H\in\mathsf T_X\stframe{d}{k}$,
    \begin{equation}
        \norm{A_X(H+BX)}_F^2
        \leq\norm{A_X(H)}_F^2
            +2\langle A_X(H),B\rangle_F
            +5\Delta\norm{B}_F^2\enspace.
        \label{eq:tv_lift_change}
    \end{equation}
\end{enumerate}
\end{restatable}

\paragraph{Construction of $\mathcal B$.}
Set $t_0=\lceil30\log(2d^{41})\rceil$ and
$\Delta=4(k+41\log d)/d$.
For $t\geq t_0$, define the event
$\mathcal G_t=\{\max_{i\in[d]}\norm{\mathbf e_i^\dagger X_t}_2^2\leq\Delta\}$.
By \cref{lem:maximum_row_norm} with $\delta=d^{-40}$,
\begin{equation}
    \prob{}{\mathcal G_t^{\mathrm c}}\leq d^{-40}
    \qquad(t\geq t_0)\enspace.
    \label{eq:tv_row_norm_failure}
\end{equation}
For each $t=0,\ldots,T-1$, we choose the rotation directions as follows.
\begin{enumerate}
    \item If $t<t_0$, or if $t\geq t_0$ and $\mathcal G_t$ does not occur,
    set $B_{t+1,e}=0$ for every $e\in M_{t+1}$, so $B_{t+1}=0$.

    \item If $t\geq t_0$ and $\mathcal G_t$ occurs, set
    $A=A_{X_t}(H_t)$.
    For each pair $e=(i,j)$, let $A_e$ be the principal
    $2\times2$ block of $A$ indexed by $e$, and write
    $A_e^0=A_e-\tfrac12\tr{A_e}I_2$.
    For each $e\in M_{t+1}$, set
    \(
        B_{t+1,e}=-\frac{A_e^0}{5\Delta}.
    \)
\end{enumerate}
The choice of case is independent
of $H$, and each $B_{t+1,e}$ depends linearly on $H_t$.
Together with \cref{eq:tv_tangent_recursion}, this shows that
$\mathcal B$ and $H_T$ depend linearly on the initial vector $H$.

\paragraph{Verifying \cref{eq:tv_control_targets}.}
Define $W_t=\norm{A_{X_t}(H_t)}_F^2$ for $t=0,\ldots,T$,
and fix $t_0\leq t<T$.
By \cref{eq:tv_tangent_recursion} and the unitary invariance in
\cref{clm:tv_lift},
\begin{align*}
    W_{t+1}
    &=\norm{A_{G_{t+1}X_t}
        \bigl(G_{t+1}(H_t+B_{t+1}X_t)\bigr)}_F^2 =\norm{A_{X_t}(H_t+B_{t+1}X_t)}_F^2\enspace.
\end{align*}
On $\mathcal G_t$, every row of $X_t$ has squared Frobenius norm at most $\Delta$,
and $B_{t+1}$ has blocks in $\sualg(2)$ on the matching $M_{t+1}$.
Thus, applying \cref{eq:tv_lift_change} with
$(X,H,B)=(X_t,H_t,B_{t+1})$ and writing $A=A_{X_t}(H_t)$ gives
\[
    W_{t+1}\leq W_t+2\langle A,B_{t+1}\rangle_F
        +5\Delta\norm{B_{t+1}}_F^2\enspace.
\]
Rearranging and substituting $B_{t+1,e}=-A_e^0/(5\Delta)$, we obtain
\begin{align}
    W_t-W_{t+1}
    &\geq-2\langle A,B_{t+1}\rangle_F
        -5\Delta\norm{B_{t+1}}_F^2 =\frac1{5\Delta}
        \sum_{e\in M_{t+1}}\norm{A_e^0}_F^2
     =5\Delta\sum_{e\in M_{t+1}}\norm{B_{t+1,e}}_F^2\enspace,
    \label{eq:tv_energy_decrease}
\end{align}
where we used $\langle A_e,A_e^0\rangle_F=\norm{A_e^0}_F^2$.
When no correction is made, $B_{t+1}=0$ and unitary invariance gives
$W_{t+1}=W_t$.
Thus, for every $t=0,\ldots,T-1$,
\(
  W_t-W_{t+1} \geq0.
\)
In particular,
\begin{equation}
    0\leq W_t\leq W_0
    \qquad(t=0,\ldots,T)\enspace.
    \label{eq:tv_energy_bound}
\end{equation}
Summing over all steps and using $W_T\geq0$, we obtain
\[
    5\Delta\norm{\mathcal B}_F^2
    =5\Delta\sum_{t=0}^{T-1}\sum_{e\in M_{t+1}}
        \norm{B_{t+1,e}}_F^2
    \leq\sum_{t=0}^{T-1}(W_t-W_{t+1})
    =W_0-W_T\leq W_0\enspace.
\]
Therefore, by \cref{clm:tv_lift}, we have
\[
    \norm{\mathcal B}_F^2
    \leq\frac{W_0}{5\Delta}
    \leq\frac{2d}{5\Delta}\norm{H}_F^2
    \leq\frac{d^2}{k+\log d}\norm{H}_F^2\enspace,
\]
which proves the first bound in \cref{eq:tv_control_targets}.

We next prove the second bound in \cref{eq:tv_control_targets}.
Fix $t_0\leq t<T$ and write $A=A_{X_t}(H_t)$.
Since $A\in\sualg(d)$, we have $\tr{A}=0$ and hence
\[
    \sum_{i<j}|A_{ii}-A_{jj}|^2
    =d\sum_{i=1}^d|A_{ii}|^2-|\tr{A}|^2
    =d\sum_{i=1}^d|A_{ii}|^2\enspace.
\]
By the definition of $A_{(i,j)}^0$, it follows that
\begin{align*}
    \sum_{i<j}\norm{A_{(i,j)}^0}_F^2
    &=\sum_{i\ne j}|A_{ij}|^2
        +\frac12\sum_{i<j}|A_{ii}-A_{jj}|^2 =\sum_{i\ne j}|A_{ij}|^2
        +\frac d2\sum_{i=1}^d|A_{ii}|^2
    \geq\norm{A}_F^2=W_t\enspace,
\end{align*}
where the inequality uses $d\geq2$.

Set $\gamma=1/(5\Delta(d-1))$.
Conditioned on $X_t,H_t$, the matching $M_{t+1}$ is uniform,
and each pair belongs to $M_{t+1}$ with probability $1/(d-1)$.
Thus, on $\mathcal G_t$, \cref{eq:tv_energy_decrease} gives
\begin{align*}
    \expect{}{\left.W_{t+1}\,\right\vert X_t,H_t}
    &\leq W_t-\frac1{5\Delta}
        \expect{}{\left.\sum_{e\in M_{t+1}}\norm{A_e^0}_F^2
            \,\right\vert X_t,H_t}\\
    &=W_t-\frac1{5\Delta(d-1)}
        \sum_{i<j}\norm{A_{(i,j)}^0}_F^2
    \leq(1-\gamma)W_t\enspace.
\end{align*}
On $\mathcal G_t^{\mathrm c}$, we have $B_{t+1}=0$ and $W_{t+1}=W_t$.
Taking expectations over both events, we obtain
\begin{align}
    \expect{}{W_{t+1}}
    &\leq(1-\gamma)\expect{}{\mathbf{1}_{\mathcal G_t}W_t}
        +\expect{}{\mathbf{1}_{\mathcal G_t^{\mathrm c}}W_t}\nonumber\\
    &=(1-\gamma)\expect{}{W_t}
        +\gamma\expect{}{\mathbf{1}_{\mathcal G_t^{\mathrm c}}W_t}\nonumber\\
    &\leq(1-\gamma)\expect{}{W_t}+\gamma d^{-40}W_0\enspace,
    \label{eq:tv_energy_expectation_recursion}
\end{align}
where the last inequality follows from
\cref{eq:tv_energy_bound,eq:tv_row_norm_failure}.
No correction is made before time $t_0$, so $W_{t_0}=W_0$.
Iterating \cref{eq:tv_energy_expectation_recursion} yields
\begin{align*}
    \expect{}{W_T}
    &\leq(1-\gamma)^{T-t_0}W_0
        +\gamma d^{-40}W_0\sum_{j=0}^{T-t_0-1}(1-\gamma)^j\\
    &\leq\bigl((1-\gamma)^{T-t_0}+d^{-40}\bigr)W_0\enspace,
\end{align*}
where we used $0<\gamma<1$ and the geometric series bound.
By the choices of $\Delta$ and $t_0$,
\[
    \gamma=\frac{d}{20(k+41\log d)(d-1)}
    \geq\frac1{820(k+\log d)} \quad\text{and}\quad
    t_0\leq1262\log d\enspace.
\]
Our choice of $T$ therefore gives
\[
    \gamma(T-t_0)\geq\frac{T-t_0}{820(k+\log d)}
    \geq78\log d\enspace.
\]
Consequently, using $1-\gamma\leq\e^{-\gamma}$ and the norm bounds
in \cref{clm:tv_lift}, we have
\begin{align*}
    \expect{}{\norm{H_T}_F^2}
    &\leq\expect{}{W_T}
    \leq\bigl(\e^{-\gamma(T-t_0)}+d^{-40}\bigr)W_0 \leq(d^{-78}+d^{-40})W_0
    \leq d^{-39}W_0
    \leq d^{-37}\norm{H}_F^2\enspace,
\end{align*}
where we used $d\geq2$ in the last two inequalities.
This proves the second bound in \cref{eq:tv_control_targets}
and completes the proof.
\end{proof}

\subsection{Total Variation Mixing Bound}
\label{subsec:tv-mixing-bound}

This subsection proves \cref{thm:tv_mixing}.
We restate the theorem for convenience.

\tvMixingTheorem*

\begin{proof}
Set $T_1=\lceil564000(k+\log d)\log d\rceil$,
$T_2=\lceil2^{16}(k+\log d)\log d\rceil$, and $T_3=T_1+T_2$.
We first show that, for every $X,Y\in\stframe{d}{k}$, there is a
coupling of the walks started from $X$ and $Y$ such that
\begin{equation}
    \prob{}{X_{T_3}\ne Y_{T_3}}\leq d^{-20}\enspace.
    \label{eq:tv_block_coupling}
\end{equation}
We then apply this coupling repeatedly to obtain the claimed mixing bound.

\paragraph{Bringing the two walks close.}
Fix $X,Y\in\stframe{d}{k}$.
By \cref{thm:wasserstein_mixing} with $\varepsilon=d^{-140}$,
we have
\[
    W_D\!\left(\kackernel^{T_1}(X,\cdot),
        \kackernel^{T_1}(Y,\cdot)\right)
    \leq d^{-140}\enspace.
\]
Thus, by \cref{lem:frobenius_riemannian_distance_comparison},
we can couple the states $X',Y'$ after $T_1$ steps so that
\[
    \expect{}{\norm{X'-Y'}_F}
    \leq\expect{}{D(X',Y')}
    \leq d^{-140}\enspace.
\]
By Markov's inequality,
\begin{equation}
    \prob{}{\norm{X'-Y'}_F>d^{-116}}
    \leq d^{116}\cdot\expect{}{\norm{X'-Y'}_F}
    \leq d^{-24}\enspace.
    \label{eq:tv_close_states_failure}
\end{equation}

\paragraph{Coupling the final matrices.}
Fix $X',Y'$ satisfying $\norm{X'-Y'}_F\leq d^{-116}$.
Set $\kappa=2d/\sqrt{k+\log d}$ and let $C=2^{20}$ be the constant
in \cref{lem:tv-gate-correction}.
We have
\begin{equation}
    C\kappa^4d^9T_2^3\norm{X'-Y'}_F
    \leq2^{20}(2d)^4d^9(2^{17}d^2)^3d^{-116}
    =2^{75}d^{-97}\leq d^{-22}\enspace.
    \label{eq:tv_correction_error_bound}
\end{equation}
In particular, $\norm{X'-Y'}_F\leq1/(C\kappa^4d^9T_2^3)$,
as required in \cref{lem:tv-gate-correction}.

For each matching sequence $\mathcal M$, define
\[
    \mathfrak R_{\mathcal M}
    =\left\{\mathcal R\in\su(2)^m:
        \sigma_{\mathrm{sur}}(J_{X',\mathcal M}(\mathcal R))
        \geq\kappa^{-1}\right\},
    \qquad
    \eta_{\mathcal M}
    =\prob{\mathcal R\sim\nu_m}{\mathcal R\notin\mathfrak R_{\mathcal M}},
\]
where $m=dT_2/2$.
Since $\kappa^{-1}=\sqrt{k+\log d}/(2d)$,
\cref{lem:tv_endpoint_control} gives
\begin{equation}
    \expect{\mathcal M}{\eta_{\mathcal M}}
    =\prob{\mathcal M,\mathcal R}{
        \sigma_{\mathrm{sur}}(J_{X',\mathcal M}(\mathcal R))<\kappa^{-1}}
    \leq d^{-30}\enspace.
    \label{eq:tv_average_rotation_failure}
\end{equation}
Thus, applying \cref{lem:tv-gate-correction} with initial matrices
$X',Y'$ gives a coupling of the rotation lists for the remaining
$T_2$ steps, each with distribution $\nu_m$, such that
\begin{equation}
    \prob{}{\left.X_{T_3}\ne Y_{T_3}\,\right\vert X',Y',\mathcal M}
    \leq\eta_{\mathcal M}+C\kappa^4d^9T_2^3\norm{X'-Y'}_F\enspace.
    \label{eq:tv_fixed_matching_coupling}
\end{equation}
Sample the same matching sequence for both walks and then use this
conditional coupling of their rotation lists.
Averaging \cref{eq:tv_fixed_matching_coupling} over $\mathcal M$, we obtain
\begin{align*}
    \prob{}{\left.X_{T_3}\ne Y_{T_3}\,\right\vert X',Y'}
    &\leq\expect{\mathcal M}{\eta_{\mathcal M}}
        +C\kappa^4d^9T_2^3\norm{X'-Y'}_F \leq d^{-30}+d^{-22} \leq d^{-21}\enspace,
\end{align*}
where the second inequality uses
\cref{eq:tv_average_rotation_failure,eq:tv_correction_error_bound}.
If $\norm{X'-Y'}_F>d^{-116}$, use any coupling for the remaining steps.
Together with \cref{eq:tv_close_states_failure}, this yields
\[
    \prob{}{X_{T_3}\ne Y_{T_3}}
    \leq\prob{}{\norm{X'-Y'}_F>d^{-116}}+d^{-21}
    \leq d^{-24}+d^{-21} \leq d^{-20}\enspace,
\]
proving \cref{eq:tv_block_coupling}.

\paragraph{Iterating the coupling.}
Now start the first walk from an arbitrary $X\in\stframe{d}{k}$ and
sample the initial state $Y$ of the second walk from $\mu_{d,k}$.
Apply the coupling in \cref{eq:tv_block_coupling} in successive blocks
of $T_3$ steps; once the states agree, use identical updates thereafter.
Thus, for every integer $s\geq1$,
\[
    \prob{}{X_{(s+1)T_3}\ne Y_{(s+1)T_3}}
    \leq d^{-20}\cdot \prob{}{X_{sT_3}\ne Y_{sT_3}}\enspace.
\]
It follows by induction that
$\prob{}{X_{sT_3}\ne Y_{sT_3}}\leq d^{-20s}$.
By \cref{lem:coupling_lemma}, we therefore have
\begin{equation*}
    \sup_{X\in\stframe{d}{k}}
    \tv{\kackernel^{sT_3}(X,\cdot)-\mu_{d,k}}
    \leq d^{-20s}\enspace.
\end{equation*}
For $\varepsilon\in(0,1)$, take
$s=\lceil\log(1/\varepsilon)/(20\log d)\rceil$.
Therefore, we have
$t_{\mathrm{mix}}(\varepsilon)\leq sT_3$.
Since $T_3\leq2^{20}(k+\log d)\log d$, we have
\begin{align*}
    t_{\mathrm{mix}}(\varepsilon)
    &\leq sT_3 \leq2^{20}(k+\log d)
        \left(\log d+\frac{\log(1/\varepsilon)}{20}\right) \leq2^{20}(k+\log d)\log(d/\varepsilon)\enspace.\qedhere
\end{align*}
\end{proof}

\endgroup

\newpage
\bibliographystyle{alpha}
\bibliography{ref}

\newpage
\begingroup
\setstretch{1.1}

\newpage

\appendix
\hypersetup{bookmarksdepth=1}
\section*{Appendix}
\addcontentsline{toc}{section}{Appendix}
\addtocontents{toc}{\protect\setcounter{tocdepth}{0}}
\section{\texorpdfstring{Proof of \cref{clm:pair_observation}}{Proof of Claim \ref*{clm:pair_observation}}}
\label[appendix]{app:pair_observation}

\pairObservationClaim*

\begin{proof}
Since $\sigma_z^\dagger=-\sigma_z$ for $z\in\{1,2,3\}$,
we have that
$ (X_{(i,j)}^\dagger\sigma_zX_{(i,j)} )^\dagger = - (X_{(i,j)}^\dagger\sigma_zX_{(i,j)} ) $
which means that the diagonal entries of $X_{(i,j)}^\dagger\sigma_zX_{(i,j)}$ are purely imaginary.
Thus, for $z\in\{1,2,3\}$,
\begin{equation}
    \re\!\br{
        \tr{X_{(i,j)}^\dagger\sigma_zX_{(i,j)}}
    }
    =0\enspace. \label{eq:x_sigma_x_zero}
\end{equation}

Let $K=Y-X$ and $P=\frac12\cdot(KX^\dagger-XK^\dagger)$.
For every $i<j$, let $K_{(i,j)}=Y_{(i,j)}-X_{(i,j)}$.
By \cref{eq:x_sigma_x_zero}, we have
\begin{align*}
    p_{(i,j),z}
    ~&=~
    \re\!\br{
        \tr{X_{(i,j)}^\dagger\sigma_zK_{(i,j)}}
    }
    ~=~
    \frac12\cdot\re\!\br{
        \tr{\br{
            K_{(i,j)}X_{(i,j)}^\dagger
            -X_{(i,j)}K_{(i,j)}^\dagger
        }\sigma_z}
    } \\
    ~&=~
    \re\!\br{
        \tr{
            \begin{pmatrix}
                P_{ii}&P_{ij}\\
                -\overline{P_{ij}}&P_{jj}
            \end{pmatrix}
            \sigma_z
        }
    }\enspace,
\end{align*}
where we use $P = -P^\dagger$ in the last step.
Substituting the three
matrices in \cref{eq:basis_for_su2} gives
\begin{align*}
    p_{(i,j),1}
    ~=~
    -2\cdot\re\!\br{P_{ ij}} \enspace,\qquad
    p_{(i,j),2}
    ~=~
    -2\cdot\im\!\br{P_{ij}}\enspace,\qquad
    p_{(i,j),3}
    ~=~
    \mathrm{i}\cdot\br{P_{ii}-P_{jj}}\enspace.
\end{align*}
Thus,
$
    \sum_{z=1}^{3}p_{(i,j),z}^2
    =
    \abs{P_{ii}-P_{jj}}^2+4\abs{P_{ij}}^2
$.
Consequently,
\begin{align}
    \sum_{i<j}\sum_{z=1}^{3}p_{(i,j),z}^2
    ~=~
    d\cdot\sum_{i=1}^{d}
    \abs{P_{ii}-\frac{\tr{P}}{d}}^2
    +4\cdot\sum_{i<j}\abs{P_{ij}}^2
    ~\geq~
    2\cdot\norm{P-\frac{\tr{P}}{d}\cdot I_d}_F^2\enspace,
    \label{eq:pair_observation_centered_P}
\end{align}
where we used
$\sum_{i<j}\abs{P_{ii}-P_{jj}}^2
=d\sum_i\abs{P_{ii}-\tr{P}/d}^2$ and $d\geq2$.

Now we write
$X^\dagger K = S + A$
with $ S = \frac{1}{2}( X^\dagger K + K^\dagger X ) = -\frac12\cdot K^\dagger K $ 
and $ A = \frac{1}{2}( X^\dagger K - K^\dagger X )  $.
Moreover, decomposing
$K=XX^\dagger K+(I_d-XX^\dagger)K$ yields
\begin{align*}
    P
    ~=~
    XAX^\dagger
    +\frac12\cdot\br{
        (I_d-XX^\dagger)KX^\dagger
        -XK^\dagger(I_d-XX^\dagger)
    }\enspace.
\end{align*}
Since $X^\dagger X = I_k$, we have
$X^\dagger(I_d - XX^\dagger) = (I_d - XX^\dagger)X = 0 $.
Therefore, we have that
$XAX^\dagger$, $(I_d-XX^\dagger)KX^\dagger$ and $XK^\dagger(I_d-XX^\dagger)$
are pairwise orthogonal
with respect to the Frobenius inner product.
Hence, we have
\begin{align*}
    \norm{P}_F^2
    ~=~
    \norm{A}_F^2
    +\frac12\cdot\norm{(I_d-XX^\dagger)K}_F^2 \enspace,
    \qquad
    \tr{P}=\tr{A}\enspace.
\end{align*}
Note that
\begin{align*}
    \norm{P-\frac{\tr{P}}{d}\cdot I_d}_F^2
    ~&=~
    \tr{ P^\dagger P - \frac{\tr{P}}{d}\cdot P^\dagger - \frac{\overline{\tr{P}}}{d}\cdot P + \frac{\abs{\tr{P}}^2}{d^2}\cdot I_d }
    ~=~
    \norm{P}_F^2 - \frac{\abs{\tr{P}}^2}{d} \enspace.
\end{align*}
Using $\abs{\tr{A}}^2\leq k\norm{A}_F^2$ and $k\leq d/2$, we obtain
\begin{align}
    \norm{P-\frac{\tr{P}}{d}\cdot I_d}_F^2 
    ~&=~
    \norm{A}_F^2
    +\frac12\cdot\norm{(I_d-XX^\dagger)K}_F^2
    - \frac{\abs{\tr{A}}^2}{d} \nonumber\\
    ~&\geq~
    \br{1-\frac{k}{d}}\cdot\norm{A}_F^2
    +\frac12\cdot\norm{(I_d-XX^\dagger)K}_F^2 \nonumber\\
    ~&\geq~
    \frac12\cdot\br{
        \norm{A}_F^2+\norm{(I_d-XX^\dagger)K}_F^2
    } \nonumber \\
    ~&=~
    \frac12\cdot\br{\norm{K}_F^2-\norm{S}_F^2}\enspace,
    \label{eq:pair_observation_lower_bound_on_P}
\end{align}
where the last equality holds because
$\re\!\br{ \tr{S^\dagger A} } = 0$ and therefore
\begin{align*}
    \norm{K}_F^2
    ~=~
    \norm{X^\dagger K}_F^2+\norm{(I_d-XX^\dagger)K}_F^2
    ~=~
    \norm{S}_F^2+\norm{A}_F^2+\norm{(I_d-XX^\dagger)K}_F^2\enspace.
\end{align*}
Since $S=-\frac12\cdot K^\dagger K$, we have
$ \norm{S}_F^2 \leq \frac14\cdot\norm{K}_F^4 $.
Combining
\cref{eq:pair_observation_centered_P,eq:pair_observation_lower_bound_on_P},
 we get
\begin{align*}
    \sum_{i<j}\sum_{z=1}^{3}p_{(i,j),z}^2
    ~\geq~
    \norm{K}_F^2-
    \frac14\cdot\norm{K}_F^4
    ~\geq~
    \frac34\cdot\norm{K}_F^2 
    ~=~
    \frac34\cdot\norm{X-Y}_F^2\enspace,
\end{align*}
where the second inequality uses $\norm{X-Y}_F=\norm{K}_F\leq1$.
This proves the claim.
\end{proof}

\section{\texorpdfstring{Proof of \cref{clm:su_pair_observation}}{Proof of Claim \ref*{clm:su_pair_observation}}}
\label[appendix]{app:su_pair_observation}

\suPairObservationClaim*

\begin{proof}
    Let $K=Y-X$, $P=\frac12\cdot(KX^\dagger-XK^\dagger)$
    and 
    $ \widehat{P} = P-\frac{\tr{P}}{d}\cdot I_d $.
    By \cref{eq:pair_observation_centered_P}, we have
    $
        \sum_{i<j}\sum_{z=1}^3p_{(i,j),z}^2
        \geq
        2\Vert \widehat{P} \Vert_F^2
    $. 
Thus, it suffices to show that
$\Vert\widehat{P}\Vert_F^2\geq\frac34\cdot\norm{X-Y}_F^2$.
Let $\e^{\i\theta_1},\dots,\e^{\i\theta_d}$ be the eigenvalues of $YX^\dagger$,
where $\theta_a\in(-\pi,\pi]$.
Since $P=\frac12\cdot(YX^\dagger-XY^\dagger)$, the eigenvalues of $P$ are
$\i\sin\theta_1,\dots,\i\sin\theta_d$.
Therefore, we have
\begin{align*}
    \Vert\widehat{P}\Vert_F^2
    ~&=~
    \sum_{a=1}^d\br{
        \sin\theta_a-\frac1d\sum_{b=1}^d\sin\theta_b
    }^2
    ~=~
    \frac1d\sum_{a<b}(\sin\theta_a-\sin\theta_b)^2
    ~\geq~
    \frac34\sum_{a=1}^d\theta_a^2
    ~\geq~
    \frac34\cdot\norm{X-Y}_F^2\enspace,
\end{align*}
where the first inequality is proved below, and the last inequality uses
the fact that
$\norm{X-Y}_F^2=4\sum_a\sin^2(\theta_a/2)\leq\sum_a\theta_a^2$.
To prove the first inequality, we use
$\abs{\theta}\leq\frac\pi2\abs{1-\e^{\i\theta}}$
for $\abs{\theta}\leq\pi$. By
\cref{lem:frobenius_riemannian_distance_comparison},
\[
    \abs{\theta_a}
    ~\leq~\frac\pi2\cdot\norm{X-Y}_F
    ~\leq~\frac\pi2\cdot D(X,Y)
    ~\leq~\frac{\pi}{32(d-1)}\enspace.
\]
Since $\det YX^\dagger=1$ and
$\abs{\sum_a\theta_a}\leq d\pi/(32(d-1))<2\pi$,
we have $\sum_a\theta_a=0$.
Moreover, $\abs{\theta_a}\leq\pi/32<\pi/6$ for every $a$,
so the mean value theorem gives
\[
    \abs{\sin\theta_a-\sin\theta_b}
    ~\geq~\frac{\sqrt3}{2}\cdot\abs{\theta_a-\theta_b}\enspace.
\]
Squaring and summing over $a<b$, and using
$\sum_{a<b}(\theta_a-\theta_b)^2=d\sum_a\theta_a^2$,
proves the first inequality and completes the proof.
\end{proof}

\section{\texorpdfstring{Proof of \cref{clm:tv_correction_invertibility,clm:tv_correction_flow_bounds}}{Proof of Claims 5.4 and 5.5}}
\label[appendix]{app:tv_correction_claims}

We use the notation and hypotheses of \cref{lem:tv-gate-correction},
with $C=2^{20}$. In particular,
$\|X-Y\|_F\le(C\kappa^4d^9T^3)^{-1}\le1/4$.
We first establish some bounds needed in the proofs below.
We then prove \cref{clm:tv_correction_invertibility,clm:tv_correction_flow_bounds}
in \cref{app:tv_correction_invertibility_proof,app:tv_correction_flow_bounds_proof}, respectively.

\subsection{Bounds Used in the Proof}
\label{subsec:tv_correction_derivatives}

For fixed $\mathcal R$, we regard $J_{X_q,\mathcal M}(\mathcal R)$
as a linear map from $\sualg(2)^m$ into the fixed ambient space
$\C^{d\times k}$ and differentiate with respect to $q$ in this space.
We define
\[
  \|\partial_qJ_{X_q,\mathcal M}(\mathcal R)\|_{\mathrm{op}}
  \coloneqq\sup_{\|\mathcal B\|_F=1}
    \left\|\frac{\mathrm d}{\mathrm d q}
      \bigl(J_{X_q,\mathcal M}(\mathcal R)[\mathcal B]\bigr)
    \right\|_F\enspace,
\]
where $\mathcal B\in\sualg(2)^m$ is held fixed when differentiating.

\begin{claim}
\label{clm:tv_basic_bounds}
For all $q\in[0,1]$ and $\mathcal R\in\su(2)^m$,
$\|J_{X_q,\mathcal M}(\mathcal R)\|_{\mathrm{op}}\le\sqrt m$,
$\|\partial_qJ_{X_q,\mathcal M}(\mathcal R)\|_{\mathrm{op}}
\le2\sqrt m\,\|X-Y\|_F$, and
$\|H_{\mathcal R}(q)\|_F\le2\|X-Y\|_F$.
\end{claim}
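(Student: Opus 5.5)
We prove the three bounds in turn, starting from the explicit formula for the Jacobian. As in the proof of \cref{lem:tv_endpoint_control}, the product rule gives, for fixed $\mathcal R$ and $\mathcal B$,
\[
  J_{X_q,\mathcal M}(\mathcal R)[\mathcal B]=\sum_{t=1}^T G_T\cdots G_t B_t G_{t-1}\cdots G_1 X_q ,
\]
where $B_t\in\sualg(d)$ is block diagonal on $M_t$ with blocks $B_{t,e}$. Each term is a unitary times $B_tG_{t-1}\cdots G_1X_q$. Its Frobenius norm is therefore at most $\|B_t\|_{\mathrm{op}}\|X_q\|_F$, and more usefully at most $\|B_t\|_F\|X_q\|_{\mathrm{op}}=\|B_t\|_F$, since $X_q$ has orthonormal columns. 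Summing over $t$ and applying Cauchy--Schwarz gives
\[
  \|J[\mathcal B]\|_F\le\sum_t\|B_t\|_F\le\sqrt T\,\|\mathcal B\|_F .
\]
Since $T\le m=dT/2$, this is at most $\sqrt m\,\|\mathcal B\|_F$, which is the first bound.

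For the other two bounds, note that $J_{X_q,\mathcal M}(\mathcal R)[\mathcal B]$ and $\Phi_{X_q,\mathcal M}(\mathcal R)=G_T\cdots G_1X_q$ depend on $q$ only through $X_q$, and they do so linearly. Differentiating therefore just replaces $X_q$ by $\dot X_q=\frac{\mathrm d}{\mathrm dq}X_q$. Repeating the estimate above with $\|\dot X_q\|_{\mathrm{op}}\le\|\dot X_q\|_F$ in place of $\|X_q\|_{\mathrm{op}}=1$ gives
\[
  \|\partial_qJ\|_{\mathrm{op}}\le\sqrt T\,\|\dot X_q\|_F\le\sqrt m\,\|\dot X_q\|_F ,
  \qquad
  \|H_{\mathcal R}(q)\|_F=\|G_T\cdots G_1\dot X_q\|_F=\|\dot X_q\|_F .
\]
Both remaining claims thus reduce to the single estimate $\|\dot X_q\|_F\le2\|X-Y\|_F$.

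To prove this estimate, write $X_q=Z_qN_q^{-1/2}$ with $N_q=Z_q^\dagger Z_q$, $\dot Z_q=Y-X=:K$ and $\dot N_q=Z_q^\dagger K+K^\dagger Z_q$. The product rule gives
\[
  \dot X_q=K N_q^{-1/2}+Z_q\,\tfrac{\mathrm d}{\mathrm dq}N_q^{-1/2}.
\]
We bound the two terms separately.
\begin{itemize}
  \item For the first term, $N_q\succeq(1-\|K\|_F^2/4)I_k$ and $\|K\|_F\le1/4$ give $\|N_q^{-1/2}\|_{\mathrm{op}}\le(1-1/64)^{-1/2}$.
  \item For the second term, we use the integral representation $N^{-1/2}=\frac{2}{\pi}\int_0^\infty(N+s^2)^{-1}\,\mathrm ds$. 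Its derivative is bounded by $\frac12\|N_q^{-1}\|_{\mathrm{op}}^{3/2}\|\dot N_q\|_F$, or alternatively via the Daleckii--Krein bound for the operator-monotone square root. Also $\|Z_q\|_{\mathrm{op}}\le1$.
\end{itemize}
The main subtlety is that $\|\dot N_q\|_F\le2\|Z_q\|_{\mathrm{op}}\|K\|_F\le2\|K\|_F$ is too crude to give the constant $2$ in total. We sharpen it using
\[
  Z_q^\dagger K+K^\dagger Z_q=X^\dagger K+K^\dagger X+2qK^\dagger K=(2q-1)K^\dagger K ,
\]
where the last step uses $X^\dagger K+K^\dagger X=-K^\dagger K$ (from $X^\dagger X=Y^\dagger Y=I_k$). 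Hence $\|\dot N_q\|_F\le\|K\|_F^2\le\|K\|_F/4$. The second term is then at most about $\frac12(64/63)^{3/2}\|K\|_F/4<0.14\|K\|_F$, while the first is at most $1.01\|K\|_F$. Adding them gives $\|\dot X_q\|_F\le2\|X-Y\|_F$ with room to spare, which completes all three bounds.
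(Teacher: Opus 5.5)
Your proof is correct and follows essentially the paper's argument: expand the Jacobian by the product rule and apply the triangle and Cauchy--Schwarz inequalities, note that $q$ enters only through $X_q$, and bound $\|\dot X_q\|_F$ using the identity $X^\dagger K+K^\dagger X=-K^\dagger K$. The differences are cosmetic. The paper writes $\dot X_q$ in closed form, using that $N_q=I_k-q(1-q)K^\dagger K$ commutes with its derivative so that $\frac{\mathrm d}{\mathrm dq}N_q^{-1/2}=\frac{1-2q}{2}N_q^{-3/2}K^\dagger K$, where you use the integral representation; and it sums over the $m$ individual rotations where you sum over the $T$ layers, so it gets $\sqrt m$ where you get the slightly sharper $\sqrt T$.
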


\begin{proof}
Write $E=Y-X$.
Since $Z_q^\dagger Z_q=I_k-q(1-q)E^\dagger E\succeq(63/64)I_k$ is invertible,
we have
\[
  \dot X_q
  =E(Z_q^\dagger Z_q)^{-1/2}
   +\frac{1-2q}{2}Z_q(Z_q^\dagger Z_q)^{-3/2}E^\dagger E\enspace.
\]
Since $Z_q^\dagger Z_q\succeq(63/64)I_k$, we have
\[
  \|(Z_q^\dagger Z_q)^{-1/2}\|_{\mathrm{op}}
  \le\left(\frac{64}{63}\right)^{1/2},
  \qquad
  \|(Z_q^\dagger Z_q)^{-3/2}\|_{\mathrm{op}}
  \le\left(\frac{64}{63}\right)^{3/2}.
\]
Moreover, $\|Z_q\|_{\mathrm{op}}\le1$, $|1-2q|\le1$, and
\(
  \|E^\dagger E\|_F
  \le\|E\|_{\mathrm{op}}\|E\|_F
  \le\|E\|_F^2.
\)
Therefore, using $\|E\|_F=\|X-Y\|_F\le1/4$, we obtain
\[
  \|\dot X_q\|_F
  \le
  \left(\frac{64}{63}\right)^{1/2}\|E\|_F
  +\frac12\left(\frac{64}{63}\right)^{3/2}\|E\|_F^2 \le
  \left[
    \left(\frac{64}{63}\right)^{1/2}
    +\frac18\left(\frac{64}{63}\right)^{3/2}
  \right]\|E\|_F
  \le2\|X-Y\|_F \enspace.
\]
For a rotation list $\mathcal R=(R_{t,e})_{t,e}$, write
$R_1,\dots,R_m$ for the rotations in their order of application.
For $\mathcal B=(B_i)_{i=1}^m\in\sualg(2)^m$, the product rule gives
\begin{equation}
\label{eq:tv_jacobian_product}
  J_{X_q,\mathcal M}(\mathcal R)[\mathcal B]
  =\sum_{i=1}^m R_m\cdots R_iB_iR_{i-1}\cdots R_1X_q \enspace.
\end{equation}
Since the rotations are unitary and $\|X_q\|_{\mathrm{op}}=1$,
\[
  \|J_{X_q,\mathcal M}(\mathcal R)[\mathcal B]\|_F
  \le\sum_{i=1}^m\|B_i\|_F \le\sqrt m
    \left(\sum_{i=1}^m\|B_i\|_F^2\right)^{1/2}
   =\sqrt m\,\|\mathcal B\|_F\enspace,
\]
where the second inequality is Cauchy--Schwarz.
Thus, $\|J_{X_q,\mathcal M}(\mathcal R)\|_{\mathrm{op}}
\le\sqrt m$.
Since
\[
  (\partial_qJ_{X_q,\mathcal M}(\mathcal R))[\mathcal B]
  =\sum_{i=1}^m
    R_m\cdots R_iB_iR_{i-1}\cdots R_1\dot X_q\enspace,
\]
we have
\[
\begin{aligned}
  \|(\partial_qJ_{X_q,\mathcal M}(\mathcal R))[\mathcal B]\|_F
  &\le\|\dot X_q\|_F\sum_{i=1}^m\|B_i\|_F \le2\sqrt m\,\|X-Y\|_F\|\mathcal B\|_F.
\end{aligned}
\]
Therefore,
\(
  \|\partial_qJ_{X_q,\mathcal M}(\mathcal R)\|_{\mathrm{op}}
  \le2\sqrt m\,\|X-Y\|_F.
\)
Finally, since $H_{\mathcal R}(q)=R_m\cdots R_1\dot X_q$ and the
rotations are unitary,
\[
  \|H_{\mathcal R}(q)\|_F=\|\dot X_q\|_F
  \le2\|X-Y\|_F \enspace. \qedhere
\]
\end{proof}

\begin{claim}
\label{clm:tv_q_derivative_bounds}
For all $q\in[0,1]$ and $\mathcal R\in\su(2)^m$,
\(
  \|A_{q,\mathcal R}-A_{0,\mathcal R}\|_{\mathrm{op}}
  \le(4m+4)\|X-Y\|_F.
\)
\end{claim}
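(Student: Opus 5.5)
We split $A_{q,\mathcal R}-A_{0,\mathcal R}$ into two pieces:
\[
A_{q,\mathcal R}-A_{0,\mathcal R}
=\bigl(J_qJ_q^\top-J_0J_0^\top\bigr)-\bigl(\Pi_q-\Pi_0\bigr),
\]
where $J_q=J_{X_q,\mathcal M}(\mathcal R)$, viewed as a map $\sualg(2)^m\to\C^{d\times k}$, and $\Pi_q$ is the orthogonal projection onto $\mathsf T_{\Phi_{X_q,\mathcal M}(\mathcal R)}\stframe{d}{k}$. The target $(4m+4)\|X-Y\|_F$ suggests bounding the first piece by $4m\|X-Y\|_F$ and the second by $4\|X-Y\|_F$. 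One subtlety: $J_q^\top$ was extended by zero off the tangent space. But $J_q$ maps into the tangent space anyway, so its Frobenius adjoint as a map into $\C^{d\times k}$ already vanishes on the normal complement. Hence $J_qJ_q^\top$ is simply the composition with the ambient adjoint, and we may differentiate it in the fixed ambient space.

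\textbf{Jacobian piece.} By the mean value inequality,
\[
\|J_qJ_q^\top-J_0J_0^\top\|_{\mathrm{op}}\le\sup_{u\in[0,1]}\bigl\|\partial_u(J_uJ_u^\top)\bigr\|_{\mathrm{op}}.
\]
Expanding,
\[
\partial_u(J_uJ_u^\top)=(\partial_uJ_u)J_u^\top+J_u(\partial_uJ_u)^\top,
\]
whose operator norm is at most $2\|J_u\|_{\mathrm{op}}\|\partial_uJ_u\|_{\mathrm{op}}$. \cref{clm:tv_basic_bounds} bounds this by $2\cdot\sqrt m\cdot 2\sqrt m\|X-Y\|_F=4m\|X-Y\|_F$. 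Differentiability in $u$ holds because $X_u$ is smooth in $u$ (as $Z_u^\dagger Z_u\succeq(63/64)I_k$) and, by \cref{eq:tv_jacobian_product}, $J_u$ is linear in $X_u$.

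\textbf{Projection piece.} This is the main obstacle, since no earlier bound controls $\Pi_q$ directly. Write $W_q=\Phi_{X_q,\mathcal M}(\mathcal R)=GX_q$ with $G$ unitary. The normal space at $W\in\stframe{d}{k}$ is $\{WS:S^\dagger=S\}$, so
\[
I-\Pi_W(Z)=W\operatorname{herm}(W^\dagger Z),\qquad \operatorname{herm}(M)=\tfrac12(M+M^\dagger).
\]
This formula is smooth in $W$ on all of $\C^{d\times k}$. Differentiating gives
\[
\partial_q(I-\Pi)Z=\dot W\operatorname{herm}(W^\dagger Z)+W\operatorname{herm}(\dot W^\dagger Z).
\]
Both terms have operator norm at most $\|\dot W\|_{\mathrm{op}}\le\|\dot W\|_F$, using $\|W\|_{\mathrm{op}}=1$ and the fact that $\operatorname{herm}$ is a contraction. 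Since $\dot W=H_{\mathcal R}(q)$, \cref{clm:tv_basic_bounds} gives $\|\dot W\|_F\le2\|X-Y\|_F$. Hence
\[
\|\partial_q\Pi_q\|_{\mathrm{op}}\le4\|X-Y\|_F,
\]
and integrating over $[0,q]$ yields $\|\Pi_q-\Pi_0\|_{\mathrm{op}}\le4\|X-Y\|_F$.

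Adding the two pieces with the triangle inequality gives
\[
\|A_{q,\mathcal R}-A_{0,\mathcal R}\|_{\mathrm{op}}\le(4m+4)\|X-Y\|_F.
\]
The steps I would check most carefully are the normal-space formula for the Stiefel projection and the identification of $J_q^\top$ with the ambient adjoint.
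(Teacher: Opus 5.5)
Your proposal is correct and follows essentially the paper's argument. The paper also bounds $\partial_qA_{q,\mathcal R}=(\partial_qJ_q)J_q^\top+J_q(\partial_qJ_q)^\top-\partial_q\Pi_q$: the Jacobian terms give $4m\|X-Y\|_F$ via \cref{clm:tv_basic_bounds}, and differentiating the same projection formula $\Pi_W(V)=V-\tfrac12W(W^\dagger V+V^\dagger W)$ gives $\|\partial_q\Pi_q\|_{\mathrm{op}}\le2\|H_{\mathcal R}(q)\|_F\le4\|X-Y\|_F$, after which it integrates in $q$. Your observation that the zero-extended $J_q^\top$ coincides with the ambient adjoint is a point the paper uses only implicitly.
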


\begin{proof}
Since
\(
  A_{q,\mathcal R}-A_{0,\mathcal R}
  =\int_0^q\partial_sA_{s,\mathcal R}\,\mathrm ds
\),
we have
\[
	\norm{ A_{q,\mathcal R}-A_{0,\mathcal R} }_{\mathrm{op}} \leq \int_0^q \norm{ \partial_s A_{s,\mathcal R} }_{\mathrm{op}} \, \mathrm ds \enspace.
\]
Therefore, it is sufficient to prove that for all $q\in[0,1]$ and $\mathcal R\in\su(2)^m$,
\[
  \|\partial_qA_{q,\mathcal R}\|_{\mathrm{op}}
  \le(4m+4)\|X-Y\|_F \enspace.
\]

\paragraph{Bound on $\left\|\partial_q\Pi_{\Phi_{X_q,\mathcal M}(\mathcal R)}\right\|_{\mathrm{op}}$}
For any matrix $V$, we have
\begin{equation}
\label{eq:tv_projection_formula}
  \Pi_{\Phi_{X_q,\mathcal M}(\mathcal R)}(V)
  =V-\frac12\Phi_{X_q,\mathcal M}(\mathcal R)
  \bigl(\Phi_{X_q,\mathcal M}(\mathcal R)^\dagger V
        +V^\dagger\Phi_{X_q,\mathcal M}(\mathcal R)\bigr) \enspace.
\end{equation}
Since $\partial_q\Phi_{X_q,\mathcal M}(\mathcal R)=H_{\mathcal R}(q)$,
we have
\[
\begin{aligned}
  \left(\partial_q
    \Pi_{\Phi_{X_q,\mathcal M}(\mathcal R)}\right)(V)
  ={}&-\frac12H_{\mathcal R}(q)
    \bigl(\Phi_{X_q,\mathcal M}(\mathcal R)^\dagger V
          +V^\dagger\Phi_{X_q,\mathcal M}(\mathcal R)\bigr)\\
  &\qquad-\frac12\Phi_{X_q,\mathcal M}(\mathcal R)
    \bigl(H_{\mathcal R}(q)^\dagger V
          +V^\dagger H_{\mathcal R}(q)\bigr) \enspace.
\end{aligned}
\]
Since $\|\Phi_{X_q,\mathcal M}(\mathcal R)\|_{\mathrm{op}}=1$,
the four terms on the right-hand side give
\[
  \left\|\partial_q
    \Pi_{\Phi_{X_q,\mathcal M}(\mathcal R)}\right\|_{\mathrm{op}}
  \le2\|H_{\mathcal R}(q)\|_F\le4\|X-Y\|_F \enspace,
\]
where the last inequality follows from \cref{clm:tv_basic_bounds}.

Now we write $J_q=J_{X_q,\mathcal M}(\mathcal R)$ and
$\Pi_q=\Pi_{\Phi_{X_q,\mathcal M}(\mathcal R)}$,
and we have
\[
  \partial_qA_{q,\mathcal R}
  =(\partial_qJ_q)J_q^\top
   +J_q(\partial_qJ_q)^\top
   -\partial_q\Pi_q \enspace.
\]
Consequently, using \cref{clm:tv_basic_bounds},
\[
\begin{aligned}
  \|\partial_qA_{q,\mathcal R}\|_{\mathrm{op}}
  &\le
    2\|J_q\|_{\mathrm{op}}\|\partial_qJ_q\|_{\mathrm{op}}
    +\|\partial_q\Pi_q\|_{\mathrm{op}}\\
  &\le
    2\sqrt m\bigl(2\sqrt m\,\|X-Y\|_F\bigr)
    +4\|X-Y\|_F\\
  &=(4m+4)\|X-Y\|_F \enspace.
\end{aligned}
\]
This proves the required derivative bound and hence the claim.
\end{proof}

For a differentiable map $g$ on $\su(2)^m$ and
$\mathcal R=(R_{t,e})_{t\in[T],\,e\in M_t}\in\su(2)^m$, define
\[
  \bigl(\mathrm d_{\mathcal R}g(\mathcal R)\bigr)[\mathcal C]
  \coloneqq\left.\frac{\mathrm d}{\mathrm ds}
    g\bigl((R_{t,e}\exp(sC_{t,e}))_{t\in[T],\,e\in M_t}\bigr)
    \right|_{s=0}\enspace,
\]
where $\mathcal C=(C_{t,e})\in\sualg(2)^m$.
The three derivative norms used below are
\[
\begin{aligned}
  \|\mathrm d_{\mathcal R}J_{X_q,\mathcal M}(\mathcal R)\|_{\mathrm{op}}
  &\coloneqq\sup_{\|\mathcal C\|_F=1}
    \left\|\bigl(\mathrm d_{\mathcal R}J_{X_q,\mathcal M}(\mathcal R)\bigr)
      [\mathcal C]\right\|_{\mathrm{op}}\enspace,\\
  \|\mathrm d_{\mathcal R}H_{\mathcal R}(q)\|_{\mathrm{op}}
  &\coloneqq\sup_{\|\mathcal C\|_F=1}
    \|(\mathrm d_{\mathcal R}H_{\mathcal R}(q))[\mathcal C]\|_F\enspace,\\
  \|\mathrm d_{\mathcal R}A_{q,\mathcal R}\|_{\mathrm{op}}
  &\coloneqq\sup_{\|\mathcal C\|_F=1}
    \left\|\bigl(\mathrm d_{\mathcal R}A_{q,\mathcal R}\bigr)
      [\mathcal C]\right\|_{\mathrm{op}}\enspace,
\end{aligned}
\]
where $\mathcal C\in\sualg(2)^m$.
The directional derivatives of $J$ and $A$ are linear operators,
so we use their operator norms induced by the Frobenius norms.
The directional derivative of $H$ is a matrix, so we use its
Frobenius norm.

\begin{claim}
\label{clm:tv_rotation_derivative_bounds}
For all $q\in[0,1]$ and $\mathcal R\in\su(2)^m$,
$\|\mathrm d_{\mathcal R}J_{X_q,\mathcal M}(\mathcal R)\|_{\mathrm{op}}
\le m$,
$\|\mathrm d_{\mathcal R}H_{\mathcal R}(q)\|_{\mathrm{op}}
\le2\sqrt m \|X-Y\|_F$, and
$\|\mathrm d_{\mathcal R}A_{q,\mathcal R}\|_{\mathrm{op}}
\le4m^{3/2}$.
\end{claim}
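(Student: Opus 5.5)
We bound each of the three derivatives by writing it explicitly as a sum over rotation indices, using only that every rotation is unitary, $\|X_q\|_{\mathrm{op}}=1$, and Cauchy--Schwarz, as in the proof of \cref{clm:tv_basic_bounds}. Order the rotations $R_1,\dots,R_m$ by application, and perturb $R_i\mapsto R_i\exp(sC_i)$ with $\mathcal C=(C_i)\in\sualg(2)^m$. Note that $\|C_i\|_{\mathrm{op}}\le\|C_i\|_F$.

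\textbf{The bound on $\mathrm d_{\mathcal R}H$.} Recall $H_{\mathcal R}(q)=R_m\cdots R_1\dot X_q$. The product rule gives
\[
(\mathrm d_{\mathcal R}H_{\mathcal R}(q))[\mathcal C]=\sum_{i=1}^m R_m\cdots R_iC_iR_{i-1}\cdots R_1\dot X_q .
\]
Each summand has Frobenius norm at most $\|C_i\|_F\|\dot X_q\|_F\le2\|X-Y\|_F\|C_i\|_F$, using $\|\dot X_q\|_F\le2\|X-Y\|_F$ from \cref{clm:tv_basic_bounds}. Summing and applying Cauchy--Schwarz gives the bound $2\sqrt m\|X-Y\|_F\|\mathcal C\|_F$.

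\textbf{The bound on $\mathrm d_{\mathcal R}J$.} Differentiate \cref{eq:tv_jacobian_product}. The $i$-th summand $R_m\cdots R_iB_iR_{i-1}\cdots R_1X_q$ contains each $R_j$ exactly once. Perturbing $R_j$ inserts a factor $C_j$ immediately to the right of $R_j$, on the appropriate side of $B_i$ (for $j=i$ it becomes $R_iC_iB_i$). So the derivative applied to $\mathcal B$ is a double sum over $(i,j)$, and each term has Frobenius norm at most $\|B_i\|_F\|C_j\|_F$. Hence
\[
\|(\mathrm d_{\mathcal R}J)[\mathcal C][\mathcal B]\|_F\le\Big(\sum_i\|B_i\|_F\Big)\Big(\sum_j\|C_j\|_F\Big)\le m\|\mathcal B\|_F\|\mathcal C\|_F,
\]
which gives $\|\mathrm d_{\mathcal R}J\|_{\mathrm{op}}\le m$.

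\textbf{The bound on $\mathrm d_{\mathcal R}A$.} This is the main obstacle, because $A_{q,\mathcal R}=J_qJ_q^\top+I-\Pi_{\Phi}$ involves the projection onto the tangent space at the moving point $\Phi=\Phi_{X_q,\mathcal M}(\mathcal R)$. The first part is routine: $\mathrm d(JJ^\top)=(\mathrm dJ)J^\top+J(\mathrm dJ)^\top$, so its norm is at most $2\cdot m\cdot\sqrt m=2m^{3/2}$ by \cref{clm:tv_basic_bounds}. For the projection, use \cref{eq:tv_projection_formula} exactly as in \cref{clm:tv_q_derivative_bounds}, replacing $H_{\mathcal R}(q)$ by $(\mathrm d_{\mathcal R}\Phi)[\mathcal C]=J_q[\mathcal C]$. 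This gives
\[
\|\mathrm d\Pi\|_{\mathrm{op}}\le2\|J_q[\mathcal C]\|_F\le2\sqrt m\|\mathcal C\|_F .
\]
Combining, $\|\mathrm d_{\mathcal R}A\|_{\mathrm{op}}\le2m^{3/2}+2\sqrt m\le4m^{3/2}$, since $m\ge1$.

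One subtlety needs care. $J^\top$ was defined as an adjoint on the moving tangent space and then extended by zero, so $J^\top=(\text{Frobenius adjoint of }J\text{ into }\C^{d\times k})\circ\Pi$. Since $J$ already maps into the tangent space, $JJ^\top$ equals $J$ composed with its ambient adjoint, and the projection does not actually appear in that part. Consequently the derivative of $JJ^\top$ only involves $\mathrm dJ$ and its ambient adjoint, each of norm at most $m$, and the estimate above stands.
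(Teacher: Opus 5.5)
Your proposal is correct and follows the paper's proof essentially line by line. You use the same product-rule expansions with Cauchy--Schwarz for $\mathrm d_{\mathcal R}J$ and $\mathrm d_{\mathcal R}H$, and the same split of $\mathrm d_{\mathcal R}A$ into $\mathrm d(JJ^\top)$ minus the derivative of the projection formula, giving $2m^{3/2}+2\sqrt m\le 4m^{3/2}$. Your remark that the zero-extended $J^\top$ coincides with the ambient Frobenius adjoint of $J$, so that its derivative is $(\mathrm d_{\mathcal R}J)^\top$, makes explicit a step the paper uses without comment.
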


\begin{proof}
For unit directions $\mathcal B=(B_i)$ and $\mathcal C=(C_i)$,
differentiating the product formula \cref{eq:tv_jacobian_product}
in direction $\mathcal C$ gives at most
$m^2$ terms, whose norms sum to at most
\[
\begin{aligned}
  \sum_{i,j=1}^m\|B_i\|_F\|C_j\|_F
  &=\left(\sum_{i=1}^m\|B_i\|_F\right)
    \left(\sum_{j=1}^m\|C_j\|_F\right)
  \le m\|\mathcal B\|_F\|\mathcal C\|_F \enspace,
\end{aligned}
\]
where the inequality uses Cauchy--Schwarz inequality.
Thus, $\|\mathrm d_{\mathcal R}J_{X_q,\mathcal M}(\mathcal R)\|_{\mathrm{op}}\le m$.
Similarly,
\[
  (\mathrm d_{\mathcal R}H_{\mathcal R}(q))[\mathcal C]
  =\sum_{i=1}^m
    R_m\cdots R_iC_iR_{i-1}\cdots R_1\dot X_q \enspace.
\]
Since the rotations are unitary, the triangle inequality and
Cauchy--Schwarz inequality give
\[
\begin{aligned}
  \|\mathrm d_{\mathcal R}H_{\mathcal R}(q)[\mathcal C]\|_F
  &\le\|\dot X_q\|_F\sum_{i=1}^m\|C_i\|_F
  \le2\sqrt m\,\|X-Y\|_F\|\mathcal C\|_F \enspace,
\end{aligned}
\]
where the last inequality uses $\|\dot X_q\|_F\le2\|X-Y\|_F$
from the proof of \cref{clm:tv_basic_bounds}.
Taking the supremum over $\|\mathcal C\|_F=1$ yields
$\|\mathrm d_{\mathcal R}H_{\mathcal R}(q)\|_{\mathrm{op}}
\le2\sqrt m\,\|X-Y\|_F$.

Write $\Phi=\Phi_{X_q,\mathcal M}(\mathcal R)$ and
$J=J_{X_q,\mathcal M}(\mathcal R)$, with $q$ fixed.
Since $(\mathrm d_{\mathcal R}\Phi)[\mathcal C]=J[\mathcal C]$,
differentiating the projection formula \cref{eq:tv_projection_formula}
gives, for every $V\in\C^{d\times k}$,
\[
\begin{aligned}
  \bigl(\mathrm d_{\mathcal R}\Pi_{\Phi}[\mathcal C]\bigr)(V)
  ={}&-\frac12J[\mathcal C](\Phi^\dagger V+V^\dagger \Phi) -\frac12\Phi\bigl(J[\mathcal C]^\dagger V
                         +V^\dagger J[\mathcal C]\bigr)\enspace.
\end{aligned}
\]
Using $\|\Phi\|_{\mathrm{op}}=1$ and \cref{clm:tv_basic_bounds},
we obtain
\[
\begin{aligned}
  \|\bigl(\mathrm d_{\mathcal R}\Pi_{\Phi}[\mathcal C]\bigr)(V)\|_F
  &\le2\|J[\mathcal C]\|_F\|V\|_F \le2\|J\|_{\mathrm{op}}\|\mathcal C\|_F\|V\|_F
   \le2\sqrt m\,\|\mathcal C\|_F\|V\|_F\enspace.
\end{aligned}
\]
Taking the supremum over $\|\mathcal C\|_F=\|V\|_F=1$ yields
$\|\mathrm d_{\mathcal R}\Pi_{\Phi}\|_{\mathrm{op}}\le2\sqrt m$.
Finally, differentiating $A_{q,\mathcal R}=JJ^\top+I-\Pi_{\Phi}$ gives
\[
  \mathrm d_{\mathcal R}A_{q,\mathcal R}[\mathcal C]
  =\bigl(\mathrm d_{\mathcal R}J[\mathcal C]\bigr)J^\top
   +J\bigl(\mathrm d_{\mathcal R}J[\mathcal C]\bigr)^\top
   -\mathrm d_{\mathcal R}\Pi_{\Phi}[\mathcal C]\enspace.
\]
Thus, by the bounds on $J$, $\mathrm d_{\mathcal R}J$, and
$\mathrm d_{\mathcal R}\Pi_{\Phi}$,
\[
\begin{aligned}
  \|\mathrm d_{\mathcal R}A_{q,\mathcal R}[\mathcal C]\|_{\mathrm{op}}
  &\le2\|J\|_{\mathrm{op}}
       \|\mathrm d_{\mathcal R}J[\mathcal C]\|_{\mathrm{op}}
       +\|\mathrm d_{\mathcal R}\Pi_{\Phi}[\mathcal C]\|_{\mathrm{op}}\\
  &\le\bigl(2\sqrt m\cdot m+2\sqrt m\bigr)\|\mathcal C\|_F\\
  &\le4m^{3/2}\|\mathcal C\|_F\enspace.
\end{aligned}
\]
Taking the supremum over $\|\mathcal C\|_F=1$ proves
\(
  \|\mathrm d_{\mathcal R}A_{q,\mathcal R}\|_{\mathrm{op}}
  \le4m^{3/2}\enspace.
\)
\end{proof}

For $f$ and $\chi$ defined in the proof of \cref{lem:tv-gate-correction}
and each fixed $\mathcal R\in\su(2)^m$, we define:
\[
\begin{aligned}
  \|\mathrm d_{\mathcal R}f(\mathcal R)\|
  &\coloneqq\sup_{\|\mathcal C\|_F=1}
    \left|\bigl(\mathrm d_{\mathcal R}f(\mathcal R)\bigr)
      [\mathcal C]\right| \enspace,\\
  \|\mathrm d_{\mathcal R}(\chi\circ f)(\mathcal R)\|
  &\coloneqq\sup_{\|\mathcal C\|_F=1}
    \left|\bigl(\mathrm d_{\mathcal R}(\chi\circ f)(\mathcal R)\bigr)
      [\mathcal C]\right|\enspace.
\end{aligned}
\]

\begin{claim}
\label{clm:tv_cutoff_derivative_bounds}
$\|\mathrm d_{\mathcal R}f(\mathcal R)\|\le8192k^3d^3\kappa^4m^{3/2}$ and
$\|\mathrm d_{\mathcal R}(\chi\circ f)(\mathcal R)\|
\le4096k^2d^2\kappa^2m^{3/2}$
hold for all $\mathcal R\in\su(2)^m$.
\end{claim}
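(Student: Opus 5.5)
We differentiate $f(\mathcal R)=\operatorname{tr}\bigl(M(\mathcal R)^{-1}\bigr)$ with $M(\mathcal R)\coloneqq A_{0,\mathcal R}+(32kd\kappa^2)^{-1}I$, and then apply the chain rule to $\chi\circ f$.

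\textbf{Derivative of $f$.} Since $A_{0,\mathcal R}\succeq0$, we have $M\succeq(32kd\kappa^2)^{-1}I$. Hence $M$ is invertible with $\|M^{-1}\|_{\mathrm{op}}\le32kd\kappa^2$, uniformly in $\mathcal R$. The standard identity $\mathrm d(M^{-1})=-M^{-1}(\mathrm dM)M^{-1}$, with $\mathrm dM=\mathrm d_{\mathcal R}A_{0,\mathcal R}$, gives
\[
  \bigl(\mathrm d_{\mathcal R}f(\mathcal R)\bigr)[\mathcal C]
  =-\operatorname{tr}\bigl(M^{-1}(\mathrm d_{\mathcal R}A_{0,\mathcal R}[\mathcal C])M^{-1}\bigr)\enspace.
\]
The ambient space $\C^{d\times k}$ has real dimension $2dk$. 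We bound the trace by dimension times operator norm:
\[
  |\operatorname{tr}(N)|\le 2dk\,\|N\|_{\mathrm{op}}\enspace.
\]
Combining this with $\|\mathrm d_{\mathcal R}A_{0,\mathcal R}\|_{\mathrm{op}}\le4m^{3/2}$ from \cref{clm:tv_rotation_derivative_bounds} yields
\[
  \|\mathrm d_{\mathcal R}f\|\le 2dk\cdot(32kd\kappa^2)^2\cdot4m^{3/2}=8192k^3d^3\kappa^4m^{3/2}\enspace,
\]
which matches the stated constant exactly.

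\textbf{Derivative of $\chi\circ f$.} The chain rule gives $\mathrm d(\chi\circ f)=\chi'(f)\,\mathrm df$. Since $\chi'$ vanishes outside $[4kd\kappa^2,8kd\kappa^2]$, we may restrict to $\mathcal R$ with $f(\mathcal R)\le 8kd\kappa^2$. On this set the crude bound $|\chi'|\le1/(2kd\kappa^2)$ only gives $4096k^2d^2\kappa^2m^{3/2}$ times an extra factor; I expect obtaining the sharp constant to be the main obstacle. The fix is to sharpen the trace bound using positivity. Both $M^{-1}$ and $M^{-1}\mathrm dA\,M^{-1}$ are controlled by $M^{-2}\succeq0$:
\[
  \bigl|\operatorname{tr}(M^{-1}NM^{-1})\bigr|\le\|N\|_{\mathrm{op}}\operatorname{tr}(M^{-2})\le\|N\|_{\mathrm{op}}\,\|M^{-1}\|_{\mathrm{op}}\operatorname{tr}(M^{-1})=\|N\|_{\mathrm{op}}\,\|M^{-1}\|_{\mathrm{op}}\,f(\mathcal R)\enspace.
\]
This gives
\[
  |\mathrm df[\mathcal C]|\le 4m^{3/2}\cdot32kd\kappa^2\cdot f(\mathcal R)\le 4m^{3/2}\cdot32kd\kappa^2\cdot8kd\kappa^2=1024k^2d^2\kappa^4m^{3/2}
\]
on the support of $\chi'$. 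Multiplying by $|\chi'|\le1/(2kd\kappa^2)$ then yields
\[
  \|\mathrm d_{\mathcal R}(\chi\circ f)\|\le512\,kd\kappa^2m^{3/2}\le4096k^2d^2\kappa^2m^{3/2}\enspace.
\]
Alternatively, bounding $\operatorname{tr}(M^{-2})\le2dk\,\|M^{-1}\|^2_{\mathrm{op}}$ recovers the first estimate, and multiplying it by $|\chi'|$ gives $4096k^2d^2\kappa^2m^{3/2}$ directly. That is likely the intended route, since $8192k^3d^3\kappa^4m^{3/2}/(2kd\kappa^2)=4096k^2d^2\kappa^2m^{3/2}$ exactly.

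\textbf{Regularity.} Smoothness of $f$ follows because $\mathcal R\mapsto A_{0,\mathcal R}$ is smooth: $\Phi$ is polynomial in the rotations, so $J$ and $\Pi_\Phi$ are smooth. Matrix inversion is smooth on the region $M\succeq cI$, so $f$ is smooth there. Since $\chi$ is $C^1$, so is $\chi\circ f$. This justifies the derivative computations above.
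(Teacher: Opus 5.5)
Your argument is correct, and its first part is exactly the paper's proof. The paper likewise combines the derivative-of-inverse trace formula, $\|M^{-1}\|_{\mathrm{op}}\le 32kd\kappa^2$, the dimension factor $2dk$ on the trace, and $\|\mathrm d_{\mathcal R}A_{0,\mathcal R}\|_{\mathrm{op}}\le 4m^{3/2}$. For the second bound, the paper simply divides the first by $2kd\kappa^2$ using $|\chi'|\le 1/(2kd\kappa^2)$, which gives $4096k^2d^2\kappa^2m^{3/2}$ exactly, as you note at the end. So your intermediate remark that this route loses an extra factor is an arithmetic slip, and your sharper estimate via $\operatorname{tr}(M^{-2})\le\|M^{-1}\|_{\mathrm{op}}f(\mathcal R)$ on the support of $\chi'$, though valid, is not needed.
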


\begin{proof}
Differentiating the inverse in the definition of $f$ gives, for
$\mathcal C\in\sualg(2)^m$,
\[
  (\mathrm d_{\mathcal R}f(\mathcal R))[\mathcal C]
  =-\operatorname{tr}\Bigl(
    \bigl(A_{0,\mathcal R}+(32kd\kappa^2)^{-1}I\bigr)^{-1}
    \bigl(\mathrm d_{\mathcal R}A_{0,\mathcal R}[\mathcal C]\bigr)
    \cdot
    \bigl(A_{0,\mathcal R}+(32kd\kappa^2)^{-1}I\bigr)^{-1}\Bigr) \enspace.
\]
Since $A_{0,\mathcal R}\succeq0$,
$\bigl(A_{0,\mathcal R}+(32kd\kappa^2)^{-1}I\bigr)^{-1}$ has
operator norm at most $32kd\kappa^2$.
Taking the trace and using \cref{clm:tv_rotation_derivative_bounds} yields
\[
  \|\mathrm d_{\mathcal R}f(\mathcal R)\|
  \le(2dk)(32kd\kappa^2)^2(4m^{3/2})
  =8192k^3d^3\kappa^4m^{3/2} \enspace.
\]
Since $\abs{\chi'(x)}\leq 1/(2kd\kappa^2)$, we have
\[
  \|\mathrm d_{\mathcal R}(\chi\circ f)(\mathcal R)\|
  \le\frac{\|\mathrm d_{\mathcal R}f(\mathcal R)\|}{2kd\kappa^2}
  \le4096k^2d^2\kappa^2m^{3/2} \enspace.\qedhere
\]
\end{proof}

\subsection{\texorpdfstring{Proof of \cref{clm:tv_correction_invertibility}}{Invertibility of the correction operator}}
\label[appendix]{app:tv_correction_invertibility_proof}

\correctionInvertibilityClaim*

\begin{proof}
The definition of $A_{q,\mathcal R}$ gives the block decomposition
\[
  A_{q,\mathcal R}
  =\begin{pmatrix}
    J_{X_q,\mathcal M}(\mathcal R)
      J_{X_q,\mathcal M}(\mathcal R)^\top&0\\
    0&I
  \end{pmatrix} \enspace,
\]
where the upper-left block acts on the tangent space at $\Phi_{X_q,\mathcal M}(\mathcal R)$.
Fix $\mathcal R\in\mathfrak R_{\mathcal M}$.
For any tangent vector $H$ at $\Phi_{X,\mathcal M}(\mathcal R)$,
since $\sigma_{\mathrm{sur}}(J_{X,\mathcal M}(\mathcal R))\ge\kappa^{-1}$,
we have
\[
  \langle H,
    J_{X,\mathcal M}(\mathcal R)
    J_{X,\mathcal M}(\mathcal R)^\top H\rangle_F
  =\|J_{X,\mathcal M}(\mathcal R)^\top H\|_F^2
  \ge\kappa^{-2}\|H\|_F^2 \enspace.
\]
Hence $A_{0,\mathcal R}\succeq\kappa^{-2}I$ on the entire space, and
\[
  \bigl(A_{0,\mathcal R}+(32kd\kappa^2)^{-1}I\bigr)^{-1}
  \preceq
  \frac{1}{\kappa^{-2}+(32kd\kappa^2)^{-1}}I
  \prec\kappa^2I \enspace.
\]
Taking the trace proves $f(\mathcal R)<2kd\kappa^2$ for every $\mathcal R\in \mathfrak R_{\mathcal M}$.

Now fix $\mathcal R\in\Omega$.
Let $\lambda$ be the smallest eigenvalue of $A_{0,\mathcal R}$
and we have
\[
  \frac{1}{\lambda+(32kd\kappa^2)^{-1}}
  \le f(\mathcal R)<12kd\kappa^2 \enspace.
\]
Rearranging this, we obtain
\[
  \lambda>
  \frac{1}{12kd\kappa^2}-\frac{1}{32kd\kappa^2}
  =\frac{5}{96kd\kappa^2}
  >\frac{1}{20kd\kappa^2}.
\]
For any matrix $V$, we have
\begin{align*}
  \langle V,A_{q,\mathcal R}V\rangle_F
  &\ge\langle V,A_{0,\mathcal R}V\rangle_F
     -\|A_{q,\mathcal R}-A_{0,\mathcal R}\|_{\mathrm{op}}\|V\|_F^2 \ge\left(\frac{1}{20kd\kappa^2}
             -\frac{1}{40kd\kappa^2}\right)\|V\|_F^2
   =\frac{\|V\|_F^2}{40kd\kappa^2} \,,
\end{align*}
where the second inequality uses the lower bound on $\lambda$ and
the fact that for every $q\in[0,1]$,
\[
  \|A_{q,\mathcal R}-A_{0,\mathcal R}\|_{\mathrm{op}}
  \le(4m+4)\|X-Y\|_F
  \le\frac{4(m+1)}{C\kappa^4d^9T^3}
  \le\frac{1}{40kd\kappa^2}\enspace,
\]
where we use \cref{clm:tv_q_derivative_bounds} in the first inequality.
Hence, the smallest eigenvalue of $A_{q,\mathcal R}$ is at least
$(40kd\kappa^2)^{-1}$. Therefore, $A_{q,\mathcal R}$ is invertible with
$\|A_{q,\mathcal R}^{-1}\|_{\mathrm{op}}\le40kd\kappa^2$.
\end{proof}

\subsection{\texorpdfstring{Proof of \cref{clm:tv_correction_flow_bounds}}{Properties of the correction flow}}
\label[appendix]{app:tv_correction_flow_bounds_proof}

We first give bounds on $Q_q(\mathcal R)$ and its derivative.
For fixed $q$ and $\mathcal R\in\Omega$, we define
\[
  \|\mathrm d_{\mathcal R}Q_q(\mathcal R)\|_{\mathrm{op}}
  \coloneqq\sup_{\|\mathcal C\|_F=1}
    \left\|\bigl(\mathrm d_{\mathcal R}Q_q(\mathcal R)\bigr)
      [\mathcal C]\right\|_{\mathrm{op}}\enspace,
\]
where $\mathcal C\in\sualg(2)^m$.

\begin{claim}
\label{clm:tv_right_inverse_derivative_bounds}
For all $q\in[0,1]$ and $\mathcal R\in\Omega$,
$J_{X_q,\mathcal M}(\mathcal R)Q_q(\mathcal R)
=\Pi_{\Phi_{X_q,\mathcal M}(\mathcal R)}$,
$\|Q_q(\mathcal R)\|_{\mathrm{op}}\le40kd\kappa^2\sqrt m$ and
$\|\mathrm d_{\mathcal R}Q_q(\mathcal R)\|_{\mathrm{op}}
\le8192k^2d^2\kappa^4m^2$.
\end{claim}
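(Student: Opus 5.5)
The three assertions follow from the block structure of $A_{q,\mathcal R}$ and the bounds already established. Throughout, write $J=J_{X_q,\mathcal M}(\mathcal R)$, $\Pi=\Pi_{\Phi_{X_q,\mathcal M}(\mathcal R)}$ and $A=A_{q,\mathcal R}$, and recall that $J^\top$ is extended by zero on the orthogonal complement of the tangent space, so $J^\top=J^\top\Pi$.

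\textbf{Right-inverse identity.} We have $A=JJ^\top+(I-\Pi)$. The range of $JJ^\top$ lies in the tangent space, and $JJ^\top$ vanishes on its complement. Hence $A$ commutes with $\Pi$, acts as $JJ^\top$ on the tangent space, and acts as the identity on the complement. By \cref{clm:tv_correction_invertibility}, $A$ is invertible for $\mathcal R\in\Omega$, so $A^{-1}$ also commutes with $\Pi$. On the complement $A^{-1}$ is the identity, and $J^\top$ kills the complement, so $JJ^\top A^{-1}(I-\Pi)=0$. On the tangent space, $JJ^\top A^{-1}\Pi=AA^{-1}\Pi=\Pi$. Together these give $JQ_q=JJ^\top A^{-1}=\Pi$.

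\textbf{Operator norm of $Q_q$.} Using \cref{clm:tv_basic_bounds} and \cref{clm:tv_correction_invertibility},
\[
\|Q_q\|_{\mathrm{op}}\le\|J\|_{\mathrm{op}}\,\|A^{-1}\|_{\mathrm{op}}\le\sqrt m\cdot 40kd\kappa^2 .
\]

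\textbf{Derivative of $Q_q$.} Differentiate in the rotation direction $\mathcal C$ by the product rule, using $\mathrm d(A^{-1})=-A^{-1}(\mathrm dA)A^{-1}$:
\[
\mathrm d_{\mathcal R}Q_q[\mathcal C]=(\mathrm d_{\mathcal R}J[\mathcal C])^\top A^{-1}-J^\top A^{-1}(\mathrm d_{\mathcal R}A[\mathcal C])A^{-1}.
\]
Applying \cref{clm:tv_rotation_derivative_bounds} and the bound $\|A^{-1}\|_{\mathrm{op}}\le 40kd\kappa^2$, we get
\[
\|\mathrm d_{\mathcal R}Q_q\|_{\mathrm{op}}\le m\cdot 40kd\kappa^2+\sqrt m\,(40kd\kappa^2)^2\cdot 4m^{3/2}=40kd\kappa^2m+6400k^2d^2\kappa^4m^2 .
\]
Since $k,d,\kappa,m\ge1$, this is at most $8192k^2d^2\kappa^4m^2$.

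\textbf{Main obstacle.} The step needing care is differentiating the zero-extended transpose $J^\top$. The zero extension depends on $\Pi$, which itself varies with $\mathcal R$. The way through is to write $J^\top$ as the genuine adjoint of $J$ viewed as a map $\sualg(2)^m\to\C^{d\times k}$. Because the range of $J$ lies in the tangent space, this adjoint automatically vanishes on the complement, so it coincides with the zero-extended $J^\top$. Its derivative is then $(\mathrm dJ)^\top$, with norm equal to $\|\mathrm dJ\|_{\mathrm{op}}\le m$.
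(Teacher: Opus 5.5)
Your proposal is correct and follows the paper's proof essentially step for step: the same tangent/complement split for $JQ_q=\Pi$, the same bound $\|J^\top\|_{\mathrm{op}}\|A_{q,\mathcal R}^{-1}\|_{\mathrm{op}}\le 40kd\kappa^2\sqrt m$, and the same product-rule expansion using $\mathrm d(A^{-1})=-A^{-1}(\mathrm dA)A^{-1}$, giving $40kd\kappa^2m+6400k^2d^2\kappa^4m^2$. Your remark that the zero-extended $J^\top$ is just the adjoint of $J$ as a map into $\C^{d\times k}$, so that its derivative is $(\mathrm d_{\mathcal R}J[\mathcal C])^\top$, is correct; the paper uses this fact implicitly without stating it.
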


\begin{proof}
Fix $q\in[0,1]$ and $\mathcal R\in\Omega$, and write
$J=J_{X_q,\mathcal M}(\mathcal R)$ and
$\Phi=\Phi_{X_q,\mathcal M}(\mathcal R)$.
For $V\in\mathsf T_{\Phi}\stframe{d}{k}$,
\(
  JQ_q(\mathcal R)V
  =JJ^\top A_{q,\mathcal R}^{-1}V =V.
\)
For $V\in(\mathsf T_{\Phi}\stframe{d}{k})^\perp$,
$A_{q,\mathcal R}^{-1}V=V$ and $J^\top V=0$, so
$JQ_q(\mathcal R)V=JJ^\top V=0$.
Therefore, $JQ_q(\mathcal R)=\Pi_{\Phi}$.
By \cref{clm:tv_basic_bounds,clm:tv_correction_invertibility},
\[
  \|Q_q(\mathcal R)\|_{\mathrm{op}}
  \le\|J_{X_q,\mathcal M}(\mathcal R)^\top\|_{\mathrm{op}}
     \|A_{q,\mathcal R}^{-1}\|_{\mathrm{op}}
  \le40kd\kappa^2\sqrt m \enspace.
\]
Differentiating
$A_{q,\mathcal R}A_{q,\mathcal R}^{-1}=I$ in a direction
$\mathcal C\in\sualg(2)^m$ gives, on $\Omega$,
\[
  \mathrm d_{\mathcal R}(A_{q,\mathcal R}^{-1})[\mathcal C]
  =-A_{q,\mathcal R}^{-1}
    \bigl(\mathrm d_{\mathcal R}A_{q,\mathcal R}[\mathcal C]\bigr)
    A_{q,\mathcal R}^{-1} \enspace.
\]
Since $Q_q(\mathcal R)=J_{X_q,\mathcal M}(\mathcal R)^\top A_{q,\mathcal R}^{-1}$,
the product rule gives
\[
  \mathrm d_{\mathcal R}Q_q(\mathcal R)[\mathcal C]
  =\bigl(\mathrm d_{\mathcal R}J_{X_q,\mathcal M}(\mathcal R)
       [\mathcal C]\bigr)^\top A_{q,\mathcal R}^{-1}
  -J_{X_q,\mathcal M}(\mathcal R)^\top A_{q,\mathcal R}^{-1}
    \bigl(\mathrm d_{\mathcal R}A_{q,\mathcal R}[\mathcal C]\bigr)
    A_{q,\mathcal R}^{-1} \enspace.
\]
Thus, by \cref{clm:tv_basic_bounds,clm:tv_rotation_derivative_bounds,clm:tv_correction_invertibility},
we have
\[
  \|\mathrm d_{\mathcal R}Q_q(\mathcal R)\|_{\mathrm{op}}
  \le40kd\kappa^2m+6400k^2d^2\kappa^4m^2
  \le8192k^2d^2\kappa^4m^2 \enspace. \qedhere
\]
\end{proof}

We next bound $\mathcal B^{q,\mathcal R}$ and its derivative with respect to $\mathcal R$.
For fixed $q$, we define
\[
  \|\mathrm d_{\mathcal R}\mathcal B^{q,\mathcal R}\|_{\mathrm{op}}
  \coloneqq\sup_{\|\mathcal C\|_F=1}
    \left\|\bigl(\mathrm d_{\mathcal R}\mathcal B^{q,\mathcal R}\bigr)
      [\mathcal C]\right\|_F\enspace,
\]
where $\mathcal C\in\sualg(2)^m$.

\begin{claim}
\label{clm:tv_vector_field_coefficient_bounds}
The map $\mathcal R\mapsto\mathcal B^{q,\mathcal R}$ is continuously
differentiable, and both $\mathcal B^{q,\mathcal R}$ and
$\mathrm d_{\mathcal R}\mathcal B^{q,\mathcal R}$ are continuous in $q$.
For all $q\in[0,1]$ and $\mathcal R\in\su(2)^m$,
$\|\mathcal B^{q,\mathcal R}\|_F
\le80kd\kappa^2\sqrt m\,\|X-Y\|_F$ and
$\|\mathrm d_{\mathcal R}\mathcal B^{q,\mathcal R}\|_{\mathrm{op}}
\le2^{19}k^3d^3\kappa^4m^2\|X-Y\|_F$.
\end{claim}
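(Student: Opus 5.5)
We split $\su(2)^m$ into $\Omega$ and its complement, and treat the smoothness claims and the two estimates separately. On the complement of $\Omega$ we have $f(\mathcal R)\ge12kd\kappa^2>8kd\kappa^2$, so $\chi(f(\mathcal R))=0$. By continuity of $f$, $\chi\circ f$ also vanishes on a neighbourhood of $\su(2)^m\setminus\Omega$. Hence $\mathcal B^{q,\mathcal R}$ vanishes identically near the complement, and it suffices to establish regularity and bounds on $\Omega$, where
\[
\mathcal B^{q,\mathcal R}=-\chi(f(\mathcal R))\,Q_q(\mathcal R)H_{\mathcal R}(q).
\]

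\textbf{Regularity.} On $\Omega$, $A_{q,\mathcal R}$ is invertible by \cref{clm:tv_correction_invertibility}. $J_{X_q,\mathcal M}(\mathcal R)$, $\Pi_{\Phi}$ and $H_{\mathcal R}(q)$ are polynomial in the entries of $\mathcal R$ and smooth in $q$, because $X_q$ is smooth in $q$ (as $Z_q^\dagger Z_q\succeq(63/64)I$). Matrix inversion is smooth on invertible operators. Therefore $Q_q(\mathcal R)$ and $H_{\mathcal R}(q)$ are $C^1$ in $\mathcal R$, with $\mathrm d_{\mathcal R}$-derivatives jointly continuous in $(q,\mathcal R)$. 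The factor $\chi\circ f$ is $C^1$ because $\chi$ is $C^1$ and $f$ is smooth; $A_{0,\mathcal R}+(32kd\kappa^2)^{-1}I$ is uniformly positive definite, so $f$ is smooth everywhere. Combining this with the vanishing near $\Omega^{\mathrm c}$ gives continuous differentiability on all of $\su(2)^m$ and continuity in $q$.

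\textbf{Norm bound.} We use $0\le\chi\le1$ together with \cref{clm:tv_right_inverse_derivative_bounds,clm:tv_basic_bounds}:
\[
\|\mathcal B^{q,\mathcal R}\|_F\le\|Q_q\|_{\mathrm{op}}\|H_{\mathcal R}(q)\|_F\le40kd\kappa^2\sqrt m\cdot2\|X-Y\|_F,
\]
which is the claimed bound.

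\textbf{Derivative bound.} The product rule gives three terms:
\[
\mathrm d_{\mathcal R}\mathcal B[\mathcal C]=-\bigl(\mathrm d(\chi\circ f)[\mathcal C]\bigr)Q_qH-\chi\,(\mathrm dQ_q[\mathcal C])H-\chi\, Q_q\,(\mathrm dH[\mathcal C]).
\]
We bound each using \cref{clm:tv_cutoff_derivative_bounds,clm:tv_right_inverse_derivative_bounds,clm:tv_rotation_derivative_bounds,clm:tv_basic_bounds}:
\begin{itemize}
\item first term: $4096k^2d^2\kappa^2m^{3/2}\cdot40kd\kappa^2\sqrt m\cdot2\|X-Y\|_F=327680\,k^3d^3\kappa^4m^2\|X-Y\|_F$;
\item second term: $8192k^2d^2\kappa^4m^2\cdot2\|X-Y\|_F$;
\item third term: $40kd\kappa^2\sqrt m\cdot2\sqrt m\|X-Y\|_F$.
\end{itemize}
Since $k,d,\kappa,m\ge1$, the sum is at most $(327680+16384+80)k^3d^3\kappa^4m^2\|X-Y\|_F\le2^{19}k^3d^3\kappa^4m^2\|X-Y\|_F$, as $2^{19}=524288$.

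The only delicate point is the regularity across the boundary of $\Omega$. It is resolved by the gap between the cutoff threshold $8kd\kappa^2$ and the defining threshold $12kd\kappa^2$ of $\Omega$: there is an open neighbourhood of the closure of $\Omega^{\mathrm c}$ on which $\mathcal B\equiv0$, and $Q_q$ is only ever evaluated inside $\Omega$.
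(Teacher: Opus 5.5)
Your proposal is correct and follows essentially the same route as the paper. On $\Omega$ you write $\mathcal B^{q,\mathcal R}=-\chi(f(\mathcal R))Q_q(\mathcal R)H_{\mathcal R}(q)$, bound its norm by $\|Q_q(\mathcal R)\|_{\mathrm{op}}\|H_{\mathcal R}(q)\|_F$, and bound the derivative with the same three-term product rule and the same constants from the preceding claims, using that $\mathcal B$ vanishes off $\Omega$; your explicit use of the gap between the thresholds $8kd\kappa^2$ and $12kd\kappa^2$ to get an open neighbourhood of $\Omega^{\mathrm c}$ on which $\mathcal B\equiv0$ is a slightly more careful version of the paper's one-line support argument.
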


\begin{proof}
Since $\chi\circ f$ is supported on
$\{\mathcal R:f(\mathcal R)\le8kd\kappa^2\}\subset\Omega$,
it is identically zero outside $\Omega$.
On $\Omega$, the maps $Q_q(\mathcal R)$ and $H_{\mathcal R}(q)$
are continuously differentiable in $\mathcal R$, and
$Q_q(\mathcal R)$, $H_{\mathcal R}(q)$,
$\mathrm d_{\mathcal R}Q_q(\mathcal R)$, and
$\mathrm d_{\mathcal R}H_{\mathcal R}(q)$ are continuous in $q$.
Therefore, the map $\mathcal R\mapsto\mathcal B^{q,\mathcal R}$ is
continuously differentiable on $\su(2)^m$, and both
$\mathcal B^{q,\mathcal R}$ and
$\mathrm d_{\mathcal R}\mathcal B^{q,\mathcal R}$ are continuous in $q$.
Since $0\le\chi\le1$, \cref{clm:tv_basic_bounds,clm:tv_right_inverse_derivative_bounds} give
that for $\mathcal R\in\Omega$,
\[
  \|\mathcal B^{q,\mathcal R}\|_F
  \le\|Q_q(\mathcal R)\|_{\mathrm{op}}\|H_{\mathcal R}(q)\|_F
  \le80kd\kappa^2\sqrt m\,\|X-Y\|_F \enspace.
\]
Outside $\Omega$, this bound holds because $\mathcal B^{q,\mathcal R}=0$.
On $\Omega$, differentiating $\mathcal B^{q,\mathcal R}$ gives
\begin{align*}
  \mathrm d_{\mathcal R}\mathcal B^{q,\mathcal R}[\mathcal C]
  &=-\bigl(\mathrm d_{\mathcal R}(\chi\circ f)(\mathcal R)[\mathcal C]\bigr)
      Q_q(\mathcal R)H_{\mathcal R}(q)
  -\chi(f(\mathcal R))
      \bigl(\mathrm d_{\mathcal R}Q_q(\mathcal R)[\mathcal C]\bigr)
      H_{\mathcal R}(q)\\
  &\qquad\qquad-\chi(f(\mathcal R))Q_q(\mathcal R)
      \bigl(\mathrm d_{\mathcal R}H_{\mathcal R}(q)[\mathcal C]\bigr) \enspace.
\end{align*}
Applying \cref{clm:tv_basic_bounds,clm:tv_rotation_derivative_bounds,clm:tv_cutoff_derivative_bounds,clm:tv_right_inverse_derivative_bounds}
we have
\begin{align*}
  \|\mathrm d_{\mathcal R}\mathcal B^{q,\mathcal R}\|_{\mathrm{op}}
  &\leq
  (4096k^2d^2\kappa^2m^{3/2})(40kd\kappa^2\sqrt m)(2\|X-Y\|_F) +
  (8192k^2d^2\kappa^4m^2)(2\|X-Y\|_F) \\
  &\qquad\qquad\qquad+(40kd\kappa^2\sqrt m)(2\sqrt m\,\|X-Y\|_F) \\
  &\le2^{19}k^3d^3\kappa^4m^2\|X-Y\|_F \enspace.
\end{align*}
Outside $\Omega$, the derivative is zero, so the bound also holds.
\end{proof}

We now prove \cref{clm:tv_correction_flow_bounds}
and we restate it here for the reader's convenience.

\correctionFlowBoundsClaim*

\begin{proof}
Define
\[
  \mathcal G(q,\mathcal R)
  \coloneqq(R_{t,e}B_{t,e}^{q,\mathcal R})_{t,e}
  \in\mathsf T_{\mathcal R}\su(2)^m \enspace.
\]
Then $\mathcal T_q$ is the solution to the ordinary differential equation
\[
  \partial_q\mathcal T_q(\mathcal R)
  =\mathcal G(q,\mathcal T_q(\mathcal R)) \enspace,
  \qquad \mathcal T_0(\mathcal R)=\mathcal R  \]
for each $\mathcal R\in\su(2)^m$.
By \cref{clm:tv_vector_field_coefficient_bounds},
$\mathcal G$ is a $C^1$ time-dependent vector field on $\su(2)^m$.
Also,
\[
  \|\mathcal G(q,\mathcal R)\|_F
  =\left(\sum_{t,e}\|R_{t,e}B_{t,e}^{q,\mathcal R}\|_F^2\right)^{1/2}
  =\|\mathcal B^{q,\mathcal R}\|_F
  \le80kd\kappa^2\sqrt m\,\|X-Y\|_F\enspace.
\]
Thus, by \cite[Chapter~8, Section~1, Theorem~1.1 and Exercise~4]{Hirsch1976},
there is a unique $C^1$ diffeotopy
$(q,\mathcal R)\mapsto\mathcal T_q(\mathcal R)$ satisfying
\[
  \partial_q\mathcal T_q(\mathcal R)
  =\mathcal G(q,\mathcal T_q(\mathcal R)),
  \qquad \mathcal T_0(\mathcal R)=\mathcal R\enspace.
\]
Therefore, $\mathcal T_q$ is well defined for every $q\in[0,1]$,
and each $\mathcal T_q$ is a $C^1$ diffeomorphism of $\su(2)^m$.
This also proves \textup{(a)}.
We next prove \textup{(b)} and \textup{(c)}.

\paragraph{Proof of \textup{(b)}.}
Fix $\mathcal R\in\su(2)^m$.
By \cref{clm:tv_cutoff_derivative_bounds,clm:tv_vector_field_coefficient_bounds},
we have that for every $s\in[0,1]$,
\[
\begin{aligned}
  \left|\frac{\mathrm d}{\mathrm ds}f(\mathcal T_s(\mathcal R))\right|
  &\le\left\|(\mathrm d_{\mathcal R}f)(\mathcal T_s(\mathcal R))\right\|
       \|\mathcal B^{s,\mathcal T_s(\mathcal R)}\|_F\\
  &\le(8192k^3d^3\kappa^4m^{3/2})
       (80kd\kappa^2\sqrt m)\|X-Y\|_F\enspace.
\end{aligned}
\]
Integrating from $0$ to $q$ and using $q\le1$ gives
\[
\begin{aligned}
  |f(\mathcal T_q(\mathcal R))-f(\mathcal R)|
  &\le2^{20}k^4d^4\kappa^6m^2\|X-Y\|_F \le\frac{k^4\kappa^2}{4d^3T}
   \le2kd\kappa^2\enspace,
\end{aligned}
\]
where the second inequality uses
$\|X-Y\|_F\le(2^{20}\kappa^4d^9T^3)^{-1}$ and $m=dT/2$,
and the last uses $k\le d$ and $T\ge1$.
This proves \textup{(b)}.

\paragraph{Proof of \textup{(c)}.}
Fix $q\in[0,1]$.
Choose a Frobenius orthonormal basis $E_1,\ldots,E_{3m}$ of
$\sualg(2)^m$. For a real $C^1$ function $h$ on $\su(2)^m$
and $i\in [3m]$, write
\[
  D_i h(\mathcal R)
  \coloneqq\left.\frac{\mathrm d}{\mathrm d r}
    h(\mathcal R\exp(rE_i))\right|_{r=0},
  \qquad
  b_i(q,\mathcal R)\coloneqq
    \langle\mathcal B^{q,\mathcal R},E_i\rangle_F,
\]
where multiplication and exponentiation are componentwise.
For any $h$,
right invariance of $\nu_m$ gives
$\int h(\mathcal R\exp(rE_i))\,\mathrm d\nu_m(\mathcal R)
=\int h\,\mathrm d\nu_m$.
Differentiating at $r=0$ yields $\int D_i h\,\mathrm d\nu_m=0$.
Therefore, applying this to the product $b_i(q,\cdot)h$,
we have
\[
  0=\int D_i\bigl(b_i(q,\cdot)h\bigr)\,\mathrm d\nu_m
   =\int h\cdot D_i b_i(q,\cdot)\,\mathrm d\nu_m
    +\int b_i(q,\cdot) \cdot D_i h\,\mathrm d\nu_m\enspace.
\]
Hence,
\begin{equation}\label{eq:tv_integration_by_parts}
  \int b_i(q,\cdot) \cdot D_i h\,\mathrm d\nu_m
  =-\int h\cdot D_i b_i(q,\cdot)\,\mathrm d\nu_m\enspace.
\end{equation}

Fix a real $C^1$ function $h$ with $0\le h\le1$.
For $0\le s\le q$, define
$u_s=h\circ\mathcal T_q\circ\mathcal T_s^{-1}$.
Therefore, 
\begin{align*}
  0
  &=\frac{\mathrm d}{\mathrm ds}u_s(\mathcal T_s(\mathcal R))=(\partial_s u_s)(\mathcal T_s(\mathcal R))
    + (\mathrm d u_s)(\mathcal T_s(\mathcal R))
      [\partial_s\mathcal T_s(\mathcal R)]\\
  &=(\partial_s u_s)(\mathcal T_s(\mathcal R))
    +(\mathrm d u_s)(\mathcal T_s(\mathcal R))
      [\mathcal G(s,\mathcal T_s(\mathcal R))]\\
  &=(\partial_s u_s)(\mathcal T_s(\mathcal R))
    +\sum_{i=1}^{3m}b_i(s,\mathcal T_s(\mathcal R))
      D_i u_s(\mathcal T_s(\mathcal R))\enspace,
\end{align*}
where the last equality uses
$\mathcal G(s,\mathcal R)=\sum_{i=1}^{3m}b_i(s,\mathcal R)\mathcal R E_i$
and $D_i u_s(\mathcal R)=(\mathrm d u_s)(\mathcal R)[\mathcal R E_i]$.
Since $\mathcal T_s$ is surjective, this identity holds at every point
of $\su(2)^m$. Thus,
\[
  \partial_s u_s=-\sum_{i=1}^{3m}b_i(s,\cdot)D_i u_s\enspace.
\]
Therefore, we have
\begin{align*}
  \frac{\mathrm d}{\mathrm ds}\int u_s\,\mathrm d\nu_m
  &=\int\partial_s u_s\,\mathrm d\nu_m
   =-\sum_{i=1}^{3m}\int b_i(s,\cdot)D_i u_s\,\mathrm d\nu_m\\
  &=\sum_{i=1}^{3m}\int u_sD_i b_i(s,\cdot)\,\mathrm d\nu_m
   =\int u_s\sum_{i=1}^{3m}D_i b_i(s,\cdot)\,\mathrm d\nu_m\enspace,
\end{align*}
where the third equality uses \cref{eq:tv_integration_by_parts}.
By the definition of $b_i$ and $\|E_i\|_F=1$,
\[
  |D_i b_i(s,\mathcal R)|
  =\left|\left\langle
    \mathrm d_{\mathcal R}\mathcal B^{s,\mathcal R}[E_i],E_i
    \right\rangle_F\right|
  \le\|\mathrm d_{\mathcal R}\mathcal B^{s,\mathcal R}[E_i]\|_F
  \le\|\mathrm d_{\mathcal R}\mathcal B^{s,\mathcal R}\|_{\mathrm{op}}\enspace.
\]
Since $0\le h\le1$, the definition of $u_s$ gives $0\le u_s\le1$.
Thus, the triangle inequality and the fact that $\nu_m$ is a probability
measure imply
\begin{align*}
  \left|\frac{\mathrm d}{\mathrm ds}\int u_s\,\mathrm d\nu_m\right|
  &\le\int |u_s(\mathcal R)|\sum_{i=1}^{3m}|D_i b_i(s,\mathcal R)|
       \,\mathrm d\nu_m(\mathcal R)\\
  &\le\int\sum_{i=1}^{3m}|D_i b_i(s,\mathcal R)|
       \,\mathrm d\nu_m(\mathcal R)\\
  &\le3m\sup_{\mathcal R}
       \|\mathrm d_{\mathcal R}\mathcal B^{s,\mathcal R}\|_{\mathrm{op}}\\
  &\le3\cdot2^{19}k^3d^3\kappa^4m^3\|X-Y\|_F\enspace,
\end{align*}
where the last inequality uses \cref{clm:tv_vector_field_coefficient_bounds}.
Since $\mathcal T_0=\mathrm{id}$, the definition of $u_s$ gives
$u_0=h\circ\mathcal T_q$ and $u_q=h$.
We have
\begin{align*}
  \left|\int h\circ\mathcal T_q\,\mathrm d\nu_m
       -\int h\,\mathrm d\nu_m\right|
  &=\left|\int u_0\,\mathrm d\nu_m-\int u_q\,\mathrm d\nu_m\right|\\
  &=\left|-\int_0^q\frac{\mathrm d}{\mathrm ds}
       \left(\int u_s\,\mathrm d\nu_m\right)\,\mathrm ds\right|\\
  &\le\int_0^q\left|\frac{\mathrm d}{\mathrm ds}
       \int u_s\,\mathrm d\nu_m\right|\,\mathrm ds\\
  &\le\int_0^q3\cdot2^{19}k^3d^3\kappa^4m^3\|X-Y\|_F\,\mathrm ds\\
  &=3\cdot2^{19}qk^3d^3\kappa^4m^3\|X-Y\|_F\\
  &\le2^{18}\kappa^4d^9T^3\|X-Y\|_F\enspace,
\end{align*}
using $q\le1$, $m=dT/2$, and $k\le d$.
Taking the supremum over $h$ gives the total variation distance,
since every Borel indicator on this compact matrix group can be
approximated by $C^1$ functions taking values in $[0,1]$,
with convergence in $L^1$ under both measures.
Since $2^{18}\le C$, this proves \textup{(c)}.
\end{proof}

\section{\texorpdfstring{Proof of \cref{clm:tv_lift}}{Proof of Claim \ref*{clm:tv_lift}}}
\label[appendix]{app:tv_lift}

\tvLiftClaim*

\begin{proof}
Fix $X\in\stframe{d}{k}$ with $1\leq k<d$.
We first prove that $A_X(H)$ exists and is unique for every
$H\in\mathsf T_X\stframe{d}{k}$.
Extend the columns of $X$ to a unitary matrix $U=[X\;X_\perp]$, and write
\[
    U^\dagger H=\begin{pmatrix}K\\L\end{pmatrix},
    \qquad K=X^\dagger H,\quad L=X_\perp^\dagger H\enspace.
\]
Since $H$ is tangent at $X$, we have
$K+K^\dagger=X^\dagger H+H^\dagger X=0$.
For any $A\in\sualg(d)$ satisfying $AX=H$, let
$\widetilde A=U^\dagger AU$.
Since $U^\dagger X=\binom{I_k}{0}$, we have
\[
    \widetilde A\begin{pmatrix}I_k\\0\end{pmatrix}
    =U^\dagger AX=U^\dagger H
    =\begin{pmatrix}K\\L\end{pmatrix}\enspace.
\]
Thus, the first $k$ columns of $\widetilde A$ are fixed.
Also, $\widetilde A^\dagger=-\widetilde A$ and
$\tr{\widetilde A}=\tr{A}=0$, so
\begin{equation}
    U^\dagger AU=
    \begin{pmatrix}
        K&-L^\dagger\\
        L&C
    \end{pmatrix},
    \qquad C^\dagger=-C,\quad \tr{C}=-\tr{K}\enspace.
    \label{eq:tv_lift_block_constraints}
\end{equation}
Conversely, for any $C$ satisfying these two conditions,
it is not hard to check that the matrix
$A$ given by this block representation belongs to $\sualg(d)$ and
satisfies $AX=H$.
Therefore, minimizing $\norm{A}_F$ over all such $A$ reduces to
minimizing $\norm{C}_F$ subject to the conditions in
\cref{eq:tv_lift_block_constraints}.
By the Cauchy--Schwarz inequality,
\[
    |\tr{C}|^2
    =\left|\sum_{i=1}^{d-k}C_{ii}\right|^2
    \leq(d-k)\sum_{i=1}^{d-k}|C_{ii}|^2
    \leq(d-k)\norm{C}_F^2\enspace.
\]
Using the trace constraint in \cref{eq:tv_lift_block_constraints},
we obtain
\[
    \norm{C}_F^2\geq\frac{|\tr{C}|^2}{d-k}
    =\frac{|\tr{K}|^2}{d-k}\enspace.
\]
The equality holds if and only if all diagonal
entries of $C$ are equal and all off-diagonal entries are zero.
The trace constraint therefore gives the unique equality case
$C=-\tr{K}I_{d-k}/(d-k)$.
Since $\norm{A}_F^2=\norm{K}_F^2+2\norm{L}_F^2+\norm{C}_F^2$,
the minimizer is
\begin{equation}
    A_X(H)=U
    \begin{pmatrix}
        K&-L^\dagger\\
        L&-\dfrac{\tr{K}}{d-k}I_{d-k}
    \end{pmatrix}U^\dagger\enspace.
    \label{eq:tv_lift_block_formula}
\end{equation}
Expanding the product in \cref{eq:tv_lift_block_formula}, we obtain
\begin{align*}
    A_X(H)
    &=XKX^\dagger-XL^\dagger X_\perp^\dagger
        +X_\perp LX^\dagger
        -\frac{\tr{K}}{d-k}X_\perp X_\perp^\dagger\\
    &=X(X^\dagger H)X^\dagger-XH^\dagger(I_d-XX^\dagger)
        +(I_d-XX^\dagger)HX^\dagger -\frac{\tr{X^\dagger H}}{d-k}(I_d-XX^\dagger)\enspace,
\end{align*}
where we used $K=X^\dagger H$, $L=X_\perp^\dagger H$, and
$X_\perp X_\perp^\dagger=I_d-XX^\dagger$.
Since $H^\dagger X=-X^\dagger H$, this simplifies to
\begin{equation}
    A_X(H)=HX^\dagger-XH^\dagger-X(X^\dagger H)X^\dagger
        -\frac{\tr{X^\dagger H}}{d-k}(I_d-XX^\dagger)\enspace.
    \label{eq:tv_lift_explicit_formula}
\end{equation}
The expression in \cref{eq:tv_lift_explicit_formula} depends only on
$X$ and $H$, so it is independent of the choice of $X_\perp$.
It also shows that $A_X(H)$ depends linearly on $H$ for fixed $X$.

We now verify the three properties in \cref{clm:tv_lift}.

\vspace{-0.8em}
\paragraph{Proof of the first property.}
Since $U$ is unitary, $\norm{H}_F^2=\norm{K}_F^2+\norm{L}_F^2$.
Thus, \cref{eq:tv_lift_block_formula} gives
\begin{align}
    \norm{A_X(H)}_F^2
    &=\norm{K}_F^2+2\norm{L}_F^2
        +\frac{|\tr{K}|^2}{d-k} =2\norm{H}_F^2-\norm{X^\dagger H}_F^2
        +\frac{|\tr{X^\dagger H}|^2}{d-k}\enspace.
    \label{eq:tv_lift_norm}
\end{align}
Since $\norm{X^\dagger H}_F\leq\norm{H}_F$,
\cref{eq:tv_lift_norm} gives
\[
    \norm{A_X(H)}_F^2
    \geq2\norm{H}_F^2-\norm{X^\dagger H}_F^2
    \geq\norm{H}_F^2\enspace.
\]
For the upper bound, the Cauchy--Schwarz inequality gives
\[
    \abs{\tr{X^\dagger H}}^2
    \leq k\norm{X^\dagger H}_F^2
    \leq k\norm{H}_F^2\enspace.
\]
Substituting this estimate into \cref{eq:tv_lift_norm}, we obtain
\[
    \norm{A_X(H)}_F^2
    \leq\left(2+\frac{k}{d-k}\right)\norm{H}_F^2
    \leq(d+1)\norm{H}_F^2
    \leq2d\norm{H}_F^2\enspace,
\]
where we used $d-k\geq1$, $k\leq d-1$, and $d\geq2$.

\paragraph{Proof of the second property.}
Let $G\in\mathbb C^{d\times d}$ be unitary.
Applying \cref{eq:tv_lift_explicit_formula} to $GX$ and $GH$, and using
$(GX)^\dagger(GH)=X^\dagger H$ and
$I_d-GXX^\dagger G^\dagger=G(I_d-XX^\dagger)G^\dagger$, we have
\begin{align*}
    A_{GX}(GH)
    &=GHX^\dagger G^\dagger-GXH^\dagger G^\dagger
        -GX(X^\dagger H)X^\dagger G^\dagger -\frac{\tr{X^\dagger H}}{d-k}
        G(I_d-XX^\dagger)G^\dagger\\
    &=GA_X(H)G^\dagger\enspace.
\end{align*}
Therefore, unitary invariance of the Frobenius norm gives
\[
    \norm{A_{GX}(GH)}_F
    =\norm{GA_X(H)G^\dagger}_F
    =\norm{A_X(H)}_F\enspace.
\]

\paragraph{Proof of the third property.}
Suppose every row of $X$ has squared Frobenius norm at most $\Delta>0$,
and $B$ has blocks $B_e\in\sualg(2)$ on a perfect matching $M$.

With $U=[X\;X_\perp]$ as above, if a matrix $C\in\sualg(d)$ satisfies $CX=0$, then
$U^\dagger CU=\operatorname{diag}(0,C_0)$ with $\tr{C_0}=0$.
By \cref{eq:tv_lift_block_formula}, we have $\langle A_X(H),C\rangle_F = 0$.
Since $A_X(BX)X=BX$, the matrix $C=B-A_X(BX)\in\sualg(d)$
satisfies $CX=0$.
The preceding orthogonality therefore gives
\[
    0=\langle A_X(H),B-A_X(BX)\rangle_F \enspace.
\]
Thus, we have
\[
	\langle A_X(H),A_X(BX)\rangle_F
    =\langle A_X(H),B\rangle_F\enspace.
\]
Expanding the squared norm, we obtain
\begin{align}
    \norm{A_X(H+BX)}_F^2
    &=\norm{A_X(H)+A_X(BX)}_F^2\nonumber\\
    &=\norm{A_X(H)}_F^2
        +2\langle A_X(H),A_X(BX)\rangle_F
        +\norm{A_X(BX)}_F^2\nonumber\\
    &=\norm{A_X(H)}_F^2+2\langle A_X(H),B\rangle_F
        +\norm{A_X(BX)}_F^2\enspace.
    \label{eq:tv_lift_exact_change}
\end{align}
For each $e=(i,j)\in M$, let $X_e$ consist of rows $i,j$ of $X$.
Since $\norm{X_e}_F^2\leq2\Delta$,
\begin{equation}
    \norm{BX}_F^2
    =\sum_{e\in M}\norm{B_eX_e}_F^2
    \leq\sum_{e\in M}\norm{B_e}_F^2\norm{X_e}_F^2
    \leq2\Delta\norm{B}_F^2\enspace.
    \label{eq:tv_lift_matching_action_bound}
\end{equation}
Also, $\tr{B}=0$ and
$\norm{XX^\dagger-(k/d)I_d}_F^2=k(d-k)/d$.
Thus, by the Cauchy--Schwarz inequality,
\begin{equation}
    \frac{|\tr{X^\dagger BX}|^2}{d-k}
    =\frac{|\tr{B(XX^\dagger-(k/d)I_d)}|^2}{d-k}
    \leq\frac kd\norm{B}_F^2\enspace.
    \label{eq:tv_lift_trace_bound}
\end{equation}
Applying \cref{eq:tv_lift_norm} to $BX$ and using
\cref{eq:tv_lift_matching_action_bound,eq:tv_lift_trace_bound}, we have
\begin{align*}
    \norm{A_X(BX)}_F^2
    &\leq2\norm{BX}_F^2+\frac{|\tr{X^\dagger BX}|^2}{d-k} \leq\left(4\Delta+\frac kd\right)\norm{B}_F^2
    \leq5\Delta\norm{B}_F^2\enspace,
\end{align*}
where the last inequality uses $k=\norm{X}_F^2\leq d\Delta$.
Substituting this bound into \cref{eq:tv_lift_exact_change}
proves \cref{eq:tv_lift_change}.
\end{proof}

\endgroup

\end{document}